\newtheorem{theorem}{Theorem}[section]
\newtheorem*{theorem*}{Theorem}
\newtheorem{lemma}[theorem]{Lemma}
\newtheorem{corollary}[theorem]{Corollary}
\newtheorem{observation}[theorem]{Observation}
\newtheorem{op}[theorem]{Open Problem}
\theoremstyle{definition}
\newtheorem{definition}[theorem]{Definition}
\newtheorem{convention}[theorem]{Convention}
\theoremstyle{remark}
\newtheoremstyle{repeatdefinition}{\topsep}{\topsep}{}{}{\bfseries}{.}{ }{\thmname{#1}\thmnote{ \bfseries #3}}
\theoremstyle{repeatdefinition}
\DeclareMathOperator{\poly}{poly}
\DeclareMathOperator{\REW}{\mathcal R_W}
\DeclareMathOperator{\REB}{\mathcal R_B}
\newcommand{\s}{\mspace{1mu}}
\newcommand{\mybox}[1]{\mspace{2mu}{\setlength{\fboxsep}{1.5pt}\color{lightgray}\boxed{\color{black}\scriptstyle #1}}\mspace{2mu}}
\newcommand{\M}{\mathsf{M}}
\renewcommand{\P}{\mathsf{P}}
\renewcommand{\O}{\mathsf{O}}
\newcommand{\A}{\mathsf{A}}
\newcommand{\B}{\mathsf{B}}
\newcommand{\X}{\mathsf{X}}
\newcommand{\Y}{\mathsf{Y}}
\newcommand{\Z}{\mathsf{Z}}
\newcommand{\bM}{\mybox{\M}}
\newcommand{\bX}{\mybox{\X}}
\newcommand{\bMX}{\mybox{\M\X}}
\newcommand{\bOX}{\mybox{\O\X}}
\newcommand{\bPOX}{\mybox{\P\O\X}}
\newcommand{\bMPOX}{\mybox{\M\P\O\X}}
\newcommand{\bMZPOX}{\mybox{\M\Z\P\O\X}}
\newcommand{\bXYZ}{\mybox{\X\Y\Z}}
\newcommand{\bA}{\mybox{\A}}
\newcommand{\bB}{\mybox{\B}}
\newcommand{\bAB}{\mybox{\A\B}}
\newcommand{\mm}{MM}
\newcommand{\bmm}{BMM}
\newenvironment{myabstract}
{\list{}{\listparindent 1.5em%
		\itemindent    \listparindent
		\leftmargin    1cm
		\rightmargin   1cm
		\parsep        0pt}%
	\item\relax}
{\endlist}
\newenvironment{mycover}
{\list{}{\listparindent 0pt
		\itemindent    \listparindent
		\leftmargin    1cm
		\rightmargin   1cm
		\parsep        0pt}%
	\raggedright
	\item\relax}
{\endlist}
\newcommand{\myemail}[1]{\,$\cdot$\, {\small #1}}
\newcommand{\myaff}[1]{\,$\cdot$\, {\small #1}\par\smallskip}
\begin{document}

	\begin{mycover}
		{\huge\bfseries\boldmath Truly Tight-in-$\Delta$ Bounds for Bipartite Maximal Matching and Variants \par}
		\bigskip
		\bigskip
		\bigskip
		
		\textbf{Sebastian Brandt}
		\myemail{brandts@ethz.ch}
		\myaff{ETH Zurich}
		
		\textbf{Dennis Olivetti\footnote{Part of this work was done while this author was supported by Aalto University, and the Academy of Finland, Grant 285721.}}
		\myemail{dennis.olivetti@cs.uni-freiburg.de}
		\myaff{University of Freiburg}
		
	\end{mycover}
	\bigskip

\begin{myabstract}
	In a recent breakthrough result, Balliu et al.\ [FOCS'19] proved a deterministic $\Omega(\min(\Delta,\log n /\,\log \log n))$-round and a randomized $\Omega(\min(\Delta,\log \log n/\,\log \log \log n))$-round lower bound for the complexity of the bipartite maximal matching problem on $n$-node graphs in the LOCAL model of distributed computing.
	Both lower bounds are asymptotically tight as a function of the maximum degree $\Delta$.

	We provide \emph{truly tight} bounds in $\Delta$ for the complexity of bipartite maximal matching and many natural variants, up to and including the additive constant.
	As a by-product, our results yield a considerably simplified version of the proof by Balliu et al.
	
	We show that our results can be obtained via \emph{bounded automatic round elimination}, a version of the recent automatic round elimination technique by Brandt [PODC'19] that is particularly suited for automatization from a \emph{practical} perspective.
	In this context, our work can be seen as another step towards the automatization of lower bounds in the LOCAL model.
\end{myabstract}

\thispagestyle{empty}
\setcounter{page}{0}
\newpage

\section{Introduction}
The \emph{maximal matching (MM)} problem has been studied extensively in the field of distributed graph algorithms.
In the LOCAL model of distributed computing \cite{Linial1992,Peleg2000}, the recent breakthrough by \citet{Balliu2019} provided lower bounds for the complexity of \mm{} that are asymptotically tight in the maximum degree $\Delta$.
More precisely, for $n$-node graphs, the authors prove lower bounds of $\Omega(\min(\Delta,\frac{\log n}{\log \log n}))$ rounds for deterministic algorithms and $\Omega(\min(\Delta,\frac{\log \log n}{\log \log \log n}))$ rounds for randomized algorithms, while an upper bound of $O(\Delta + \log^* n)$ is known since almost two decades due to a result by \citet{panconesi01simple}.

In other words, it is possible to solve \mm{} in linear-in-$\Delta$ time by paying a small dependency on $n$, and we cannot do better as a function of $\Delta$, unless we pay a high dependency on $n$. The lower bound results have been obtained by using the so-called \emph{round elimination} technique to show that on infinite regular $2$-colored trees the \mm{} problem requires $\Omega(\Delta)$ rounds. This way of proving the lower bounds guarantees that they hold even for the \emph{bipartite maximal matching (BMM)} problem.

\paragraph{Our contributions}
In this work, we prove \emph{truly tight} bounds for the complexity of \bmm{}, i.e., we prove that \emph{exactly} $2 \Delta - 1$ rounds are required.
Moreover, we also prove tight bounds for natural variants of \mm{} by showing that our lower bound technique is robust to changes in the description of the considered problem, and providing optimal algorithms to obtain tight upper bounds.
As a by-product, we obtain a much simplified version of the proof for the celebrated lower bounds presented in \  \cite{Balliu2019}, both in terms of the technical issues and the intuition behind the proof.
Finally, we consider our work as an important step towards the \emph{automatization} of lower bounds: we introduce the notion of \emph{bounded automatic round elimination}, a special variant of the round elimination technique amenable to automatization, and show that our bounds can be obtained in a (semi-)automatable fashion via bounded automatic round elimination.
This also provides another step in better understanding the round elimination technique---a tool that is responsible for most of the lower bounds in the LOCAL model \cite{Brandt2019,Balliu2019hardness,Linial1992, Naor1991,Balliu2019,Brandt2016,chang16exponential,binary,chang18complexity}, but still poorly understood.

\paragraph{Maximal matching and its variants}
In this work, we will consider the following natural problem family (in the bipartite setting).
\begin{definition}($x$-maximal $y$-matching)
	Given a graph $G=(V,E)$, a set $M \subseteq E$ is an $x$-maximal $y$-matching if the following conditions hold:
	\begin{itemize}
		\item Every node is incident to at most $y$ edges of $M$;
		\item If a node $v$ is not incident to any edge of $M$, then at least $\min\{ \deg(v),\Delta - x \}$ neighbors of $v$ are incident to at least one edge of $M$.
	\end{itemize}
\end{definition}
This family of problems contains \mm{} (by setting $x=0$ and $y=1$), and many interesting variants of it, obtained by relaxing the covering or the packing constraint of \mm{}. While the results presented in \cite{Balliu2019} showed asymptotically tight bounds for \mm{} as a function of $\Delta$, no tight bounds are provided for relaxed variants of matchings.

How much easier does \mm{} become if we allow each node to be matched with a constant number of neighbors, instead of just one? We know, for example, that in the non-bipartite setting $\Omega(\log^* n)$ rounds are required to solve \mm{} \cite{Linial1992}, while as soon as we allow nodes to have two incident edges in the matching, the problem becomes solvable without any dependency on $n$ \cite{suomelabook}, but it is not clear how this affects the dependency on $\Delta$. How much easier does the problem become if we don't require strict maximality? We will address these kinds of questions in this work and prove \emph{truly tight} bounds for \mm{} and the whole family of $x$-maximal $y$-matching problems, in the bipartite setting. Note that tightness results of this kind require that the considered problems can be solved independently of $n$, which is the reason for our restriction to the bipartite setting.

\paragraph{Round elimination}
In order to prove our lower bounds, we will make use of the round elimination technique, which works as follows. 
Start from a locally checkable problem $\Pi$, i.e., a problem where all nodes must output labels from some finite set, subject to some local constraints. Try to define a problem $\Pi'$ that is at least one round easier than $\Pi$ (i.e., can be solved strictly faster in the LOCAL model).  Then, repeat the process.
If we prove that the result that we get after $T$ steps of this process cannot be solved in $0$ rounds of communication, then we directly obtain a lower bound of $T+1$ rounds for the original problem $\Pi$.

This is \emph{not} a new technique: it has been used by Linial roughly 30 years ago to prove a lower bound for $3$-coloring a cycle \cite{Linial1992}.
However, after Linial's result, the technique was apparently forgotten until it re-emerged in 2016, when it was used to prove lower bounds for sinkless orientation, $\Delta$-coloring, and the Lov\'asz Local Lemma \cite{Brandt2016,chang16exponential,chang18complexity}.
Since then, round elimination has been used to prove lower bounds for a number of different problems \cite{Brandt2019,Balliu2019hardness,Balliu2019,binary}.
More importantly, though, this technique can be automated in a certain sense.

\paragraph{Automatic round elimination}
In a recent breakthrough result, \citet{Brandt2019} showed that, under certain conditions, given any locally checkable problem $\Pi$, we can \emph{automatically} define a problem $\Pi'$ that is exactly one round easier than $\Pi$.
Conceptually, this tremendously simplifies the task of applying round elimination to prove lower bounds in the LOCAL model: instead of having to find each subsequent problem that is one round easier than the previous one by hand, we can do the same by just mechanically applying certain operations.
The main issue with automatic round elimination, and the reason why the result by Brandt does not immediately provide new lower bounds for all kinds of problems, is the growing \emph{description complexity} inherent in each of the round elimination steps:
If we start from a problem $\Pi$ defined via a set of labels $L$ (and certain output constraints), then the problem $\Pi'$ obtained after one step is defined via a subset of the label set $2^L$.
In other words, even if $\Pi$ is some simple problem that can be described in a compact way, the description of the problem $\Pi'$ can be exponentially larger, and it may be difficult for a human being to understand and feasibly deal with the problems obtained after applying only a few of these steps.

The main approach for dealing with this issue is to find good \emph{relaxations}, i.e., to reduce the description complexity of the obtained problem $\Pi'$ without losing too much round complexity: the goal is to transform $\Pi'$ into a problem $\Pi^*$ such that on one hand, $\Pi^*$ has a much simpler description than $\Pi'$, while on the other hand, $\Pi^*$ is provably at least as easy as $\Pi'$, but not much easier.
If we can find such a relaxation after each round elimination step (and use the relaxed problem as the starting point for the next step), then the number of steps until we obtain a $0$-round solvable problem still constitutes a lower bound for the problem.

\paragraph{Bounded automatic round elimination}
One natural way to ensure that problems do not grow beyond some description complexity threshold is to fix a constant $c$, and after each step of automatic round elimination, relax the problem in a way that ensures that the obtained problem uses at most $c$ labels in its description.

We propose the study of round elimination lower bounds that only require a constant number of labels as a major research program.
Understanding better for which problems we can obtain lower bounds using this technique, and how good the lower bounds achieved in such a way can be, would not only help us to get a better handle on general automatic round elimination, but also has another advantage: the bound on the number of labels ensures (in some sense) that the lower bound can be found \emph{automatically} by a computer.
The basic idea is that while the automatic round elimination technique can be used in theory to define a problem that is exactly $k$ rounds easier than the original problem, for any $k$, this may be really hard to do in practice, since in each step the problem that we obtain can have an exponentially larger description than the old one.
However, if we bound the number of labels to some (reasonably small) constant $c$, a computer can actually try all possible relaxations that give problems with at most $c$ labels.
If at least one obtained relaxation results in a sufficiently hard problem, by repeating this process we can obtain lower bounds automatically also in practice.
We discuss this automatization in more detail in Section \ref{sec:backstage}.

The bottom line is that understanding for which problems a round elimination proof with a restricted number of labels works would allow us to decide which problems to attack with the help of computers and for which problems this would simply be a waste of resources. 
Moreover, a more fine-grained understanding of which constant $c$ (depending on the chosen problem) is required to obtain the largest possible lower bound, or more generally, how the obtained lower bound depends on the chosen constant $c$, would help to direct the use of resources in the most efficient way.

A very interesting example case is provided by the recent $\Omega(\Delta)$-round\footnote{Note that bounds obtained by round elimination are usually bounds in $\Delta$. In order to transform these bounds into bounds in $n$, we simply have to consider graphs with suitably chosen $\Delta$ (as a function of $n$).} lower bound for \bmm{} by \citet{Balliu2019}.
Here, the authors use automatic round elimination with label number restricted by a constant as part of their proof, but this part only yields a lower bound of $\Omega(\sqrt{\Delta})$ rounds, that is subsequently lifted to a bound of $\Omega(\Delta)$ rounds by applying another technique on top of it.
One important ingredient in our new lower bounds is to increase the chosen constant from $4$ to $5$, which results in a direct $\Omega(\Delta)$-round lower bound.
Besides proving that bounded automatic round elimination can yield the full $\Omega(\Delta)$-round lower bound, our result highlights how sensitive the achieved lower bound can be to the exact number of used labels.
While the change from $4$ to $5$ labels increases the difficulty in finding good relaxations after each round elimination step due to the substantially increased number of possible relaxations, the behavior of the obtained problem sequence is actually much easier to understand than the one given in \cite{Balliu2019}.
Together with the fact that we can omit the step in \cite{Balliu2019} that lifts their $\Omega(\sqrt{\Delta})$-round lower bound to $\Omega(\Delta)$ rounds, we obtain a considerably simplified proof for both the deterministic $\Omega(\min(\Delta,\frac{\log n}{\log \log n}))$-round and the randomized $\Omega(\min(\Delta,\frac{\log \log n}{\log \log \log n}))$-round lower bound.

\paragraph{Upper bounds}
Automatic round elimination is not only a tool to prove lower bounds; it can also be used to prove \emph{upper} bounds.
While, for technical reasons, the obtained upper bounds only hold on high-girth graphs, they can still give valuable insight into how a problem could possibly be solved in general.
Furthermore, studying \emph{bounded} automatic round elimination is also very interesting from an upper bound perspective, due to the fact that the bounded number of labels ensures that the resulting algorithm is \emph{bandwidth efficient}, i.e., we do not only obtain an upper bound automatically, but the bound directly applies to the CONGEST model as well!
While we will not prove it, we remark that the upper bounds that we provide in this work can be obtained using bounded automatic round elimination.

\paragraph{Our results}
We prove tight bounds for $x$-maximal $y$-matchings. Let \[
T_\Delta(x,y) =
\begin{cases}
2 \lceil (\Delta - x)/y\rceil, ~ &\text{ if }  \lceil \Delta/y\rceil > \lceil ( \Delta - x)/y\rceil\\
2 \lceil (\Delta - x)/y\rceil - 1, ~ &\text{ if } \lceil \Delta/y\rceil = \lceil ( \Delta - x)/y\rceil \enspace.
\end{cases}
\]
We will start by proving the following theorem.
\begin{theorem}
	The $x$-maximal $y$-matching problem requires exactly $T_\Delta(x,y)$ rounds for deterministic algorithms in the port numbering model, even on $2$-colored $\Delta$-regular balanced trees.
\end{theorem}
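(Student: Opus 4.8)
The plan is to prove matching upper and lower bounds of $T_\Delta(x,y)$ rounds. For the \emph{upper bound}, I would give an explicit port-numbering algorithm that exploits the $2$-coloring to orient all interactions from white to black. The algorithm proceeds in proposal/acceptance cycles: in a proposal round every still-unsaturated white node offers a match along its lowest available port to a not-yet-rejecting black neighbor, and in the following acceptance round every black node confirms up to $y$ of the offers it received, preferring lower ports and rejecting the rest. Writing $k = \lceil (\Delta-x)/y\rceil$, the core claim is that after $k$ such cycles every node is either incident to $y$ matching edges or has at most $x$ unmatched neighbors, so that both the packing and the covering constraints hold. A careful accounting of the final cycle shows that the closing acceptance round is superfluous precisely when $\lceil \Delta/y\rceil = \lceil(\Delta-x)/y\rceil$ (the regime in which covering all but $x$ neighbors is, up to the ceiling, no easier than covering all of them), which accounts for the $-1$ branch of $T_\Delta(x,y)$; otherwise the full $2k$ rounds are used.

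For the \emph{lower bound}, which is the substantial part, I would use round elimination on $2$-colored $\Delta$-regular trees. A problem is specified by a pair of constraints, one listing the label multisets allowed around a white node and one listing those allowed around a black node, and a single round-elimination step produces a new such pair that is \emph{exactly} one round easier, with the roles of the two sides exchanged. Starting from $\Pi_0$, the $x$-maximal $y$-matching problem, I would compute the sequence $\Pi_0, \Pi_1, \Pi_2, \dots$ obtained by repeated application of this operator. Since an exact step can blow the label set up exponentially, after each step I would \emph{relax} the resulting problem to one described with at most five labels that is provably no harder than the exact outcome. The whole point of using five labels rather than four is that this relaxed sequence then settles into a clean, self-similar pattern, and the lower bound is obtained by showing that the sequence stays $0$-round unsolvable for $T_\Delta(x,y)-1$ steps; here $0$-round solvability is read off from the existence of a constant, communication-free output, i.e.\ a single configuration that is simultaneously compatible on both sides.

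The two cases in the definition of $T_\Delta(x,y)$ reappear on the lower-bound side: depending on whether the last surviving problem lives on the white or the black side, the base case is reached one step later, and this is exactly where the $-1$ is gained or lost. Matching this parity against the parity of the upper-bound analysis at the boundary between the two regimes is a bookkeeping point one has to get right.

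The step I expect to be the main obstacle is establishing the relaxations. For each problem in the sequence one must exhibit a five-label relaxation that is simultaneously valid, i.e.\ provably at least as easy as the exact round-elimination output, and tight, i.e.\ still \emph{exactly} one round easier than its predecessor rather than collapsing by more than one round. A relaxation that is too coarse yields only a weaker bound, while refusing to relax makes the descriptions intractable, so five labels must hit the unique sweet spot. Verifying that one relaxation template works at every step, and that it keeps the per-step round loss at exactly one until step $T_\Delta(x,y)$, is where the technical care concentrates; it is also precisely the part that makes the argument amenable to the \emph{bounded automatic} round elimination advertised in the introduction.
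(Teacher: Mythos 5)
There are two genuine gaps. The first is in your upper bound. The algorithm you describe is the standard \emph{one-sided} proposal scheme: only white nodes ever propose, and black nodes accept or reject. After $k=\lceil(\Delta-x)/y\rceil$ cycles this does give the covering guarantee for unmatched \emph{white} nodes (all of their $\ge \Delta-x$ rejected proposals went to matched black nodes), but it gives no such guarantee for unmatched \emph{black} nodes: a black node that never received a proposal learns nothing about its white neighbors, since an unmatched white node has only proposed to $ky$ of its $\Delta$ neighbors and this black node need not be among them. For $x=0$ (where $k y \ge \Delta$, so unmatched white nodes eventually propose everywhere) the one-sided argument closes, but for general $x>0$ it does not, and the one-sided algorithm only yields a bound roughly a factor of two worse for the general family. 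The fix used in the paper is a \emph{two-sided alternating} scheme: the active role swaps between white and black each round, and an unmatched node that received no proposals sends $y$ proposals of its own over edges not yet used for a proposal; a rejected proposal then certifies a matched neighbor on \emph{either} side. The one saved round in the case $\lceil\Delta/y\rceil=\lceil(\Delta-x)/y\rceil$ also comes from a different mechanism than the one you sketch: in that regime one solves the harder $x=0$ instance, lets black nodes start and white nodes output, and uses the fact that for strict maximality a one-sided guarantee in the last round suffices.

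The second gap is that your lower bound is an outline of the paper's strategy rather than a proof: the entire technical content is the explicit five-label family $\Psi_\Delta(a,b,c)$, the verification that one round-elimination step followed by relaxation maps $\Psi_\Delta(a,b,c)$ to $\Psi_\Delta(a,\min\{c+a,\Delta-a\},b)$ (so the parameters grow \emph{linearly}), the two initial steps connecting the matching problem to $\Psi_\Delta(y,x,0)$, and the $0$-round impossibility for the relevant parameter range. You correctly identify this as the main obstacle but do not supply it. One conceptual correction: for the lower bound you do \emph{not} need each relaxation to be ``tight'' (exactly one round easier); you only need each relaxed problem to be at least as easy as the exact round-elimination output and the final problem in the chain to be $0$-round unsolvable. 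Per-step tightness is never proved directly --- it follows a posteriori from the matching upper bound. Also, $0$-round unsolvability in the port numbering model is not just the absence of a single configuration valid on both sides: the argument must account for the deterministic assignment of labels to ports and construct a black node all of whose incident edges carry the same white port, forcing a monochromatic multiset such as $\P^\Delta$ that violates the black constraint.
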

While an upper bound for the port numbering model directly implies also an upper bound for the LOCAL model, the same is not true for lower bounds. However, we will show that these bounds can be lifted to the LOCAL model, and obtain truly tight bounds in the LOCAL model for all combinations of $x$,$y$, and $\Delta$.
Moreover, we will examine how large $\Delta$ can be, as a function of $n$, such that our lower bounds still hold:
\begin{theorem}
	For any $k \ge 1$, and for large enough $\Delta$ and $n$, any randomized algorithm for bipartite $x$-maximal $y$-matching that fails with probability at most $1/n$ in the LOCAL model requires at least $T_\Delta(x,y)$ rounds, unless $T_\Delta(x,y) \ge \frac{1}{3k} \frac{\log\log n}{\log \log \log n}$, or $T_\Delta(x,y) \le \Delta^{1/k}$.
\end{theorem}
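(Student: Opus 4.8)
The plan is to lift the deterministic port-numbering lower bound of the preceding theorem to randomized LOCAL algorithms by running the \emph{randomized} variant of round elimination on exactly the same sequence of problems, and then tracking how the per-node failure probability amplifies. First I would recall, from the proof of the preceding theorem, the explicit sequence $\Pi_0,\Pi_1,\dots,\Pi_{T-1}$ with $T=T_\Delta(x,y)$, where $\Pi_0$ is bipartite $x$-maximal $y$-matching, each $\Pi_{i+1}$ is the round-eliminated (one-round-easier) version of $\Pi_i$, and $\Pi_{T-1}$ cannot be solved in zero rounds on $2$-colored $\Delta$-regular trees; for the randomized argument I additionally need that $\Pi_{T-1}$ stays $0$-round-unsolvable even when a failure probability below some threshold $\theta$ is allowed, a fact that can be read off from the structure of $\Pi_{T-1}$. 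The engine is the randomized round elimination lemma: if $\Pi_i$ admits a $t$-round randomized algorithm with local failure probability at most $p$ on such trees, then $\Pi_{i+1}$ admits a $(t-1)$-round one with local failure probability at most $f(p)$, where $f$ is an explicit amplification of the form $f(p)=A(\Delta)\cdot p^{1/s}$ with multiplicative loss $A(\Delta)$ at most polynomial in $\Delta$ and, crucially, root degree $s=\Theta(\Delta)$ reflecting the branching of the tree. Since the instances are regular trees, the unique identifiers of the LOCAL model are handled by the probabilistic method in the standard way, and the hard instances are high-girth graphs, which is why the bound is stated on balanced $\Delta$-regular trees.

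The argument is by contradiction. Assuming a randomized LOCAL algorithm solves $\Pi_0$ in $T-1$ rounds with global failure probability at most $1/n$ on $n$-node graphs, a standard embedding/averaging step yields an algorithm with local failure probability at most $1/n$ on a $\Delta$-regular tree of depth $\Theta(T)$ (this is where $n$ must be large enough to host the tree). Applying the randomized round elimination lemma $T-1$ times then produces a zero-round randomized algorithm for $\Pi_{T-1}$ whose local failure probability is at most $f^{(T-1)}(1/n)$. It therefore suffices to show $f^{(T-1)}(1/n)<\theta$ to contradict the $0$-round-hardness of $\Pi_{T-1}$, which yields the claimed lower bound of $T$ rounds.

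It remains to verify $f^{(T-1)}(1/n)<\theta$, and this is exactly where the two exceptional conditions enter. Iterating $f(p)=A(\Delta)\cdot p^{1/s}$ from $p=1/n$ gives $f^{(T-1)}(1/n)\le A(\Delta)^{s/(s-1)}\cdot n^{-s^{-(T-1)}}$, so with $s=\Theta(\Delta)$ the final probability drops below the constant $\theta$ precisely when $\log n\gtrsim\Delta^{\Theta(T)}$, i.e.\ when the starting probability $1/n$ is small enough to survive $T-1$ applications of a $\Theta(\Delta)$-th root while absorbing the accumulated $\poly(\Delta)$ losses. I would then show that the two stated conditions imply $\Delta^{3T}<\log n$: the condition $T>\Delta^{1/k}$ gives $\log\Delta<k\log T$, while $T<\tfrac{1}{3k}\cdot\tfrac{\log\log n}{\log\log\log n}$ gives both $\log T<\log\log\log n$ and $3kT\log\log\log n<\log\log n$, and chaining these yields $3T\log\Delta<3kT\log T<3kT\log\log\log n<\log\log n$, that is, $\Delta^{3T}<\log n$, with the factor $3$ of slack absorbing $A(\Delta)$ and $\theta$. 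The free parameter $k$ is the trade-off knob: a larger $k$ permits $\Delta$ as large as $T^k$ but shrinks the admissible range of $T$ through the constant $\tfrac{1}{3k}$, and conversely.

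The step I expect to be the main obstacle is establishing the randomized round elimination lemma with the correct amplification---in particular, confirming that the root degree genuinely grows like $\Theta(\Delta)$ (this is what produces the $\log\log n/\log\log\log n$ form rather than a mere $\log\log n$) and that the multiplicative loss $A(\Delta)$ is only polynomial---together with pinning down the threshold $\theta$ below which $\Pi_{T-1}$ remains $0$-round-hard under randomization. Once the lemma is in hand, the remaining work is the $k$-dependent chaining above, which I would carry out with explicit constants from the start rather than in asymptotic notation, so that the precise coefficient $\tfrac{1}{3k}$ and exponent $\Delta^{1/k}$ emerge rather than being absorbed; the mild hypotheses that $\Delta$ and $n$ be ``large enough'' are exactly what make the asymptotic inequalities $\log T<\log\log\log n$ and the tree embedding valid.
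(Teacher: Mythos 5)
Your proposal is correct and follows essentially the same route as the paper: the same problem sequence $\Pi_0,\dots,\Pi_{T-1}$, the randomized round elimination lemma of the form $p\mapsto K\Delta\, p^{1/(\Delta+1)}$ taken as a black box, its iteration to bound the $0$-round failure probability, a pigeonhole-style threshold below which $\Pi_{T-1}$ is not $0$-round solvable, and the identical chaining $3T\log\Delta<3kT\log T<3kT\log\log\log n<\log\log n$ to conclude $\Delta^{3T}<\log n$. The only cosmetic difference is that the paper's hardness threshold for $\Pi_{T-1}$ is $1/\Delta^{2\Delta}$ rather than a constant, which your slack factor absorbs anyway.
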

Note that this theorem implies tight bounds for problems such as the $\frac{\Delta}{2}$-maximal $\log \Delta$-matching problem, for graphs where $\Delta$ is not too large as a function of $n$.
While a randomized lower bound directly implies the same deterministic lower bound, we will show that we can get \emph{better} deterministic bounds by relaxing our tightness requirements in $\Delta$ from truly tight to asymptotically tight.
\begin{theorem}
	For any $k \ge 1$, and for large enough $\Delta$ and $n$, any deterministic algorithm for bipartite $x$-maximal $y$-matching in the LOCAL model requires $\Omega(\min(T_\Delta(x,y),\frac{1}{k} \log n / \log \log n)$, unless $T_\Delta(x,y) \le \Delta^{1/k}$.
\end{theorem}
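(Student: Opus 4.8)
The plan is to prove the deterministic bound \emph{not} by invoking the randomized bound of the previous theorem---which, since a deterministic algorithm is a zero-failure randomized one, would only reproduce the weaker $\frac{\log\log n}{\log\log\log n}$ dependency---but by lifting the exact port-numbering bound of the first theorem directly into the deterministic LOCAL model through round elimination. Concretely, the first theorem supplies a round-elimination sequence $\Pi_0, \Pi_1, \dots$, where $\Pi_0$ is the $x$-maximal $y$-matching problem, each $\Pi_{i+1}$ is one round easier than $\Pi_i$, the number of labels stays bounded (the constant $5$) along the sequence, and it takes exactly $T_\Delta(x,y)$ steps before the problem becomes solvable in zero rounds in the port-numbering model. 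I would assume, toward a contradiction, a deterministic $T$-round LOCAL algorithm for $\Pi_0$ on $2$-colored $\Delta$-regular balanced trees with $T$ strictly below the claimed bound and $T < T_\Delta(x,y)$.

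The core tool I would set up is a deterministic round-elimination lifting lemma in the presence of identifiers: a $t$-round deterministic LOCAL algorithm for $\Pi_i$ on such trees with identifiers from a range $[N]$ yields a $(t-1)$-round deterministic algorithm for $\Pi_{i+1}$ on the same family with identifiers from a smaller range $[N']$. The shrinkage $N \to N'$ is where the entire $n$-dependency is created: eliminating one round forces a node to commit to its output while being oblivious to the identifiers at the outermost layer of its view, and making the resulting behavior consistent requires a Ramsey-type restriction of the identifier space whose cost per step is governed by $\Delta$ and the (constant) label count. Tracking this budget, one can afford $\Theta(\log_\Delta n)$ lifting steps before the identifier range is exhausted; equivalently, the hard depth-$\Theta(T)$ tree witnessing the bound must embed into an $n$-node instance, forcing $\Delta^{\Theta(T)} \lesssim n$. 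Under the parameter regime tying $\Delta$ to $n$ through $k$, this budget is exactly $\Theta\!\big(\tfrac{1}{k}\,\tfrac{\log n}{\log\log n}\big)$.

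With the lemma in hand, I would iterate it $T$ times starting from the assumed algorithm, obtaining a zero-round deterministic LOCAL algorithm for $\Pi_T$ on trees whose identifier range is still at least a constant, since $T$ lies below the step budget. A final pigeonhole/Ramsey argument over the surviving identifiers converts this zero-round LOCAL algorithm into a zero-round port-numbering algorithm for $\Pi_T$; because $T < T_\Delta(x,y)$, the problem $\Pi_T$ is not zero-round solvable in the port-numbering model by the first theorem, a contradiction. Hence $T \ge \min\!\big(T_\Delta(x,y),\ \Theta(\tfrac{1}{k}\tfrac{\log n}{\log\log n})\big)$. The excluded regime $T_\Delta(x,y)\le \Delta^{1/k}$ is precisely where $T_\Delta(x,y)$ is too small relative to $\Delta$ for the bounded-label relaxations underlying the sequence---and thus the preconditions of the lifting lemma---to be guaranteed.

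The step I expect to be the main obstacle is the quantitative form of the lifting lemma: getting the per-step identifier shrinkage small enough to yield the $\frac{\log n}{\log\log n}$ threshold rather than a weaker $\log^\ast n$- or $\log\log n$-type bound, while simultaneously keeping it compatible with the relaxations that pin the label count to the constant used to build the sequence in the first theorem. Secondary care is needed in the zero-round conversion and in checking that the constants absorbed into $\Omega(\cdot)$ and the $\tfrac{1}{k}$ slack hold uniformly over all admissible $x$, $y$, and $\Delta$ in the stated range.
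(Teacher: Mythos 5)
Your plan hinges on a ``deterministic round-elimination lifting lemma in the presence of identifiers'' whose per-step shrinkage of the ID space is polynomial in $\Delta$ and the label count. This is precisely the step that is not known to exist, and you correctly identify it as the main obstacle---but it is not a technicality to be tuned, it is the reason this route fails. The only known way to eliminate a round of a deterministic ID-based algorithm while keeping the behavior consistent is a Ramsey-type restriction of the identifier space, and the relevant Ramsey numbers grow as towers; surviving $T$ elimination steps then requires $n$ to be a tower of height $\Theta(T)$, which yields only $\log^* n$-type thresholds, not $\frac{\log n}{\log\log n}$. There is no known polynomial-shrinkage substitute, and the bounded label count does not help here. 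So as written, the argument does not go through.

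The paper takes a different route, and your opening sentence dismisses it for the wrong reason. It \emph{does} derive the deterministic bound from the randomized one (Lemma~\ref{lem:randomized}), but not by the trivial ``deterministic is randomized with failure probability $0$'' reduction at scale $n$. Instead it uses the standard scale-reduction trick (as in \cite[Theorem~25]{Balliu2019}): assume a deterministic algorithm with running time $t(n,\Delta) < \frac{1}{4k}\log n/\log\log n$, set $N=\log n$, compute in $O(\bar t)$ rounds a coloring of $G^{2\bar t+2}$ with fewer than $N$ colors to serve as locally unique identifiers, and run the algorithm while lying that the graph has $N$ nodes. Its running time at the lied scale is $\bar t = t(N,\Delta) \le \frac{1}{4k}\log\log n/\log\log\log n$, which lands exactly in the regime forbidden by the randomized lower bound on the true $n$-node instance---contradiction. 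This is how the exponent improves from $\log\log n$ to $\log n$ for deterministic algorithms: the randomized bound is applied at scale $N=\log n$, not avoided. If you want to salvage your write-up, replace the hypothetical deterministic lifting lemma with this reduction; the port-numbering machinery you describe is already packaged inside Lemma~\ref{lem:randomized} and does not need to be re-run with identifiers.
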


\subsection{Related Work}
The maximal matching problem has been widely studied in the literature of distributed computing. We know since the 80s that \mm{} cannot be solved in constant time. In fact, the $\Omega(\log^* n)$ lower bound by Linial for $3$-coloring a cycle \cite{Linial1992} implies a lower bound also for \mm{}. Naor proved that this lower bound holds also for randomized algorithms \cite{Naor1991} .

Concerning upper bounds, we also know since the 80s that \mm{} can be solved in $O(\log n)$ rounds using a randomized algorithm~\cite{Israeli1986}, and there has been a lot of effort in trying to obtain also good deterministic complexities. The first polylogarithmic deterministic algorithm was provided by \citet{Hanckowiak1998}, who showed that \mm{} can be solved in $O(\log^7 n)$ rounds. The same authors later improved the upper bound to $O(\log^4 n)$ \cite{Hanckowiak2001}. More recently, Fischer substantially improved this bound to $O(\log^2 \Delta \log n)$ \cite{fischer17improved}. If the degree of the input graph is small, there is a very efficient algorithm by \citet{panconesi01simple}, that runs in $O(\Delta + \log^* n)$ rounds and thus matches, as a function of $n$, Linial's lower bound. In the meantime, also the upper bound on the randomized complexity of \mm{} has been improved. \citet{Barenboim2012,Barenboim2016} and \citet{fischer17improved} showed that \mm{} can be solved in $\poly \log \log n$ time, by paying only an additive $\log \Delta$ dependency.

In 2004, \citet{Kuhn2004,Kuhn2006,kuhn16local} substantially improved Linial's lower bound: they showed that \mm{} cannot be solved in $o(\sqrt{\log n / \log \log n} + \log \Delta/\log \log \Delta)$, and this bound holds also for randomized algorithms. After that, different works made progress in obtaining a better understanding of the complexity of \mm{} as a function of $\Delta$. First, \citet{Hirvonen2012} proved that if we are only given an edge coloring (but no IDs and no randomness) then indeed this problem requires $\Omega(\Delta)$ rounds. Then, a similar technique has been used to show an $\Omega(\Delta)$ lower bound for fractional matchings, under the assumption that the running time of the algorithm does not depend on $n$ at all \cite{Goos2017}. Finally, a lower bound for the LOCAL model has been proved by \citet{Balliu2019}, who showed that \mm{} cannot be solved in $o(\Delta + \log \log n / \log \log \log n)$ by randomized algorithms and $o(\Delta + \log n / \log \log n)$ by deterministic ones.

\section{Preliminaries}

\subsection{Model of Computing}
In this work, we will mainly consider two different models of distributing computing, namely, the \emph{port numbering} model and the LOCAL model.

In the port numbering model, we are given a graph $G=(V,E)$, where nodes represent computing entities, and edges represent communication links. The computation is synchronous: all nodes start in the same round, and in each round each node can send a different message to each neighbor, receive messages sent by the neighbors, and perform some local computation. In this model, nodes are \emph{anonymous}, that is, they have no IDs, but they can distinguish neighbors using \emph{port numbers}. Each node $v$ has $\deg(v)$ ports, and each edge is connected to a specific port. If an edge $\{u,v\}\in E$ is connected to port $i$ of node $u$ and port $j$ of node $v$, then node $u$ can decide to send a message to its port number $i$, and the message will be received by node $v$ on port $j$. That is, nodes can refer to ports.

Depending on the context, we may assume that nodes initially know the size of the graph $n=|V|$, and the maximum degree of the graph $\Delta$. Also, in a \emph{randomized} algorithm in the port numbering model, each node is provided with an unbounded number of private random bits. We say that a randomized algorithm succeeds with high probability if the output of all nodes is globally correct with probability at least $1-1/n$. The running time of an algorithm is the number of communication rounds required before all nodes output their part of the solution.

The LOCAL model is defined similarly. The only difference is that in this case nodes are not anonymous, that is, they are provided with unique IDs in $\{1,\ldots,n^c\}$ for some constant $c\ge 1$.

\subsection{Automatic Round Elimination}

In order to obtain our lower bounds we will make use of the automatic round elimination framework developed in \cite{Brandt2019}, in the bipartite formulation used first in \cite{Balliu2019}.
For any problem we will consider, the input will be a $\Delta$-regular bipartite graph.
As we will prove lower bounds, this does not restrict the generality of our results.
We refer to the nodes on one side of the bipartition as \emph{white} nodes and the nodes on the other side as \emph{black} nodes.
Each node is aware of the bipartition, i.e., it knows whether it is a white or a black node.

\paragraph{Problems}
In the bipartite round elimination framework (which, for simplicity, we will describe only for $\Delta$-regular graphs), a problem $\Pi$ is formally given by an alphabet $\Sigma$, a \emph{white constraint} $W$, and a \emph{black constraint} $B$.
Both $W$ and $B$ are a collection of words of length $\Delta$ over the alphabet $\Sigma$, where technically each word is to be considered as a multiset, i.e., the order of the $\Delta$ elements in a word does not matter and the same element can appear repeatedly in a word.
A correct output for $\Pi$ is a labeling of the edges of the input graph with one label from $\Sigma$ per edge such that
\begin{enumerate}
	\item the white constraint is \emph{satisfied}, i.e., for each white node $u$, the output labels assigned to the $\Delta$ edges incident to $u$ form a word from $W$, and
	\item the black constraint is \emph{satisfied}, i.e, for each black node $v$, the output labels assigned to the $\Delta$ edges incident to $v$ form a word from $B$.
\end{enumerate}
Each word in $W$ is called a \emph{white configuration}, and each word in $B$ a \emph{black configuration}.
To succinctly represent multiple configurations in one expression, we will make use of regular expressions, e.g., we will write $[M][PO]^{\Delta-1}$ to describe the collection of all configurations consisting of exactly one $M$, and a $P$ or an $O$ at every other position.
For simplicity, we will also call such a regular expression a (white or black) \emph{configuration}.
Where required, we will clarify which kind of configuration is considered by using the terms \emph{single configuration} and \emph{condensed configuration} (the latter indicating a regular expression).
Moreover, we will use the term \emph{disjunction} to refer to parts of a regular expression describing that each choice of a subset of labels is valid, such as $[PO]$. Notice that the encoding of a problem can look quite different if its configurations are condensed in different ways. This is just a syntactic difference: if two different sets of condensed configurations generate the same set of single configurations, then the two sets encode the same constraint.

We remark that even though we only consider $\Delta$-regular graphs, it is possible to also encode locally checkable problems that are not only defined on $\Delta$-regular graphs in a similar way. We also note that if we restrict attention to trees or high-girth graphs, any locally checkable problem can be described in this form. In fact, by increasing the number of labels, it is possible to encode any output constraint that depends on the constant-radius neighborhood of each node. In the remainder of the paper, we will use the term ``locally checkable problem" (or simply ``problem") to refer to problems of the above kind.

\paragraph{Example}
Let us look at an example that shows how to encode \bmm{}. In \bmm{} we basically have to ensure two constraints: a node cannot have two incident edges in the matching, and if a node does not have any incident edge in the matching, then all its neighbors must have at least one.

We start by defining the white constraints as follows:
\begin{align*}
&\M \s \O^{\Delta-1} \\
&\P^{\Delta} \\
\end{align*}
In other words, a white node either outputs $\M$ on an edge (the matched edge) and $\O$ on all the others, or it outputs $\P$ (pointer) on all edges. We now need to ensure that the pointers reach only \emph{matched} black nodes. Thus, we define black constraints as follows:
\begin{align*}
&\M \s [\O\P]^{\Delta-1} \\
&\O^{\Delta} \\
\end{align*}
That is, a black node accepts a pointer only if one of its edges is labeled as $\M$. Clearly, a solution satisfying these constraints is a matching (two $\M$s are never allowed). Maximality is guaranteed by the following observations:
\begin{itemize}
	\item In order for white nodes to not be matched, they need to write $\P$ on all edges. These $\P$s must reach \emph{matched} black nodes, since on the black side $\P$s are accepted only if an $\M$ is present.
	\item In order for black nodes to not be matched, they need to have all edges marked with the label $\O$, and $\O$s are written by white nodes only if they are matched. 
\end{itemize}
Technically, what we defined is not exactly \bmm{}: if we are given a solution for \bmm{}, where the edges are either marked to be part of the matching or not, we do not have edges marked with pointers. Nevertheless, white nodes can produce these pointers in $0$ rounds. That is, the problem that we defined is \emph{equivalent} to \bmm{}.

\paragraph{Algorithms}
We will distinguish between \emph{white algorithms} and \emph{black algorithms}.
In a white algorithm, each white node decides on the output labels for all incident edges whereas black nodes take part in the usual communication but have no say in deciding on the output; in a black algorithm, the roles are reversed.
The \emph{white complexity} (resp.\ \emph{black complexity}) of a given problem is the usual time complexity of the problem where we restrict attention to white algorithms (resp.\ black algorithms).
White and black complexities cannot differ by more than one round as any white node can inform any black neighbor of the intended output label for the connecting edge, and vice versa.

Notice that if we consider, e.g., a white algorithm, black nodes do not actually need to know the output given by white nodes. If we consider the more standard assumption where \emph{both} nodes that are incident to the same edge know the output for that edge, we see that such an algorithm requires at most one round more than what is required by either a white or a black algorithm. However, in the bipartite round elimination framework, such algorithms require an extra step of argumentation which we omit for simplicity, by considering only white and black algorithms. We emphasize that the \emph{tightness} of our \mm{} bound does not depend on this choice, just the bound itself: in the setting where both endpoints of an edge have to know the output for that edge, the tight bound is $2 \Delta$, instead of $2 \Delta - 1$.

\paragraph{Round Elimination Theorems}
The automatic round elimination theorem given in \cite[Theorem 4.3]{Brandt2019}, roughly speaking, states that for any locally checkable problem $\Pi$, there exists another locally checkable problem $\Pi'$ that can be solved exactly one round faster if we restrict attention to high-girth graphs, i.e., graphs where the cycle of smallest length is sufficiently long.
A useful fact observed in \cite{Balliu2019} is that the given proof also extends to the case of hypergraphs, and hence can also be phrased in the context of bipartite graphs, by interpreting nodes as one side of the bipartition and hyperedges as the other side.
In order to satisfy the conditions of \cite[Theorem 4.3]{Brandt2019}, we will restrict our attention to the class of regular, bipartite graphs with a girth of at least $4 \Delta + 2$, for the remainder of the paper.
We will see that our lower bounds hold already for this restricted graph class.
Now, we can formulate \cite[Theorem 4.3]{Brandt2019} in our setting as follows.\footnote{For the reader interested in the technical subtleties of our rephrasing, four remarks are in order: 1) The edge orientations prescribed in \cite[Theorem 4.3]{Brandt2019} are simply given by the port numberings (of the nodes on one side of the bipartition) in our setting. 2) In our setting the nodes do not see the port numbering of adjacent nodes in a $0$-round algorithm, while the nodes in \cite{Brandt2019} do see the edge orientations; however, it is straightforward to check that the proof is not affected by this change. 3) If $\Pi$ can be solved in $0$ rounds, then the proof of \cite[Theorem 4.3]{Brandt2019} ensures that this also holds for $\Pi'$; hence we can replace the condition of a strictly positive complexity of $\Pi$ by a minimum expression. 4) As the nodes on one side of the bipartition in our setting correspond to (hyper)edges in the original setting, the step from $\Pi$ to $\REB(\Pi)$ (resp.\ $\REW(\Pi)$) in our setting corresponds either to the first step from $\Pi$ to the intermediate problem $\Pi_{1/2}$ given in the proof of \cite[Theorem 4.3]{Brandt2019}, or to the step from $\Pi_{1/2}$ to the final problem $\Pi_1$ (depending on whether we consider white or black nodes as (hyper)edges).}

\begin{theorem}[\cite{Brandt2019}, rephrased]\label{thm:speeduppaper}
	Let $\Pi$ be a locally checkable problem with white (resp.\ black) complexity $T(n, \Delta)$. Then there exists a locally checkable problem $\REB(\Pi)$ (resp.\ $\REW(\Pi)$) with black (resp.\ white) complexity $\min \{ 0, T(n, \Delta) - 1 \}$.
\end{theorem}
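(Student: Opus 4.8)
The plan is to treat Theorem~\ref{thm:speeduppaper} as what it is advertised to be---a rephrasing of \cite[Theorem 4.3]{Brandt2019} adapted to the bipartite/hypergraph language---and to prove it by reduction to that result, carefully verifying that the translation is faithful. First I would recall the structure of the original speedup theorem: starting from a problem $\Pi$ with, say, white complexity $T$, its proof constructs the one-round-faster problem by passing through an intermediate problem $\Pi_{1/2}$, so that a single round-elimination step is really the composition of two half-steps. In the bipartite formulation I would identify the white nodes with the ``nodes'' and the black nodes with the ``hyperedges'' of the original setting (or vice versa); under this identification, one application of $\REB$ (resp.\ $\REW$) corresponds to exactly one of these two half-steps, which is the content of remark~4 in the footnote. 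This fixes what $\REB(\Pi)$ and $\REW(\Pi)$ must be and reduces the claim to checking that the round-complexity bookkeeping survives the translation.

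Next I would make the operators concrete and establish the two directions that constitute round elimination. Given $\Pi = (\Sigma, W, B)$, the operator produces a problem over the alphabet of subsets of $\Sigma$: on the side that becomes active the constraint is \emph{existential} (a configuration of subsets is accepted iff some choice of one label per subset is an original configuration of that side), while on the opposite side it is \emph{universal} (accepted iff \emph{every} such choice is original), after which only the maximal configurations are retained. With this in hand, the heart of the argument is: (a) a $T$-round white algorithm for $\Pi$ yields a $(T-1)$-round black algorithm for $\REB(\Pi)$, where each black node outputs on an incident edge the \emph{set} of labels the white algorithm could possibly assign to that edge over all consistent extensions of its $(T-1)$-ball; and (b) conversely, a $(T-1)$-round black algorithm for $\REB(\Pi)$ yields a $T$-round white algorithm for $\Pi$ by spending one extra round and applying a choice function. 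Both directions are exactly the completeness/soundness lemmas underlying \cite[Theorem 4.3]{Brandt2019}, so here I would only transcribe them into the bipartite notation.

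The remaining work is to discharge the other subtleties listed in the footnote, which form the bridge between the two formulations. I would (1) interpret the port numbering on one side of the bipartition as the edge orientation required by \cite[Theorem 4.3]{Brandt2019}; (2) check that the proof is unaffected by the fact that, in our $0$-round regime, a node does not see the port numbers of its neighbors, since this only weakens the adversary's indistinguishability constraints in our favor; and (3) extend the conclusion to the base case $T=0$, so that the stated complexity can be written as a single expression covering all $T \ge 0$ rather than assuming $T \ge 1$. Throughout, the standing restriction to bipartite graphs of girth at least $4\Delta+2$ is what guarantees the neighborhood independence needed by the probabilistic core of the original proof, so I would confirm that a single half-step consumes little enough girth to keep us inside this class.

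The main obstacle is step (2) together with the probabilistic heart of directions (a) and (b): one must verify that the subset construction with maximality pruning transports the complexity \emph{exactly} (not merely up to a constant), and that removing the neighbors'-port-number information genuinely does not break the $0$-round indistinguishability argument on which the original proof rests. By contrast, fixing the operator definitions, matching $\REB$/$\REW$ to the correct half-step, tracking the girth, and handling the $T=0$ boundary are all routine once the node/hyperedge correspondence has been pinned down.
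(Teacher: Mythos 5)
Your proposal matches the paper's treatment exactly: the paper does not reprove this statement but imports \cite[Theorem 4.3]{Brandt2019} and justifies the rephrasing via the same four points you enumerate (port numberings as the required edge orientations, the invisibility of neighbors' ports in a $0$-round algorithm, the $T=0$ base case yielding the single-expression complexity, and the identification of $\REB/\REW$ with the two half-steps through the intermediate problem $\Pi_{1/2}$). Your additional sketch of the two directions of the round-elimination argument is just a transcription of the completeness/soundness lemmas underlying the cited theorem, so the route is essentially identical.
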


The problem $\REB(\Pi)$ is constructed explicitly in \cite{Brandt2019}.
Translated to our setting, we obtain $\REB(\Pi)$ from $\Pi$ as follows.

Let $\Sigma_\Pi$, $W_\Pi$, and $B_\Pi$ denote the alphabet, white constraint, and black constraint of $\Pi$.
The alphabet $\Sigma_{\REB(\Pi)}$ of $\REB(\Pi)$ is simply the set $2^{\Sigma_\Pi}$ of all subsets of $\Sigma_\Pi$.
In order to describe the black constraint of $\REB(\Pi)$, we first construct an intermediate collection $B'$ of configurations over $\Sigma_{\REB(\Pi)}$.
Let $B'$ be the collection of all configurations $F_1, \dots, F_{\Delta}$ with $F_i \in 2^{\Sigma_\Pi}$ such that \emph{for each} choice $f_1 \in F_1, \dots, f_\Delta \in F_\Delta$ of labels from $\Sigma_\Pi$, it holds that $f_1, \dots, f_\Delta$ is a configuration in $B_\Pi$.
We now obtain $B_{\REB(\Pi)}$ from $B'$ by removing all configurations $F_1, \dots, F_{\Delta}$ that are not \emph{maximal}, i.e., for which it is possible to obtain another configuration from $B'$ by adding elements to the $F_i$ (more precisely, at least one element to at least one of the $F_i$).
Since the above removal process ensures that for each non-maximal configuration there always remains a ``super-configuration" in the collection, the order in which the non-maximal configurations are removed does not matter.

Similarly, to obtain $W_{\REB(\Pi)}$, we first construct an intermediate collection $W'$ of configurations over $\Sigma_{\REB(\Pi)}$.
Let $W'$ be the collection of all configurations $F'_1, \dots, F'_{\Delta}$ with $F'_i \in 2^{\Sigma_\Pi}$ such that \emph{there exists} a choice $f'_1 \in F'_1, \dots, f'_\Delta \in F'_\Delta$ of labels from $\Sigma_\Pi$ such that $f'_1, \dots, f'_\Delta$ is a configuration in $W_\Pi$.
We now obtain $W_{\REB(\Pi)}$ from $W'$ by removing each configuration that contains some set $F'_i$ that does not appear in any of the configurations in the black constraint $B_{\REB(\Pi)}$.

So, roughly speaking, apart from some simplifications on top of it, we obtain the new black constraint by ``applying" the universal quantifier to the old black constraint, and the new white constraint by ``applying" the existential quantifier to the old white constraint.

We define $\REW(\Pi)$ analogously to $\REB(\Pi)$, with the only difference that the role of white and black is reversed.
The following observation follows immediately from the definition of our functions $\REB()$ and $\REW()$.
\begin{observation}\label{obs:iterate}
	Let $\Pi$ be some problem, and assume we have already computed the black constraint of $\REB(\Pi)$ (resp.\ the white constraint of $\REW(\Pi)$).
	Then the white constraint of $\REB(\Pi)$ can be obtained by iterating through the white configurations of $\Pi$ and replacing in each configuration each label $L$ by the disjunction of all sets that occur in the black constraint of $\REB(\Pi)$ and contain label $L$.
	Similarly, the black constraint of $\REW(\Pi)$ can be obtained by iterating through the black configurations of $\Pi$ and replacing in each configuration each label $L$ by the disjunction of all sets that occur in the white constraint of $\REW(\Pi)$ and contain label $L$.
\end{observation}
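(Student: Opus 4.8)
The plan is to observe that, since the black constraint $B_{\REB(\Pi)}$ is assumed to be already computed, the content of the observation is purely a statement about which single white configurations the described replacement procedure generates; I would prove it by unfolding the two-stage definition of $W_{\REB(\Pi)}$ (the intermediate collection $W'$ together with the subsequent removal step) and checking that it coincides, multiset by multiset, with the output of the procedure. Concretely, I would show that both sides equal the set of all length-$\Delta$ words $(F_1,\dots,F_\Delta)$ over $\Sigma_{\REB(\Pi)}=2^{\Sigma_\Pi}$ that use only sets $F_i$ occurring in $B_{\REB(\Pi)}$ and that admit, position by position, a label $f_i\in F_i$ with $(f_1,\dots,f_\Delta)$ a configuration in $W_\Pi$.

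First I would describe the single configurations produced by the procedure. Starting from a white configuration $(l_1,\dots,l_\Delta)\in W_\Pi$ and replacing each $l_i$ by the disjunction of all sets in $B_{\REB(\Pi)}$ that contain $l_i$, expanding the resulting condensed configuration yields exactly those words $(F_1,\dots,F_\Delta)$ for which each $F_i$ occurs in $B_{\REB(\Pi)}$ and satisfies $l_i\in F_i$. Ranging over all white configurations of $\Pi$, the generated set is therefore $\{(F_1,\dots,F_\Delta): \text{each } F_i \text{ occurs in } B_{\REB(\Pi)} \text{ and } \exists\, (l_1,\dots,l_\Delta)\in W_\Pi \text{ with } l_i\in F_i \text{ for all } i\}$.

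Next I would recall the definition of $W_{\REB(\Pi)}$: a word $(F_1,\dots,F_\Delta)$ lies in $W'$ iff there exist $f_i\in F_i$ with $(f_1,\dots,f_\Delta)\in W_\Pi$, and the removal step keeps it iff every $F_i$ occurs in $B_{\REB(\Pi)}$. These two conditions are literally the two conjuncts in the characterization above, so the two sets coincide. The one point that needs care is the multiset semantics: both the replacement and the existential choice act independently at each position, so the correspondence is a direct bijection between (word, position-wise selection) pairs and no Hall-type matching argument is required; I would spell this out to be safe.

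The main (minor) obstacle is thus bookkeeping rather than a genuine difficulty: ensuring that ``the disjunction of all sets in $B_{\REB(\Pi)}$ that contain $L$'' simultaneously encodes the existential quantifier defining $W'$ (through the containment $l_i\in F_i$) and the removal condition (through ``occurs in $B_{\REB(\Pi)}$''), and that expanding the condensed configurations reproduces precisely the single configurations of $W_{\REB(\Pi)}$ with no spurious or missing words. The corresponding statement for the black constraint of $\REW(\Pi)$ follows by the identical argument after exchanging the roles of white and black throughout (and the associated existential and universal quantifiers), so I would simply invoke this symmetry.
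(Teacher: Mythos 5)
Your proposal is correct and matches the paper's treatment: the paper states this observation as following immediately from the definitions of $\REB()$ and $\REW()$, and your argument is exactly that definition-unfolding made explicit, correctly identifying that the ``occurs in $B_{\REB(\Pi)}$'' part of the disjunction encodes the removal step and the containment $l_i\in F_i$ encodes the existential quantifier defining $W'$. The position-wise bijection you note (no matching argument needed) is the right way to handle the multiset semantics, so nothing is missing.
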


As the alphabet of a problem obtained by applying the function $\REB()$ (resp.\ $\REW()$) consists of sets of the original labels, we will need to be careful with notation. In order to clearly distinguish the set consisting of some labels $\X, \Y, \Z$ from the disjunction $[\X \Y \Z]$, we will write it as $\bXYZ$.
Moreover, we say that a configuration $F_1, \dots, F_{\Delta}$ consisting of sets of labels \emph{can be extended} to a configuration $\mathcal C$ if $\mathcal C$ can be obtained from  $F_1, \dots, F_{\Delta}$ by adding (potentially everywhere $0$) elements to the sets $F_i$.

\paragraph{Example}
We will now show an example of the application of this technique. We will consider the bipartite sinkless orientation problem \cite{Brandt2016}, where white constraints can be simpliy described as $\B \s [\A\B]^{\Delta-1}$, and black constraint can be described as $\A \s [\A\B]^{\Delta-1}$. Intuitively, the label $\A$ represents an edge oriented from black to white, while $\B$ represents an edge oriented from white to black, and the constraints require both black and white nodes to have at least an outgoing edge: $\B$ for white nodes and $\A$ for black nodes.

We start by applying the universal quantifier on the black constraint. Essentially, we must forbid all words where all the $\Delta$ sets contain a $\B$, otherwise it would be possible to pick the configuration $\B^\Delta$ that is not allowed by the black constraint. Hence, we obtain the following:
	\begin{align*}
&\bA \s [\bA\bB\bAB]^{\Delta-1}
\end{align*}
We can now apply maximality, and since in the disjunction $[\bA\bB\bAB]$ the label $\bAB$ strictly contains all the others, we get the following:
	\begin{align*}
&\bA \s \bAB^{\Delta-1}
\end{align*}
We can now apply the existential quantifier on the white constraint. The white constraint basically requires to be able to pick at least one $\B$, hence, we now need at least one set containing a $\B$:
\begin{align*}
&[\bB\bAB] \s [\bA\bB\bAB]^{\Delta-1}
\end{align*}
By removing labels that are not used in the universal step, we get the following:
\begin{align*}
&\bAB  \s [\bA\bAB]^{\Delta-1}
\end{align*}
We can now rename the sets to obtain something more readable. We can use the following renaming:
\begin{align*}
	\bA & \rightarrow \A \\
	\bAB & \rightarrow \B
\end{align*}
Thus, the black constraint of $\REB(\Pi)$ can be described as $\A \s \B^{\Delta-1}$, while the white constraint can be described as $\B \s [\A\B]^{\Delta-1}$.

\paragraph{Relaxations}
As mentioned in the introduction, a crucial part of successfully applying automatic round elimination is to find good relaxations of problems.
We say that a problem $\Pi'$ is a \emph{white relaxation} of a problem $\Pi$ if there is a $0$-round white algorithm that transforms any arbitrary correct output for $\Pi$ into a correct output for $\Pi'$.
More specifically, in this $0$-round algorithm, each white node sees, for all incident edges $e$, the given output label (for $\Pi$) on $e$, and its task is to relabel each incident edge (possibly with the same label as before) such that the global output is correct.
We define a \emph{black relaxation} analogously.
It immediately follows from the definition that the relaxation of a relaxation of a problem is again a relaxation of the problem.

A simple way to find a relaxation of a problem is to replace a fixed label everywhere by another label.
Any white or black node can solve the new problem in $0$ rounds given a solution to the old problem, by just performing the corresponding relabeling everywhere.
\begin{observation}\label{obs:replace}
	Let $\Pi$ be a problem, and $K, L \in \Sigma_\Pi$ labels.
	Then replacing all occurrences of label $K$ in both $W_\Pi$ and $B_\Pi$ by label $L$ results in a problem $\Pi'$ that is a relaxation of $\Pi$.
\end{observation}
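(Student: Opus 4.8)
The plan is to exhibit an explicit $0$-round white algorithm witnessing that $\Pi'$ is a \emph{white} relaxation of $\Pi$; a symmetric argument with the roles of the two sides exchanged shows it is also a black relaxation, and either one suffices to conclude that $\Pi'$ is a relaxation of $\Pi$. The key structural fact I would use is that, since the underlying graph is bipartite, every edge has exactly one white endpoint, so in a white algorithm the white nodes jointly determine the label of every edge. I would then define the algorithm as follows: each white node $u$, upon seeing the given $\Pi$-labels on all of its incident edges, re-outputs the same label on each such edge, except that any occurrence of $K$ is overwritten by $L$. This uses no communication whatsoever and is therefore genuinely a $0$-round algorithm.

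Next I would verify that the produced labeling is correct for $\Pi'$. Starting from a correct output for $\Pi$, let $\varphi$ denote the global substitution on edge labels that maps $K$ to $L$ and fixes every other label. The algorithm applies $\varphi$ edge by edge, and because $\varphi$ is a fixed, position-independent substitution, it is irrelevant that the rewrite is performed by the white endpoint: the new label of any edge is simply $\varphi$ applied to its old label. Consequently, for each white node $u$ the new incident word is the image under $\varphi$ of its old word $w \in W_\Pi$, which by the very definition of $W_{\Pi'}$ (the collection obtained by replacing $K$ with $L$ in every word of $W_\Pi$) lies in $W_{\Pi'}$. Likewise, for each black node $v$ the new incident word is $\varphi$ applied to its old word $x \in B_\Pi$, and hence lies in $B_{\Pi'}$ by the definition of $B_{\Pi'}$. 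Thus both constraints of $\Pi'$ are satisfied, and the global output is correct for $\Pi'$.

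There is essentially no obstacle here: the entire content is the remark that a uniform relabeling commutes with restriction to a single node's incident edges, so applying it locally at every white node reproduces exactly the globally substituted configuration that $\Pi'$ was defined to accept. The only points worth stating with a little care are that configurations are multisets and that replacing each occurrence of $K$ by $L$ in a multiset is well defined, and that the resulting output uses only labels in $\Sigma_\Pi$ (in fact in $\Sigma_\Pi \setminus \{K\}$), so it is a legal output for $\Pi'$.
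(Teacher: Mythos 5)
Your proof is correct and follows essentially the same route as the paper, which justifies the observation in one sentence by noting that any white or black node can perform the relabeling of $K$ to $L$ locally in $0$ rounds; your write-up simply spells out why the local, position-independent substitution reproduces the globally substituted configurations that define $W_{\Pi'}$ and $B_{\Pi'}$.
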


A more interesting way to find relaxations of problems given via white and black constraints starts by ordering the labels occuring in the constraints according to their \emph{strength}, which roughly corresponds to their usefulness for outputting a correct configuration.
Consider the black constraint $B_\Pi$ of some problem $\Pi$.
For two labels $K, L \in \Sigma_\Pi$, we say that $L$ is \emph{at least as strong} as $K$ (according to $B_\Pi$) if for each $K$ appearing in some configuration in $B_\Pi$, replacing that $K$ by an $L$ results again in some configuration in $B_\Pi$.  
Equivalently, we say that $K$ is \emph{at least as weak} as $L$ (according to $B_\Pi$), and write $K \leq_B L$.
If $K$ is at least as weak as $L$, but $L$ is not at least as weak as $K$, then we say that $L$ is \emph{stronger} than $K$, and $K$ \emph{weaker} than $L$.
If $K$ is at least as weak as $L$, and $L$ is at least as weak as $K$, we say that $K$ and $L$ are equally strong.
We define these concepts and notations analogously for white constraints.
Now, we can use the strengths of labels to find relaxations as follows.

\begin{lemma}\label{lem:merge}
	Let $\Pi$ be a problem, and $K, L \in \Sigma_\Pi$ labels such that $L$ is at least as strong as $K$ according to $B_\Pi$ (resp.\ $W_\Pi$).
	Then replacing an arbitrary number of labels $K$ in $W_\Pi$ (resp.\ $B_\Pi$) by label $L$ results in a problem $\Pi'$ that is a white (resp.\ black) relaxation of $\Pi$.
\end{lemma}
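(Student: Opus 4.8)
The plan is to prove the white case; the black case is entirely symmetric, obtained by exchanging the roles of white and black and of $W_\Pi$ and $B_\Pi$. First I would fix notation: let $\Pi'$ denote the problem whose alphabet and black constraint coincide with those of $\Pi$, and whose white constraint $W_{\Pi'}$ is the result of replacing the chosen occurrences of $K$ by $L$ in $W_\Pi$. Concretely, for each white configuration $c \in W_\Pi$ I record how many of its copies of $K$ are to be turned into $L$, producing a corresponding configuration $c' \in W_{\Pi'}$; since a configuration is a multiset, this replacement count is well defined per configuration. By the definition of white relaxation, the goal is then to exhibit a $0$-round white algorithm that turns any correct output for $\Pi$ into a correct output for $\Pi'$.

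The algorithm I would use is the obvious one: given a correct labeling for $\Pi$, every white node $u$ reads the multiset $c$ of labels on its incident edges (which lies in $W_\Pi$) and recolors exactly the prescribed number of its $K$-edges to $L$ — say, those of smallest port number — leaving all other edges untouched. This is a genuine $0$-round white algorithm, since each edge has a unique white endpoint and hence its relabeled label is determined solely by $u$.

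I would then verify correctness on both sides. The white side is immediate: by construction each white node now outputs the configuration $c' \in W_{\Pi'}$. The black side is the crux. The black constraint of $\Pi'$ equals $B_\Pi$, and around any black node $v$ the recoloring only ever changes some $K$'s into $L$'s. As the original labeling is correct for $\Pi$, the multiset at $v$ starts in $B_\Pi$; I then replace the affected $K$'s one at a time and invoke the hypothesis that $L$ is at least as strong as $K$ according to $B_\Pi$, which guarantees that each single $K \to L$ substitution keeps the configuration inside $B_\Pi$.

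The main obstacle is exactly this last step: the strength hypothesis is phrased as a single-substitution statement, whereas several $K$'s incident to the same black node $v$ may be recolored at once — and, moreover, these $K$'s are written by different white neighbors, so no single node controls them, which is precisely why I argue about the global effect rather than a local action. I would resolve it by induction on the number of substituted labels at $v$: the base case is the original configuration in $B_\Pi$, and each inductive step applies the strength hypothesis to the current configuration, which lies in $B_\Pi$ by the induction hypothesis and still contains the next $K$ to be replaced, concluding that the next configuration is again in $B_\Pi$. The only remaining subtlety worth a remark is the well-definedness of the recoloring rule: two abstract configurations that agree as multisets are the same configuration, so the prescribed substitution count is unambiguous, and the choice of which $K$-edges to recolor is immaterial for both constraints. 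Finally, reversing the roles of white and black, and of $W_\Pi$ and $B_\Pi$, yields the black-relaxation statement verbatim.
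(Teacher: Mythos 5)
Your proof is correct and follows essentially the same route as the paper's: the same $0$-round white algorithm (each white node performs the prescribed number of $K \to L$ substitutions on its incident edges), the same immediate verification of the white constraint, and the same appeal to the strength hypothesis for the black constraint, with $B_{\Pi'} = B_\Pi$. Your explicit induction to extend the single-substitution definition of ``at least as strong'' to several simultaneous substitutions around one black node is a detail the paper leaves implicit, but it is the same argument.
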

\begin{proof}
	For reasons of symmetry it is sufficient to prove the lemma for the case of replacing the labels in $W_\Pi$.
	We obtain a $0$-round algorithm as required in the definition of a relaxation as follows.
	Given a solution to $\Pi$ (on each incident edge), each white node $v$ simply replaces as many incident occurrences of label $K$ by $L$ as have been replaced in the configuration the solution of $\Pi$ around $v$ corresponds to.
	This satisfies the white constraint by definition.
	But also the black constraint is satisfied since replacing occurrences of label $K$ by $L$ preserves that a configuration is contained in $B_\Pi$ due to the fact that $L$ is at least as strong as $K$, and $B_{\Pi'} = B_\Pi$.
\end{proof}

Recall that the labels used to describe a problem obtained by applying the function $\REB()$ (resp.\ $\REW()$) to some problem $\Pi$ are sets of labels of $\Pi$.
The following observation follows immediately from the definition of $\REB()$ (resp.\ $\REW()$).
\begin{observation}\label{obs:strongerset}
	Let $\Pi$ be a problem, and $K, L \in 2^{\Sigma_\Pi}$ labels of the problem $\REB(\Pi)$ (resp.\ $\REW(\Pi)$).
	If $K \subseteq L$, then $L$ is at least as strong as $K$ according to the white constraint of $\REB(\Pi)$ (resp.\ to the black constraint of $\REW(\Pi)$).
\end{observation}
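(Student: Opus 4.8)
The plan is to prove the two symmetric claims simultaneously and then concentrate on one of them. Since $\REW(\Pi)$ is defined from $\REB(\Pi)$ by swapping the roles of black and white, it suffices to treat $\REB(\Pi)$ together with its white constraint $W_{\REB(\Pi)}$; the statement for $\REW(\Pi)$ and its black constraint then follows verbatim. Unfolding the definition of strength, what I have to establish is the following: whenever the set $K$ occurs at some position $i$ of a single white configuration $C = F'_1, \dots, F'_\Delta$ of $\REB(\Pi)$ (so $F'_i = K$), the configuration $C'$ obtained from $C$ by replacing that one occurrence of $K$ with $L$ is again a white configuration of $\REB(\Pi)$.

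The first step is to check that $C'$ lies in the intermediate collection $W'$, and this is exactly where the hypothesis $K \subseteq L$ enters. Because $C \in W_{\REB(\Pi)} \subseteq W'$, the existential definition of $W'$ supplies a choice of representatives $f'_1 \in F'_1, \dots, f'_\Delta \in F'_\Delta$ with $f'_i \in K$ whose word $f'_1, \dots, f'_\Delta$ lies in $W_\Pi$. Since $K \subseteq L$, the representative $f'_i$ also lies in $L$, so the very same word witnesses that $C'$ satisfies the existential condition, i.e.\ $C' \in W'$. Intuitively, this step merely records that the existential-quantifier construction behind the white constraint is monotone under enlarging a set.

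The step I expect to be the main obstacle is verifying that $C'$ survives the subsequent removal step by which $W_{\REB(\Pi)}$ is carved out of $W'$: a configuration is discarded precisely when it contains a set appearing in no configuration of the black constraint $B_{\REB(\Pi)}$. Every set of $C'$ other than the freshly inserted $L$ already occurred in $C$, and since $C$ itself survived the removal step, each of these sets does appear in $B_{\REB(\Pi)}$. It therefore only remains to argue that $L$ too appears in some configuration of $B_{\REB(\Pi)}$. Here one must be careful, and I would flag this explicitly: it is tempting to transport an occurrence of $K$ in $B_{\REB(\Pi)}$ to an occurrence of $L$ by enlarging the set, but this is precisely what fails, because the black constraint is built with a universal quantifier and is \emph{anti}-monotone --- enlarging a set inside a black configuration admits new label choices that may violate $B_\Pi$ (indeed, by maximality, an occurrence of $K$ in a black configuration typically cannot be enlarged to $L$ at all). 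Instead I would invoke the hypothesis that $L$ is a label of the problem $\REB(\Pi)$: any set occurring in a white configuration of $\REB(\Pi)$ must, by the defining removal rule itself, occur in some black configuration of $\REB(\Pi)$, so $L$ indeed appears in $B_{\REB(\Pi)}$. Combining the two steps gives $C' \in W_{\REB(\Pi)}$, which is exactly the asserted strength relation, and the symmetric argument handles $\REW(\Pi)$.
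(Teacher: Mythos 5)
Your proof is correct and follows the route the paper implicitly intends: the paper states this observation without proof (``follows immediately from the definition''), and your argument simply spells out that the existential construction of $W'$ is monotone under enlarging a set, plus the check that the enlarged configuration survives the pruning step. Your explicit flagging of the pruning step --- in particular that $L$ must itself occur in $B_{\REB(\Pi)}$, which is where the hypothesis that $L$ is a label actually used by $\REB(\Pi)$ (rather than an arbitrary element of $2^{\Sigma_\Pi}$) is genuinely needed --- is a careful and correct reading of the one point the paper glosses over.
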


To visualize the strengths of the labels according to the black (resp.\ white) constraint of a problem, we can draw a diagram as follows.
For any two labels $K, L \in \Sigma_\Pi$, we draw an arrow from $K$ to $L$ if
\begin{enumerate}
	\item $K$ and $L$ are equally strong, or
	\item $L$ is stronger than $K$ and there is no label $M$ such that $M$ is stronger than $K$, and $L$ is stronger than $M$.
\end{enumerate}
We call the obtained diagram the \emph{black (resp.\ white) diagram} of $\Pi$.

\paragraph{Example}
Consider the following problem $\Pi$ (for the curious reader, this problem is exactly $2$ rounds easier than \bmm{}). The white constraint is the following:
\begin{align*}
& \M \s \O^{\Delta-1}\\
& \Y \s \P^{\Delta-1}\\
& \X \s \Z \s \O^{\Delta-2}
\end{align*}
The black constraint is the following:
\begin{align*}
& [\M\Y\X] \s [\P\Y\O\X]^{\Delta-1}\\
& [\Z\M\P\Y\O\X] \s [\O\X]^{\Delta-1}
\end{align*}
The black diagram of $\Pi$ is the following:
\begin{center}
	\includegraphics[width=0.2\textwidth]{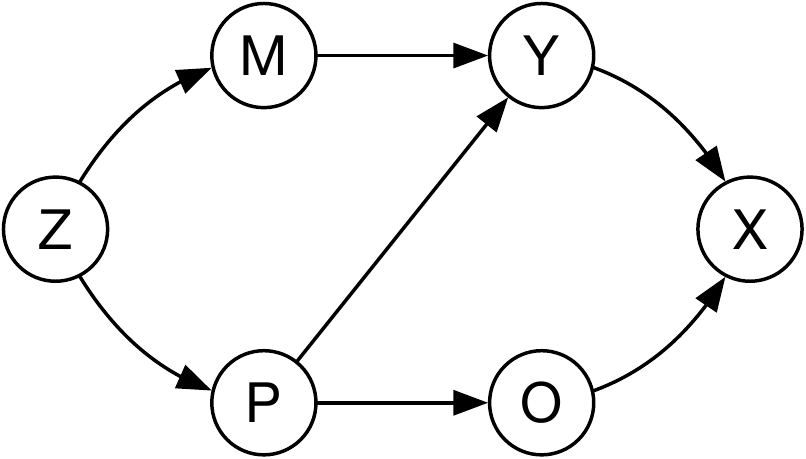}
\end{center}
Notice that there is an arrow from $\Y$ to $\X$, since each time $\Y$ is allowed in a black configuration, the label $\X$ is allowed as well. We can thus simplify the problem $\Pi$ according to Lemma \ref{lem:merge} using labels $\X$ and $\Y$. In particular, we can replace \emph{all} occurences of $\Y$ with $\X$ in the white constraint, and thus get rid of $\Y$ also in the black constraint. We obtain the following new white constraint:
\begin{align*}
& \M \s \O^{\Delta-1}\\
& \X \s \P^{\Delta-1}\\
& \X \s \Z \s \O^{\Delta-2}
\end{align*}
The new black constraint is the following:
\begin{align*}
& [\M\X] \s [\P\O\X]^{\Delta-1}\\
& [\Z\M\P\O\X] \s [\O\X]^{\Delta-1}
\end{align*}

\paragraph{White-Black Dualism}
So far, we have seen a number of definitions and results that have two versions: one for white nodes, algorithms, configurations, etc., and one for the black equivalent.
In fact, the white and black versions are completely dual: they only differ in exchanging the role of white and black.
Moreover, this behavior will hold throughout the entirety of the paper.
Hence, we will use the following convention in the remainder of the paper.
\begin{convention}
	For simplicity, we will formulate any theorem and lemma for which there is a white and a black version only in one of the two versions.
	Moreover, by giving a theorem and lemma for which all ingredients are defined if we exchange the terms ``white" and ``black", we implicitly state that also its dual version holds.
	We will refer to the dual of a stated theorem or lemma by simply referring to the theorem or lemma in its original version; the context in which the statement is referred indicates which version is meant.
\end{convention}

\paragraph{From the port numbering model to the LOCAL model}
The round elimination theorem can be used to get lower bounds for deterministic algorithms in the port numbering model, where it is assumed that nodes have no IDs and no access to random bits. The usual way \cite{Balliu2019, binary} to lift lower bounds obtained with this technique to the LOCAL model is the following. First, we incorporate the analysis of failure probabilities in the round elimination theorem, i.e., we show that if $\Pi$ can be solved in $T$ rounds using a white randomized algorithm \emph{with some failure probability $p$}, then $\REB(\Pi)$ can be solved in $T-1$ rounds using a black randomized algorithm \emph{with some failure probability $p'$}, where $p'$ is not much larger than $p$. Using this version of the theorem, we can repeatedly apply round elimination, until either we get a $0$ round solvable problem or we get a too large failure probability. This allows us to obtain a lower bound for randomized algorithms for the port numbering model, and since by using randomness it is also possible to generate unique IDs with high probability of success, then the same lower bound holds in the LOCAL model as well. 

Notice that we do not have to actually prove a randomized version of Theorem \ref{thm:speeduppaper}: a randomized version of the round elimination theorem has been shown in \cite{binary}, where it has also been shown that, if the number of labels is bounded at each step, then we can automatically lift a lower bound obtained with this technique to the LOCAL model.

Also, since a lower bound for randomized algorithms implies also a lower bound for deterministic algorithms, we immediately get as a corollary a lower bound for the LOCAL model for deterministic algorithms. 

We can then get even better lower bounds for deterministic algorithms by exploiting known \emph{gap} results: we know that for locally checkable problems some complexities are not possible, and if we get as lower bound a complexity $T$ that falls into one of these gaps, we immediately get as new lower bound the smallest complexity $T'$, larger than $T$, for which the gap does not hold anymore.

\section{Roadmap}
In the breakthrough result of \citet{Balliu2019}, the round elimination technique yielded only an $\Omega(\sqrt{\Delta})$ lower bound for \bmm{}, and different techniques were necessary to lift the result to an $\Omega(\Delta)$ lower bound. One of our contributions is to show the reason why a full $\Omega(\Delta)$ lower bound was not achieved via round elimination: the relaxations performed at each step were too severe. In fact, the main issue of the approach in \cite{Balliu2019} is the following. After each step of round elimination, some simplifying relaxations are performed in order to obtain a problem that can be described using just $4$ labels. By performing such simplifications, a special label called \emph{wildcard} appears: this is a powerful label that can replace any other label in any configuration without invalidating the configuration. The issue is now the following: by performing a round elimination step on a problem where valid configurations contain some number of wildcards, the obtained resulting problem contains configurations with \emph{many more} wildcards. In particular, the number of wildcards grows \emph{quadratically}. Hence, after $O(\sqrt{\Delta})$ round elimination steps the obtained problem description contains so many wildcards that the problem is $0$-round solvable.

What we can show is that, if we allow just one more label at each step, hence $5$ instead of $4$, then the number of wildcards does not grow at all; instead we obtain a \emph{linear} growth on some parameter that controls how easy the problem is. In this way, for \bmm{}, we can perform $2\Delta-1$ steps of round elimination before getting to a $0$-rounds solvable problem. Since we can also provide an \emph{upper bound} with the same number of rounds, this implies that the simplifications that we perform are not making the problem easier at all: they produce a problem that is easier to describe but that has the same round complexity as the one before the respective simplification. We will actually prove a stronger result: using $5$ labels we can prove \emph{exact} bounds for the \emph{whole} family of bipartite $x$-maximal $y$-matchings.

Hence, the family of problems that we provide in the lower bound proof really captures the \emph{essence} of \bmm{} and its variants: for each given variant of \bmm{} and any $i$, we describe \emph{in a compact form} a problem that is \emph{exactly} $i$ rounds easier than the given variant.

\paragraph{Round elimination} We start by defining a family of problems $\Psi_\Delta(a,b,c)$, described by $3$ parameters. Then, we prove that the problem $\Psi_\Delta(y,x,0)$ is exactly $2$ rounds easier than the bipartite $x$-maximal $y$-matching problem (recall that bipartite $0$-maximal $1$-matching is standard \bmm{}). Then, we relate problems in the family: we will show that the problem  $\Psi_\Delta(a,c+a,b)$ is at least one round easier than $\Psi_\Delta(a,b,c)$ (the results that we provide in the upper bound section will imply that $\Psi_\Delta(a,c+a,b)$ is actually \emph{exactly} one round easier than $\Psi_\Delta(a,b,c)$). In this way, we get a full characterization of all the problems in the family. Crucially, all the problems of the family are described using only $5$ labels, and while the result of the round elimination technique may contain more than $5$ labels, we will provide relaxations that allow us to map these problems back to this family.

\paragraph{Lower bounds for bipartite $x$-maximal $y$-matchings}
We will then prove that, for some values of $a,b,c$, the problem $\Psi_\Delta(a,b,c)$ cannot be solved in $0$ rounds, and we will then show what lower bounds this implies for bipartite $x$-maximal $y$-matchings. In particular, we will obtain tight-in-$\Delta$ lower bounds for the \emph{port numbering} model.

\paragraph{Upper bounds}
Then, we will prove upper bounds for the whole family of bipartite $x$-maximal $y$-matchings. These upper bounds will match exactly the lower bounds that we provided.

\paragraph{Lifting the bounds to the LOCAL model}
At this point we have lower bounds for plenty of variants of matchings, for the port numbering model. We now need to lift these bounds to the LOCAL model. The first step is to prove a randomized lower bound for the port numbering model, that directly implies a lower bound for the LOCAL model as well, since it is possible to generate unique IDs in constant time. Then, while a randomized lower bound directly implies a deterministic lower bound, we will obtain a \emph{better} deterministic lower bound using standard techniques.

\paragraph{Behind the scenes}
We will then informally discuss about what we mean by automatic lower and upper bounds. We will briefly explain how, part of our results concerning both lower and upper bounds can be actually obtained automatically, not in theory but also in practice.

\paragraph{Open questions}
We will finally conclude with some open questions regarding round elimination, and in particular about how different can be lower bounds obtained by using bounded automatic round elimination, compared to what can be obtained by using the standard version of round elimination for the same problem. 

\section{Lower Bounds}\label{sec:lower}
In this section, we will prove a lower bound for \bmm{} that we will show to be tight in the LOCAL model in Sections \ref{sec:upper} and \ref{sec:lifting}. In particular, our lower bound holds even in the restricted setting of $\Delta$-regular trees.
In order to obtain this bound, we will define a family $\mathcal F$ of problems that will help us to describe the behavior of \bmm{} in the round elimination framework. 
As it turns out, our way of proving a lower bound via such a family of problems is very \emph{robust}: we can also obtain tight lower bounds for very natural variants of \bmm{} by generalizing our problem family.
In fact, we believe that the general lower bound idea should also work for many other variants of \bmm{}; however we will only give an explicit proof for the variants we consider to be the most natural extensions of \bmm{}, namely those obtained by relaxing the \emph{packing} and/or \emph{covering} constraint used to define \bmm{}.
To be more precise, if we relax the packing constraint, we allow each node to be matched with up to $y$ many neighbors, for some parameter $y$, and if we relax the covering constraint, we allow each unmatched node to have up to $x$ neighbors that are unmatched themselves, giving rise to the already defined notion of $x$-maximal $y$-matchings.
\bmm{} appears as a special case in this family, where we set $x = 0$ and $y = 1$; hence we will only give a general proof for the whole family and obtain the proof for \bmm{} as a special case.

In order to obtain the desired lower bounds for all problems in this family, we define the aforementioned problem family $\mathcal F$ (or more precisely, its extension that also describes the behavior of all bipartite $x$-maximal $y$-matching problems), parameterized (for each fixed $\Delta$) by three values $a, b, c$. 
For each $1 \leq a \leq \Delta-1$ and $0 \leq b, c \leq \Delta$ such that $a + b \leq \Delta$ and $a + c \leq \Delta$, denote by $\Psi_\Delta^W(a,b,c)$ (resp.\ $\Psi_\Delta^B(a,b,c)$) the problem given by the white (resp.\ black) configurations
\begin{align*}
	&\X^{a-1} \s \M \s \O^{\Delta-a} \\
	&\X^a \s \O^{b} \s \P^{\Delta-a-b} \\
	&\X^a \s \Z \s \O^{\Delta-a-1}
\end{align*}
and the black (resp.\ white) configurations
\begin{align*}
	&[\M \Z \P \O \X]^{a-1} \s [\M \X] \s [\P \O \X]^{\Delta-a} \\
	&[\M \Z \P \O \X]^a \s [\P \O \X]^{c} \s [\O \X]^{\Delta-a-c} \\
	&[\M \Z \P \O \X]^a \s [\X] \s [\P \O \X]^{\Delta-a-1} \enspace.
\end{align*}
The black diagram of $\Psi_\Delta^W(a,b,c)$ is represented in Figure \ref{fig:diagrampsi}. For extreme values of $a,b,c$ additional arrows may be present.

\begin{figure}
	\centering
	\includegraphics[width=0.2\textwidth]{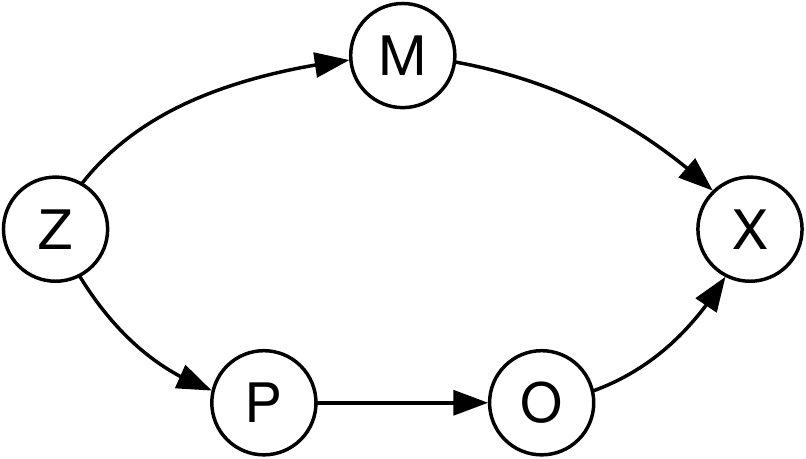}
	\caption{The black diagram of $\Psi_\Delta^W(a,b,c)$.}
	\label{fig:diagrampsi}
\end{figure}

We now proceed as follows. In Lemma \ref{lem:varinitone} and Lemma \ref{lem:varinittwo} we will prove that $\Psi_\Delta^W(x,y,0)$ is exactly two rounds easier than the bipartite $x$-maximal $y$-matching problem. In Lemma \ref{lem:0round} we will prove that for most parameter values, the problems in the family $\Psi_\Delta^W(a,b,c)$ are not $0$-round solvable. Finally, in Theorem \ref{thm:pnlower} we will combine all the results of this section to prove a lower bound for the bipartite $x$-maximal $y$-matching problem. We will now start by relating problems in the family that we just defined, by showing in Lemma \ref{lem:vartech} that by applying Theorem \ref{thm:speeduppaper} on $\Psi^W_{\Delta}(a,b,c)$ and performing some relaxations, we obtain a problem that is still in the same family, but with different parameters.

\begin{lemma}\label{lem:vartech}
	For any $\Delta \geq 2$, $1 \leq a \leq \Delta - 1$, $0 \leq b \leq \Delta-a$, and $0 \leq c \leq \Delta-a$, problem $\Psi^B_{\Delta}(a,d,b)$ is a black relaxation of $\REB(\Psi^W_{\Delta}(a,b,c))$, where $d = \min \{ c + a, \Delta - a \}$.
\end{lemma}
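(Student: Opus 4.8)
The plan is to compute $\REB(\Psi^W_\Delta(a,b,c))$ explicitly using Theorem~\ref{thm:speeduppaper}, and then transform it into $\Psi^B_\Delta(a,d,b)$ through a short sequence of black relaxations; since the composition of relaxations is again a relaxation, this establishes the claim. Concretely, I would (i) apply the universal quantifier to the black constraint of $\Psi^W_\Delta(a,b,c)$ to obtain the black constraint of $\REB(\Psi^W_\Delta(a,b,c))$, (ii) read off the white constraint via Observation~\ref{obs:iterate}, and (iii) simplify by a renaming of sets to single labels together with one application of (the black version of) Lemma~\ref{lem:merge}.

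The technical heart is step (i). I expect that the only sets surviving in maximal black configurations are the five brackets that already occur in the black constraint of $\Psi^W_\Delta(a,b,c)$, namely $\bMZPOX$, $\bMX$, $\bPOX$, $\bOX$, and $\bX$, and that the maximal configurations are exactly
\begin{align*}
&\bMZPOX^{a-1}\s\bMX\s\bPOX^{\Delta-a}\\
&\bMZPOX^{a}\s\bPOX^{c}\s\bOX^{\Delta-a-c}\\
&\bMZPOX^{a}\s\bX\s\bPOX^{\Delta-a-1}\enspace.
\end{align*}
To prove this I would first characterize membership in the (single-configuration) black constraint of $\Psi^W_\Delta(a,b,c)$ by a Hall-type counting condition, tracking that $\Z$ fits only into the $a$ (resp.\ $a-1$) unconstrained positions, that $\M$ needs an $[\M\X]$-slot in addition, and so on. Using this, each displayed set-configuration is shown to be valid (every choice of representatives lands in the black constraint) and maximal (enlarging any set produces a choice such as $\Z^a\M\P^{\Delta-a-1}$ or $\Z^{a-1}\O\P^{\Delta-a}$ that leaves the constraint), while the same counting rules out any further maximal configuration. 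The strength order visible in Figure~\ref{fig:diagrampsi} ($\X$ strongest, then $\O$, then $\M$ and $\P$, then $\Z$) guides which enlargements have to be excluded.

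Given these configurations, Observation~\ref{obs:iterate} produces the white constraint by replacing, in each white configuration of $\Psi^W_\Delta(a,b,c)$, every label $L$ by the disjunction of all maximal sets containing $L$. Under the renaming $\bMZPOX\mapsto\X$, $\bMX\mapsto\M$, $\bPOX\mapsto\O$, $\bOX\mapsto\P$, $\bX\mapsto\Z$, I expect this to essentially reproduce the white constraint of $\Psi^B_\Delta(a,d,b)$ (the parameter $b$ being carried through unchanged, since it only controls a split within the white configurations), while the three maximal black configurations become $\X^{a-1}\M\O^{\Delta-a}$, $\X^a\O^c\P^{\Delta-a-c}$, and $\X^a\Z\O^{\Delta-a-1}$. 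The sole remaining gap is in the middle configuration, where the $\O$-count $c$ must be raised to $d$. This is exactly where the relaxation enters: since $\bOX\subseteq\bPOX$, Observation~\ref{obs:strongerset} gives that $\bPOX$ is at least as strong as $\bOX$ according to the white constraint, so by the black version of Lemma~\ref{lem:merge} I may replace $d-c$ of the $\bOX$-entries by $\bPOX$; this is legal precisely because $d-c=\min\{a,\Delta-a-c\}\le\Delta-a-c$ many $\bOX$-slots are available, and it yields exactly $d=\min\{c+a,\Delta-a\}$.

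I expect the maximality analysis in step (i) to be the main obstacle: excluding stray maximal configurations and checking that no set can be enlarged require a careful, somewhat tedious case distinction. The parenthetical remark after the definition of $\Psi^W_\Delta$ (``for extreme values of $a,b,c$ additional arrows may be present'') signals the second delicate point, namely the boundary regimes such as $c=\Delta-a$ (where $\bOX$ disappears from the black constraint and hence from the disjunctions of the white constraint) or $b\in\{0,\Delta-a\}$ and $a=1$. In these regimes the white constraint of $\REB(\Psi^W_\Delta(a,b,c))$ becomes strictly more restrictive than that of $\Psi^B_\Delta(a,d,b)$; I would close the argument by noting that loosening a white constraint while keeping the black one fixed is realized by the identity map and is therefore itself a (black) relaxation, so the target problem remains a relaxation in all parameter regimes.
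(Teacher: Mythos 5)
Your overall architecture (compute the black constraint of $\REB(\Psi^W_\Delta(a,b,c))$ via the universal quantifier, read off the white constraint via Observation~\ref{obs:iterate}, relax via Lemma~\ref{lem:merge} and Observation~\ref{obs:strongerset}, rename) matches the paper's. However, the technical heart of your step (i) contains a genuine error: the maximal configurations of the black constraint of $\REB(\Psi^W_\Delta(a,b,c))$ are \emph{not} exactly the three configurations you list with middle parameter $c$. Concretely, take $\Delta=5$, $a=1$, $c=0$ and consider the set-configuration $\bMO \s \bPO \s \bO^{3}$. Every choice of representatives is one of $\M\P\O^3$, $\M\O^4$, $\P\O^4$, $\O^5$, and each of these is a black configuration of $\Psi^W_5(1,b,0)$ (the first two via $[\M\X]\s[\P\O\X]^4$, the last two via $[\M\Z\P\O\X]\s[\O\X]^4$), so this configuration belongs to the black constraint of $\REB$. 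Yet it cannot be extended to any of your three $c$-configurations: $\bMO\not\subseteq\bMX$ and $\bMO\not\subseteq\bPOX$ rule out the first, $\bPO\not\subseteq\bOX$ rules out the second, and $\bO\not\subseteq\bX$ rules out the third. The same phenomenon occurs in general, e.g.\ for $\bMO^{a}\s\bPO^{a+c}\s\bO^{\Delta-2a-c}$. Consequently the intermediate ``$c$-problem'' you pass through is \emph{not} a relaxation of $\REB(\Psi^W_\Delta(a,b,c))$ (its black constraint forbids outputs that $\REB$ permits, and the superset-replacement mechanism of Lemma~\ref{lem:merge}/Observation~\ref{obs:strongerset} cannot map such outputs into it), so the first link of your chain of relaxations breaks, even though the final problem $\Psi^B_\Delta(a,d,b)$ happens to be correct.

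The fix is to abandon the exact-maximality characterization entirely --- it is both false and unnecessary. The paper only proves the weaker statement that \emph{every} configuration in the black constraint of $\REB(\Psi^W_\Delta(a,b,c))$ can be \emph{extended} to one of the three target configurations, and it builds $d$ into the middle target from the outset: in the case where the configuration contains neither $\bX$ nor $\bMX$, one argues that every set contains $\O$, that at most $a$ sets contain $\M$ or $\Z$ (extended to $\bMZPOX$), and that at most $a+c$ sets contain $\P$ in total, so among the remaining $\Delta-a$ sets (all subsets of $\bPOX$) at most $\min\{a+c,\Delta-a\}=d$ contain a $\P$; these are extended to $\bPOX$ and the rest to $\bOX$. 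This yields $\bMZPOX^{a}\s\bPOX^{d}\s\bOX^{\Delta-a-d}$ directly, absorbs your counterexample above, and makes your separate ``raise $c$ to $d$'' relaxation step superfluous. Your remarks on the white constraint, the role of $b$, and the boundary regimes are essentially fine, but they do not repair the gap in step (i).
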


\begin{proof}
	Recall the definition of extending a configuration, and let $d = \min \{ c + a, \Delta - a \}$.
	We start by showing that any configuration in the black constraint of $\REB(\Psi^W_{\Delta}(a,b,c))$ can be extended to one of the configurations
	\begin{align*}
		&\bMZPOX^{a-1} \s \bMX \s \bPOX^{\Delta-a} \\
		&\bMZPOX^a \s \bPOX^{d} \s \bOX^{\Delta-a-d} \\
		&\bMZPOX^a \s \bX \s \bPOX^{\Delta-a-1} \enspace.
	\end{align*}
	Consider an arbitrary configuration $\mathcal C$ in the black constraint of $\REB(\Psi^W_{\Delta}(a,b,c))$, and recall that the black constraint of $\REB(\Psi^W_{\Delta}(a,b,c))$ is obtained by applying the universal quantifier to the white constraint of $\Psi^W_{\Delta}(a,b,c)$.
	We distinguish three cases.
	
	If $\mathcal C$ contains the set $\bX$, then at most $a$ of the other $\Delta - 1$ sets in $\mathcal C$ can contain an $\M$ or $\Z$ since in each black configuration of $\Psi^W_{\Delta}(a,b,c)$ there are at most $a$ labels from $\{ M, Z \}$.
	Hence the remaining $\Delta-a-1$ labels must be subsets of $\bPOX$ each.
	It follows that $\mathcal C$ can be extended to $\bMZPOX^a \s \bX \s \bPOX^{\Delta-a-1}$.

	If $\mathcal C$ contains the set $\bMX$, then at most $a-1$ of the other $\Delta - 1$ sets in $\mathcal C$ can contain an $\M$ or $\Z$ since otherwise we would again be able to choose $a+1$ labels contained in $\{ M, Z \}$ from $a+1$ sets in $\mathcal C$ which cannot result in a black configuration of $\Psi^W_{\Delta}(a,b,c)$, no matter which labels are picked from the remaining $\Delta-a-1$ sets.
	Hence the remaining $\Delta-a$ labels must be subsets of $\bPOX$ each, and similarly to the previous case, we see that $\mathcal C$ can be extended to $\bMZPOX^{a-1} \s \bMX \s \bPOX^{\Delta-a}$.

	Consider the last remaining case, i.e., that $\mathcal C$ contains neither the set $\bX$ nor the set $\bMX$.
	Since $\X$ is a label that is at least as strong as $\M$ according to the black constraint of $\Psi^W_{\Delta}(a,b,c)$, any set in $\mathcal C$ that contains $\M$ must also contain $\X$ as otherwise adding $\X$ to that particular set would still result in a black configuration of $\REB(\Psi^W_{\Delta}(a,b,c))$ which would violate the maximality condition in the definition of $\REB()$.
	It follows that $\mathcal C$ does not contain the set $\bM$.
	By a similar argument to the previous one, any set in $\mathcal C$ containing $\Z$ must also contain $\P$, and any set containing $\P$ must also contain $\O$, due to $\P$ being a label as strong as $\Z$ and $\O$ being a label as strong as $\P$.
	Hence, it follows that each set in $\mathcal C$ must contain the label $\O$, since if a set does not contain $\O$ then it must not contain $\P$ and $\Z$ as well, and this in turn would imply that the set is either $\bX$ or $\bMX$, and this case has already been covered.
	This implies that at most $a+c$ sets in $\mathcal C$ can contain the label $\P$ as otherwise we could choose $a+c+1$ times the label $\P$ and $\Delta-a-c-1$ times the label $\O$ from the sets in $\mathcal C$ which does not yield a black configuration of $\Psi^W_{\Delta}(a,b,c)$.
	Moreover, with an argumentation analogous to the one in the first case, we see that at most $a$ sets in $\mathcal C$ can contain an $\M$ or $\Z$, and we can extend these sets to $\bMZPOX$.
     The other $\Delta-a$ sets must be subsets of $\bPOX$, and at most $d = \min \{ c + a, \Delta - a \}$ of them can contain a $\P$. We extend these sets to $\bPOX$. All other sets are extended to $\bOX$.
	It follows that $\mathcal C$ can be extended to $\bMZPOX^a \s \bPOX^{d} \s \bOX^{\Delta-a-d}$.
	
	Now, Lemma \ref{lem:merge} in conjunction with Observation \ref{obs:strongerset} tells us that replacing the black constraint of $\REB(\Psi^W_{\Delta}(a,b,c))$ by the configurations
	\begin{align*}
		&\bMZPOX^{a-1} \s \bMX \s \bPOX^{\Delta-a} \\
		&\bMZPOX^a \s \bPOX^{d} \s \bOX^{\Delta-a-d} \\
		&\bMZPOX^a \s \bX \s \bPOX^{\Delta-a-1} \enspace.
	\end{align*}
	will result in a black relaxation of $\REB(\Psi^W_{\Delta}(a,b,c))$.
	What is left to be done is to compute the white constraint of this relaxation.
	As we performed the above relaxation, formally we cannot directly apply Observation \ref{obs:iterate}, but the same idea works: the definition of our function $\REB()$ still ensures that we can obtain the white constraint of the relaxation by iterating through the white configurations of $\Psi^W_{\Delta}(a,b,c)$ and replacing in each configuration each label $L$ by the disjunction of all sets that occur in the black constraint of the relaxation and contain label $L$.
	We obtain that the white constraint is given by the configurations
	\begin{align*}
		&[\bX \bMX \bOX \bPOX \bMZPOX]^{a-1} \s [\bMX \bMZPOX] \s [\bOX \bPOX \bMZPOX]^{\Delta-a} \\
		&[\bX \bMX \bOX \bPOX \bMZPOX]^a \s [\bOX \bPOX \bMZPOX]^{b} \s [\bPOX \bMZPOX]^{\Delta-a-b} \\
		&[\bX \bMX \bOX \bPOX \bMZPOX]^a \s [\bMZPOX] \s [\bOX \bPOX \bMZPOX]^{\Delta-a-1} \enspace.
	\end{align*}
	Finally, renaming the sets in the black and white constraint of the relaxation according to
	\begin{align*}
		\bX &\rightarrow \Z \\
		\bOX &\rightarrow \P \\
		\bPOX &\rightarrow \O \\
		\bMX &\rightarrow \M \\
		\bMZPOX &\rightarrow \X \enspace,
	\end{align*}
	shows that the relaxation is identical to $\Psi^B_{\Delta}(a,d,b)$, thereby proving the lemma.
\end{proof}

In order to relate the $x$-maximal $y$-matching problem with the family $\mathcal{F}$, we now redefine the bipartite $x$-maximal $y$-matching problem in a way that conforms to the formalism used in Theorem \ref{thm:speeduppaper}. We will then prove that by applying Theorem \ref{thm:speeduppaper} twice on this encoded version of $x$-maximal $y$-matching and performing some relaxations, we get some problem in the family $\mathcal{F}$. We will use $4$ labels: $\M$, $\O$, $\X$, and $\P$. The label $\M$ represents a matched edge, and for both black and white nodes it can appear on at most $y$ incident edges. Also, we want unmatched white nodes to prove that they have enough matched neighbors. Thus, we require that they output at least $\Delta-x$ pointers using the label $\P$. Then, they can output $\X$ on all the other incident edges, where the label $\X$ represent a wildcard. The $\P$ pointers must be incident to only black matched nodes. In order to require that also unmatched black nodes have enough matched white neighbors, we limit the number of labels $\X$ incident to an unmatched black node to $x$. The label $\O$ represents an unmatched edge, where no pointer nor wildcard has been used. More formally, we define the bipartite $x$-maximal $y$-matching problem as follows.

For any $0 \leq x \leq \Delta$ and $1 \leq y \leq \Delta$, denote by $\Phi_\Delta^W(x,y)$ (resp.\ $\Phi_\Delta^B(x,y)$) the problem given by the white (resp.\ black) configurations
\begin{align*}
	&[\M \O \X]^{y-1} \s [\M] \s [\O \X]^{\Delta-y} \\
	&[\P \X]^{x} \s [\P]^{\Delta-x} \enspace,
\end{align*}
and the black (resp.\ white) configurations
\begin{align*}
	&[\M \P \O \X]^{y-1} \s [\M] \s [\P \O \X]^{\Delta-y} \\
	&[\O \X]^{x} \s [\O]^{\Delta-x} \enspace.
\end{align*}
The black diagram of $\Phi_\Delta^W(x,y)$ is represented in Figure \ref{fig:diagramphi1}. For extreme values of $x$ and $y$ additional arrows may be present.

\begin{figure}
	\centering
	\includegraphics[width=0.2\textwidth]{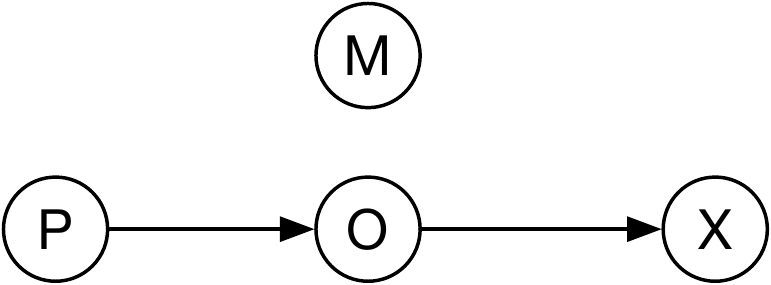}
	\caption{The black diagram of $\Phi_\Delta^W(x,y)$.}
	\label{fig:diagramphi1}
\end{figure}

We now argue that the problem that we just defined \emph{is} the $x$-maximal $y$-matching problem previously defined in the introduction. In particular, this definition correctly encodes the bipartite $x$-maximal $y$-matching problem, where the solution is given by edges labeled $\M$. In fact, first, note that each node is incident to at most $y$ edges labeled $\M$, hence the packing constraint is not violated. Then, note that unmatched white nodes are incident to at least $\Delta-x$ $\P$s, and since black nodes incident to a $\P$ must also be incident to an $\M$, then each white node has at most $x$ unmatched neighbors. Finally, note that unmatched black nodes are incident to at least $\Delta-x$ $\O$s, and since white nodes incident to an $\O$ must also be incident to an $\M$, then each black node has at most $x$ unmatched neighbors. Thus, the covering constraint is not violated. We now need to show that given a solution to the bipartite $x$-maximal $y$-matching problem we can output a solution for this problem with the encoding that we defined. This can be performed in $1$ round of communication, required by white nodes to be aware of which black neighbors are actually matched. Matched white nodes output $\M$ on all matched edges and $\O$ on all the others, while unmatched white nodes output $\P$ on all edges connecting them to matched black nodes, and $X$ on all the others. This is a valid solution, since the number of $\M$s incident to each node is at most $y$, at most $x$ neighbors of unmatched white nodes are also unmatched and thus white nodes are incident to at most $x$ $\X$s, and at most $x$ neighbors of unmatched black nodes are also unmatched and thus at most $x$ white nodes output $\X$ on their ports. Notice that the bipartite $x$-maximal $y$-matching problem now has two different meanings:
\begin{itemize}
	\item The natural and intuitive version where both black and white nodes know which edges are part of the matching, and nothing else.
	\item The locally checkable encoded version, where only white nodes need to know the output, but they need to know which edges are part of the matching \emph{and} which edges must be marked with $\P$.
\end{itemize}
All the results that we prove in the paper about $x$-maximal $y$-matchings will refer to the locally checkable encoded version. As we have seen, there is at most a $1$-round difference between the two versions. We remark that it is possible to show that any algorithm that solves the problem in the original definition must in fact be aware of enough matched neighbors that it can output the required pointers without the $1$-round penalty, implying that these two problems are actually \emph{equivalent}.

We now show how to connect the bipartite $x$-maximal $y$-matching problem (with the encoding described above) to the family of previously defined problems. In particular, we define an intermediate problem, and we apply Theorem \ref{thm:speeduppaper} twice, once in Lemma \ref{lem:varinitone} and once in Lemma \ref{lem:varinittwo}, to show that $\Psi_\Delta^W(x,y,0)$ is exactly two rounds easier than the bipartite $x$-maximal $y$-matching problem. 
The white diagram of the intermediate problem $\Phi'(x,y)$ that we define is represented in Figure \ref{fig:diagramphi2}. For extreme values of $x$ and $y$ additional arrows may be present.

\begin{figure}
	\centering
	\includegraphics[width=0.2\textwidth]{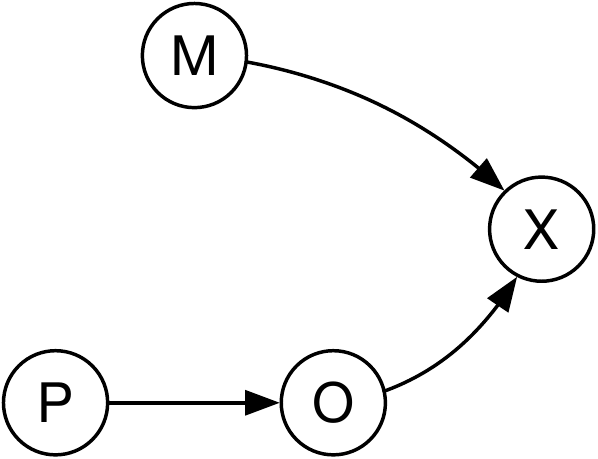}
	\caption{The white diagram of $\Phi'(x,y)$.}
	\label{fig:diagramphi2}
\end{figure}

\begin{lemma}\label{lem:varinitone}
	For any $\Delta \geq 2$, $0 \leq x \leq \Delta$ and $1 \leq y \leq \Delta-1$, the problem $\Phi'(x,y)$ given by the black configurations
	\begin{align*}
		&\X^{y-1} \s \M \s \O^{\Delta-y} \\
		&\X^{y} \s \P^{\Delta-y} \enspace,
	\end{align*}
	and the white configurations
	\begin{align*}
		&[\M \P \O \X]^{y-1} \s [\M \X] \s [\P \O \X]^{\Delta-y} \\
		&[\P \O \X]^{x} \s [\O \X]^{\Delta-x} \enspace.
	\end{align*}
	is a black relaxation of $\REB(\Phi^W_{\Delta}(x,y))$.
\end{lemma}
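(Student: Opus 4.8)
The plan is to follow the template of Lemma~\ref{lem:vartech}. First I would compute the black constraint of $\REB(\Phi^W_\Delta(x,y))$ by applying the universal quantifier to the black constraint of $\Phi^W_\Delta(x,y)$, then argue that every configuration obtained this way can be extended to one of two explicit target configurations, invoke Lemma~\ref{lem:merge} together with Observation~\ref{obs:strongerset} to conclude that replacing the black constraint by those two targets yields a black relaxation, and finally recover the white constraint of the relaxation via the idea of Observation~\ref{obs:iterate}. The two targets I expect are $\bM \s \bMPOX^{y-1} \s \bPOX^{\Delta-y}$ and $\bMPOX^{y} \s \bOX^{\Delta-y}$; under the renaming $\bM \to \M$, $\bOX \to \P$, $\bPOX \to \O$, $\bMPOX \to \X$ these become exactly the two black configurations $\X^{y-1}\s\M\s\O^{\Delta-y}$ and $\X^{y}\s\P^{\Delta-y}$ of $\Phi'(x,y)$.

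The technical heart, and where I expect the main obstacle, is the case analysis for the universal step. A selection of single labels violates the black constraint of $\Phi^W_\Delta(x,y)$ precisely when it contains more than $y$ copies of $\M$, or it contains no $\M$ but a $\P$, or it contains no $\M$ but more than $x$ copies of $\X$. I would split on whether the configuration $\mathcal C$ contains the singleton set $\bM$. If it does, every selection is forced to contain an $\M$, so only the first violation type can occur; this forces at most $y$ of the sets to contain $\M$ while every remaining set is a subset of $\bPOX$, and by padding with supersets I can extend $\mathcal C$ to $\bM \s \bMPOX^{y-1} \s \bPOX^{\Delta-y}$. If $\mathcal C$ does not contain $\bM$, then every set has a non-$\M$ element, so a zero-$\M$ selection exists; the second violation type then forces that \emph{no} set contains $\P$, and the first forces at most $y$ sets to contain $\M$. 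Hence every set is a subset of $\bMOX$, at most $y$ of them contain $\M$, and I can send those (at most $y$) sets plus enough of the others up to $\bMPOX$ and the remaining $\Delta-y$ sets (all subsets of $\bOX$) up to $\bOX$, extending $\mathcal C$ to $\bMPOX^{y} \s \bOX^{\Delta-y}$.

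With the extension claim established, Observation~\ref{obs:strongerset} (a superset is at least as strong with respect to the white constraint of $\REB(\Phi^W_\Delta(x,y))$) and Lemma~\ref{lem:merge} justify replacing the black constraint by the two targets. For the white constraint of the relaxation I would iterate over the two white configurations of $\Phi^W_\Delta(x,y)$ and replace each label by the disjunction of the target sets containing it, exactly as in Observation~\ref{obs:iterate}. Recording which of $\bM,\bOX,\bPOX,\bMPOX$ contains each of $\M,\P,\O,\X$ yields (after renaming) the disjunction $[\M\X]$ for $\M$, $[\O\X]$ for $\P$, and $[\P\O\X]$ for both $\O$ and $\X$; this turns $[\M\O\X]^{y-1}[\M][\O\X]^{\Delta-y}$ into $[\M\P\O\X]^{y-1}[\M\X][\P\O\X]^{\Delta-y}$ and $[\P\X]^{x}[\P]^{\Delta-x}$ into $[\P\O\X]^{x}[\O\X]^{\Delta-x}$, matching $\Phi'(x,y)$.

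The points I would check most carefully are the multiplicity bookkeeping in the two extensions for the full range $0\le x\le\Delta$ and $1\le y\le\Delta-1$ (in particular that $\Delta-y\ge 0$ leaves room to pad the $\bMPOX$ slots), and the fact that the covering parameter $x$ drops out of the black side: the bound of $x$ on $\X$-containing sets is needed for universal-validity but is not used in the extension, which is why the black configurations of $\Phi'(x,y)$ depend only on $y$. I do not expect any essential difficulty beyond this bookkeeping, since the argument runs fully parallel to Lemma~\ref{lem:vartech}.
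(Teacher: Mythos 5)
Your proposal is correct and follows essentially the same route as the paper's proof: the same two target configurations $\bM \s \bMPOX^{y-1} \s \bPOX^{\Delta-y}$ and $\bMPOX^{y} \s \bOX^{\Delta-y}$, the same case split on whether $\mathcal C$ contains the singleton $\bM$ (with the same arguments that at most $y$ sets can contain $\M$ and that no set can contain $\P$ in the second case), the same invocation of Lemma \ref{lem:merge} with Observation \ref{obs:strongerset}, and the same computation of the white constraint and renaming. Your explicit enumeration of the three violation types and the remark that the parameter $x$ plays no role on the black side are accurate presentational additions, not deviations.
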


\begin{proof}
	Similarly to the proof of Lemma \ref{lem:vartech}, we start by showing that any configuration in the black constraint of $\REB(\Phi^W_{\Delta}(x,y))$ can be extended to one of the configurations
	\begin{align*}
		&\bMPOX^{y-1} \s \bM \s \bPOX^{\Delta-y} \\
		&\bMPOX^{y} \s \bOX^{\Delta-y} \enspace.
	\end{align*}
	Consider an arbitrary configuration $\mathcal C$ in the black constraint of $\REB(\Phi^W_{\Delta}(x,y))$.
	We distinguish two cases.
	
	If $\mathcal C$ contains the set $\bM$, then at most $y-1$ of the other $\Delta - 1$ sets in $\mathcal C$ can contain an $\M$ since in each black configuration of $\Phi^W_{\Delta}(x,y)$, there are at most $y$ labels $\M$.
	Hence, the remaining $\Delta-y$ labels must be subsets of $\bPOX$ each, and it follows that $\mathcal C$ can be extended to $\bMPOX^{y-1} \s \bM \s \bPOX^{\Delta-y}$.

	If $\mathcal C$ does not contain the set $\bM$, then each set in $\mathcal C$ cannot contain $\P$ as otherwise we could choose one $\P$ and $\Delta-1$ further labels that are all different from $\M$ from the $\Delta$ sets in $\mathcal C$, which does not yield a black configuration of $\Phi_\Delta^W(x,y)$.
	Using, again, the fact that in each black configuration of $\Phi^W_{\Delta}(x,y)$ there are at most $y$ labels $\M$, it follows that $\mathcal C$ can be extended to $\bMPOX^{y} \s \bOX^{\Delta-y}$.
	
	By applying Lemma \ref{lem:merge} and Observation \ref{obs:strongerset}, we obtain that replacing the black constraint of $\REB(\Phi^W_{\Delta}(x,y))$ by the configurations
	\begin{align*}
		&\bMPOX^{y-1} \s \bM \s \bPOX^{\Delta-y} \\
		&\bMPOX^{y} \s \bOX^{\Delta-y} \enspace,
	\end{align*}
	will result in a black relaxation of $\REB(\Phi^W_{\Delta}(x,y))$.
	Computing the white configurations of the relaxation in an analogous manner to the approach in Lemma \ref{lem:vartech}, we obtain the configurations
	\begin{align*}
		&[\bM \bOX \bPOX \bMPOX]^{y-1} \s [\bM \bMPOX] \s [\bOX \bPOX \bMPOX]^{\Delta-y} \\
		&[\bOX \bPOX \bMPOX]^{x} \s [\bPOX \bMPOX]^{\Delta-x} \enspace.
	\end{align*}
	Finally, renaming the sets in the black and white constraint of the relaxation according to
	\begin{align*}
		\bM &\rightarrow \M \\
		\bOX &\rightarrow \P \\
		\bPOX &\rightarrow \O \\
		\bMPOX &\rightarrow \X \enspace,
	\end{align*}
	shows that the relaxation is identical to $\Phi'(x,y)$, thereby proving the lemma.
\end{proof}

\begin{lemma}\label{lem:varinittwo}
	For any $\Delta \geq 2$, $0 \leq x \leq \Delta$ and $1 \leq y \leq \Delta-1$, problem $\Psi_\Delta^W(y,x,0)$ is a white relaxation of $\REW(\Phi'(x,y))$.
\end{lemma}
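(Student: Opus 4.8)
The plan is to follow the template of Lemmas~\ref{lem:varinitone} and \ref{lem:vartech}, but for the function $\REW()$ instead of $\REB()$: the universal quantifier is now applied to the \emph{white} constraint of $\Phi'(x,y)$ (producing the new white constraint) and the existential quantifier to its black constraint. First I would show that every configuration $\mathcal C$ in the white constraint of $\REW(\Phi'(x,y))$ can be extended to one of
\begin{align*}
 &\bMPOX^{y-1} \s \bMX \s \bPOX^{\Delta-y} \\
 &\bMPOX^{y} \s \bPOX^{x} \s \bOX^{\Delta-y-x} \\
 &\bMPOX^{y} \s \bX \s \bPOX^{\Delta-y-1} \enspace.
\end{align*}
(Here I assume $x+y\le\Delta$, as otherwise $\Psi_\Delta^W(y,x,0)$ is not defined.)

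I would distinguish three cases. Throughout I use two facts about $\Phi'(x,y)$: every white configuration contains at most $y$ copies of $\M$, and every white configuration containing neither $\M$ nor $\X$ contains at most $x$ copies of $\P$. If $\mathcal C$ contains a set $F\subseteq\{\M,\X\}$ with $\M\in F$, then (selecting $\M$ from each $\M$-containing set) at most $y$ sets of $\mathcal C$ contain $\M$; extending $F$ to $\bMX$, the at most $y-1$ remaining $\M$-containing sets to $\bMPOX$, and all other sets to $\bPOX$ extends $\mathcal C$ to the first configuration. If $\mathcal C$ contains $\bX$ but no $\M$-containing subset of $\{\M,\X\}$, the same bound on $\M$-containing sets lets me extend $\mathcal C$ to the third configuration. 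Otherwise no set of $\mathcal C$ is contained in $\{\M,\X\}$, so every set meets $\{\P,\O\}$ and a choice avoiding both $\M$ and $\X$ exists; if more than $x$ sets contained $\P$, selecting $\P$ from each such set would give exactly such a choice with more than $x$ copies of $\P$, contradicting the second fact, so at most $x$ sets contain $\P$, and extending the (at most $y$) $\M$-containing sets to $\bMPOX$, the (at most $x$) remaining $\P$-containing sets to $\bPOX$, and the rest to $\bOX$ extends $\mathcal C$ to the second configuration.

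By Lemma~\ref{lem:merge} and Observation~\ref{obs:strongerset}, replacing the white constraint by these three configurations is a white relaxation of $\REW(\Phi'(x,y))$. As in Lemma~\ref{lem:vartech}, I would then compute the black constraint of this relaxation by replacing, in each of the two black configurations $\X^{y-1}\s\M\s\O^{\Delta-y}$ and $\X^{y}\s\P^{\Delta-y}$ of $\Phi'(x,y)$, every label by the disjunction of those of the three sets above that contain it. Renaming according to
\begin{align*}
 \bX &\rightarrow \Z \\
 \bOX &\rightarrow \P \\
 \bPOX &\rightarrow \O \\
 \bMX &\rightarrow \M \\
 \bMPOX &\rightarrow \X \enspace,
\end{align*}
the white constraint becomes that of $\Psi_\Delta^W(y,x,0)$, and the two computed black configurations become the first two black configurations of $\Psi_\Delta^W(y,x,0)$.

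The step I expect to require the most care—and where this proof genuinely departs from Lemma~\ref{lem:vartech}—is reconciling the number of black configurations. Since $\Phi'(x,y)$ has only two black configurations, the construction yields a problem with exactly two black configurations; it does \emph{not} produce the third black configuration $[\M\Z\P\O\X]^{y}\s[\X]\s[\P\O\X]^{\Delta-y-1}$ of $\Psi_\Delta^W(y,x,0)$. That third configuration is not redundant in general (for instance $\Z^{y}\s\P\s\X\s\O^{\Delta-y-2}$ satisfies it but neither of the other two). However, including it only enlarges the black constraint, and enlarging a constraint is a relaxation realised by the identity $0$-round algorithm; hence $\Psi_\Delta^W(y,x,0)$ is a relaxation of the two-configuration problem just obtained, and since the relaxation of a relaxation is again a relaxation, $\Psi_\Delta^W(y,x,0)$ is a white relaxation of $\REW(\Phi'(x,y))$, as claimed.
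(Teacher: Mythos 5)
Your proposal is correct and follows essentially the same route as the paper: the same three target configurations for the white constraint, the same case analysis (yours handles $\mybox{\M}$ directly inside the first case rather than ruling it out via maximality, a cosmetic difference), and the same key final move of adding the third black configuration $[\M\Z\P\O\X]^{y}\s[\X]\s[\P\O\X]^{\Delta-y-1}$ as a further relaxation before renaming. The paper performs exactly this addition with the same justification that enlarging a constraint can only relax the problem.
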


\begin{proof}
	Similarly to the proof of Lemma \ref{lem:vartech}, we start by showing that any configuration in the white constraint of $\REW(\Phi'(x,y))$ can be extended to one of the configurations
	\begin{align*}
		&\bMPOX^{y-1} \s \bMX \s \bPOX^{\Delta-y} \\
		&\bMPOX^y \s \bPOX^{x} \s \bOX^{\Delta-y-x} \\
		&\bMPOX^y \s \bX \s \bPOX^{\Delta-y-1} \enspace.
	\end{align*}
	Consider an arbitrary configuration $\mathcal C$ in the white constraint of $\REW(\Phi'(x,y))$.
	We distinguish three cases.
	
	Analogously to the argumentation in the case distinction in the proof of Lemma \ref{lem:vartech}, we obtain that if $\mathcal C$ contains the set $\bX$, then $\mathcal C$ can be extended to $\bMPOX^y \s \bX \s \bPOX^{\Delta-y-1}$, and if $\mathcal C$ contains the set $\bMX$, then $\mathcal C$ can be extended to $\bMPOX^{y-1} \s \bMX \s \bPOX^{\Delta-y}$.
	Hence, consider the last remaining case, i.e., that $\mathcal C$ contains neither the set $\bX$ nor the set $\bMX$.
	Again, analogously to the argumentation in the proof of Lemma \ref{lem:vartech}, we see that each set in $\mathcal C$ must contain the label $\O$, which in turn implies that at most $x$ sets in $\mathcal C$ can contain the label $\P$.
	Since at most $y$ sets in $\mathcal C$ can contain an $\M$, it follows that $\mathcal C$ can be extended to $\bMPOX^y \s \bPOX^{x} \s \bOX^{\Delta-y-x}$.

	Continuing analogously to the proof of Lemma \ref{lem:vartech}, by applying Lemma \ref{lem:merge} and Observation \ref{obs:strongerset}, and computing the black constraint, we obtain a white relaxation of $\REW(\Phi'(x,y))$ that is given by the white configurations
	\begin{align*}
		&\bMPOX^{y-1} \s \bMX \s \bPOX^{\Delta-y} \\
		&\bMPOX^y \s \bPOX^{x} \s \bOX^{\Delta-y-x} \\
		&\bMPOX^y \s \bX \s \bPOX^{\Delta-y-1} \enspace.
	\end{align*}
	and the black configurations
	\begin{align*}
		&[\bX  \bMX \bOX \bPOX \bMPOX]^{y-1} \s [\bMX \bMPOX] \s [\bOX \bPOX \bMPOX]^{\Delta-y} \\
		&[\bX  \bMX \bOX \bPOX \bMPOX]^y \s [\bPOX \bMPOX]^{\Delta-y} \enspace.
	\end{align*}
	Now adding the black configuration $[\bX  \bMX \bOX \bPOX \bMPOX]^y \s [\bMPOX] \s [\bOX \bPOX \bMPOX]^{\Delta-y-1}$ (which can only relax the problem further) and renaming the sets according to
	\begin{align*}
		\bX &\rightarrow \Z \\
		\bOX &\rightarrow \P \\
		\bPOX &\rightarrow \O \\
		\bMX &\rightarrow \M \\
		\bMPOX &\rightarrow \X \enspace,
	\end{align*}
	yields a relaxation that is identical to $\Psi_\Delta^W(y,x,0)$, thereby proving the lemma.
\end{proof}

We now prove that, if the parameters $a$, $b$, and $c$ are not too large, then the problem $\Psi^W_{\Delta}(a,b,c)$  is not $0$-round solvable.

\begin{lemma}\label{lem:0round}
	For any $\Delta \geq 2$, $1 \leq a \leq \Delta-1$, $0 \leq b \leq \Delta-a-1$, and $0 \leq c \leq \Delta-a-1$, there is no deterministic white algorithm that solves $\Psi^W_{\Delta}(a,b,c)$ in $0$ rounds.
\end{lemma}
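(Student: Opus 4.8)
The plan is to use the standard characterization of $0$-round solvability in the bipartite round elimination framework: a problem is solvable in $0$ rounds by a deterministic white algorithm if and only if there is a single white configuration $w$ in the white constraint such that, letting $S_w$ denote the set of labels occurring in $w$, \emph{every} length-$\Delta$ word over the alphabet $S_w$ is a black configuration. The reason is that an anonymous white node can, in $0$ rounds, base its output only on its own port numbering, so determinism forces all white nodes to output the same word $w$; since the white neighbors of any black node are distinct (the girth is at least $4\Delta+2$), an adversary can make any black node observe an arbitrary length-$\Delta$ word over $S_w$. Hence it suffices to show that for \emph{each} of the three white configurations of $\Psi^W_\Delta(a,b,c)$ there is a word over its label set that is not a black configuration.

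For this I would only need the simplest possible witnesses, namely \emph{monochromatic} words. A key elementary observation is that a monochromatic word $\ell^\Delta$ is a black configuration of $\Psi^W_\Delta(a,b,c)$ if and only if, for one of the three black disjunction-patterns, every one of its $\Delta$ slots admits the label $\ell$ (each slot admitting exactly the labels listed in its disjunction). Thus, to rule out a given white configuration, it is enough to exhibit a label $\ell \in S_w$ such that each of the three black patterns contains at least one slot that \emph{rejects} $\ell$.

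Carrying this out makes the proof a short case check. For the configuration $\X^{a-1} \s \M \s \O^{\Delta-a}$ I would use the word $\M^\Delta$: the first and second black patterns each contain $\Delta-a \ge 1$ slots of the form $[\P\O\X]$ or $[\O\X]$ that reject $\M$, and the third pattern contains the singleton slot $[\X]$ rejecting $\M$; here the hypothesis $a \le \Delta-1$ guarantees those slots exist. For $\X^a \s \O^{b} \s \P^{\Delta-a-b}$ I would use $\P^\Delta$, which is legitimate since $b \le \Delta-a-1$ forces $\Delta-a-b \ge 1$ so that $\P \in S_w$; it is rejected by the singleton $[\M\X]$ slot of the first pattern, by the $\Delta-a-c \ge 1$ slots $[\O\X]$ of the second pattern (using $c \le \Delta-a-1$), and by the singleton $[\X]$ slot of the third pattern. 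For $\X^a \s \Z \s \O^{\Delta-a-1}$ I would use $\Z^\Delta$: the label $\Z$ is admitted only by the all-label slots $[\M\Z\P\O\X]$, of which each pattern has at most $a < \Delta$, so the word fits into no pattern. Since each white configuration thereby admits a length-$\Delta$ word over its label set that is not a black configuration, no white configuration can drive a $0$-round white algorithm, proving the lemma.

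I expect the only delicate point to be pinning down the $0$-round characterization precisely — in particular justifying that the adversary can genuinely force a black node to observe a monochromatic word, and that membership of a monochromatic word in the black constraint reduces to the per-slot admissibility condition above. Once that is fixed, the remaining arguments are immediate counting of rejecting slots, and the parameter hypotheses $a \le \Delta-1$ and $c \le \Delta-a-1$ enter exactly to guarantee that the relevant rejecting slots (respectively the $[\P\O\X]/[\O\X]$ slots and the $[\O\X]$ slots) are present.
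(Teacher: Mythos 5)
Your proposal is correct and follows essentially the same route as the paper: both arguments reduce $0$-round solvability to the observation that all white nodes must output the same configuration with a fixed port-to-label assignment, so an adversarial black node can be made to see a monochromatic word, and both then verify that $\M^\Delta$, $\P^\Delta$, and $\Z^\Delta$ fail all three black patterns under exactly the stated parameter hypotheses. The only cosmetic difference is that you package the first step as a general $0$-round characterization, whereas the paper argues it directly for the three configurations at hand.
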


\begin{proof}
	For a contradiction, assume that such a white algorithm exists.
	As it is deterministic, and in the port numbering model each node has exactly the same information in the beginning, i.e., after $0$ rounds, each white node will necessarily choose the same configuration $\mathcal C$ from the white constraint and output the $\Delta$ labels contained in $\mathcal C$ on the $\Delta$ incident edges according to a fixed bijective function that maps the set of port numbers (or, in other words, the set of incident edges) to the (multi)set of $\Delta$ labels in $\mathcal C$.
	For each edge $e$, call the port number that the white endpoint of $e$ assigns to edge $e$ the \emph{white port of $e$}.
	From the above, it follows that for each label $L$ in $\mathcal C$, there is a fixed port number $\gamma$ such that each edge whose white port equals $\gamma$ will receive the output $L$.
	Therefore, w.l.o.g., we can assume the following.
	If $\mathcal C = \X^{a-1} \s \M \s \O^{\Delta-a}$, then each edge whose white port equals $1$ receives output label $\M$; if $\mathcal C = \X^a \s \O^{b} \s \P^{\Delta-a-b}$, then each edge whose white port equals $1$ receives output label $\P$; if $\mathcal C = \X^a \s \Z \s \O^{\Delta-a-1}$, then each edge whose white port equals $1$ receives output label $\Z$.
	Note that for the second case, our choice of parameters ensures that $\mathcal C$ contains at least one $\P$.

	Now, consider a black node $v$ such that the white port of each edge incident to $v$ equals $1$.
	Clearly, there exist input graphs in which such a node occurs.
	Depending on which of the white configurations was chosen as $\mathcal C$, the multiset consisting of the output labels of the $\Delta$ edges incident to $v$ is $\M^\Delta$, $\P^\Delta$, or $\Z^\Delta$.
	As, due to our choice of parameters, none of these multisets is a black configuration of $\Psi^W_{\Delta}(a,b,c)$, we obtain a contradiction to the correctness of the algorithm, which proves the lemma.
\end{proof}

We finally combine all the results obtained above to prove a lower bound for the bipartite $x$-maximal $y$-matching problem in the port numbering model.
\begin{theorem}\label{thm:pnlower}
	In the port numbering model, for any $0 \leq x \leq \Delta$ and $1 \leq y \leq \Delta - 1$, the white round complexity of bipartite $x$-maximal $y$-matching is at least	
	\begin{align*}
		2 \lceil (\Delta - x)/y\rceil \enspace, \qquad &\text{ if }  \lceil \Delta/y\rceil > \lceil ( \Delta - x)/y\rceil \enspace, \text{ and } \\
		2 \lceil (\Delta - x)/y\rceil - 1 \enspace, \qquad &\text{ if } \lceil \Delta/y\rceil = \lceil ( \Delta - x)/y\rceil \enspace.
	\end{align*}
	In particular, the white round complexity of \bmm{} is at least $2 \Delta - 1$.
\end{theorem}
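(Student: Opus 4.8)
The plan is to chain together the three technical lemmas already proved in this section into a single round-elimination argument. The starting point is Lemmas \ref{lem:varinitone} and \ref{lem:varinittwo}, which together establish that $\Psi_\Delta^W(y,x,0)$ is a relaxation reachable by applying $\REB()$ and then $\REW()$ to (the encoded) bipartite $x$-maximal $y$-matching. By Theorem \ref{thm:speeduppaper}, each of these two steps reduces the complexity by exactly one round (and relaxations never increase complexity), so $\Psi_\Delta^W(y,x,0)$ can be solved at least two rounds faster than bipartite $x$-maximal $y$-matching. Therefore, if I can show a lower bound of $T$ rounds on the white complexity of $\Psi_\Delta^W(y,x,0)$, I immediately obtain a lower bound of $T+2$ for the original matching problem.

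To lower-bound $\Psi_\Delta^W(y,x,0)$, I would iterate Lemma \ref{lem:vartech}. That lemma shows that $\Psi^B_\Delta(a,d,b)$ is a black relaxation of $\REB(\Psi^W_\Delta(a,b,c))$ with $d=\min\{c+a,\Delta-a\}$; invoking Theorem \ref{thm:speeduppaper} again, one application of $\REB()$ saves one round, so each step of the iteration costs one round of complexity while advancing the parameters via the update rule $(b,c)\mapsto(d,b)$ with $d=\min\{c+a,\Delta-a\}$, keeping the first parameter fixed at $a=y$. Starting from $(b,c)=(x,0)$, the middle parameter (roughly, the number of $\O$'s that must appear) grows by $a=y$ each step until it saturates at $\Delta-a$. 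I would track the number of steps needed for the parameters to reach a regime where Lemma \ref{lem:0round} no longer applies, i.e.\ where $b$ or $c$ exceeds $\Delta-a-1$. As long as the parameters stay within the hypotheses of Lemma \ref{lem:0round}, the corresponding problem is not $0$-round solvable, and each round-elimination step I can still legitimately take contributes one to the final bound.

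The main work is the arithmetic bookkeeping that converts this parameter evolution into the closed-form count $\lceil(\Delta-x)/y\rceil$. Starting from $b=x$ and incrementing the relevant parameter by $y$ per step, the number of steps before saturation is essentially $\lceil(\Delta-x)/y\rceil-1$ of the two-parameter shifts, and since each full round-elimination cycle here corresponds to one application of $\REB()$, I expect the total count of rounds saved to be $2\lceil(\Delta-x)/y\rceil$ in general, with the parity correction $-1$ arising precisely in the boundary case $\lceil\Delta/y\rceil=\lceil(\Delta-x)/y\rceil$. That boundary case is where the very last problem in the chain happens to already be $0$-round solvable (so one fewer genuine step is available), and distinguishing it from the generic case is the delicate part: I would compare when the final parameters land exactly on the threshold $\Delta-a-1$ of Lemma \ref{lem:0round} versus strictly inside it. The two cases of the theorem's formula correspond exactly to this threshold comparison.

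The step I expect to be the main obstacle is getting the off-by-one accounting exactly right, since the theorem claims a \emph{tight} additive constant. Concretely, I must be careful that (i) the two initial steps from Lemmas \ref{lem:varinitone} and \ref{lem:varinittwo} are counted correctly and not double-counted with the first application of Lemma \ref{lem:vartech}, and (ii) the last problem to which Lemma \ref{lem:0round} applies is identified precisely, so that the number of applicable round-elimination steps is neither over- nor under-stated. Once the chain length is pinned down as a function of $\lceil(\Delta-x)/y\rceil$ with the correct handling of the saturation boundary, specializing to $x=0,y=1$ gives $\lceil\Delta/1\rceil=\Delta>\Delta-0=\lceil(\Delta-0)/1\rceil$ failing and the equality case holding, yielding the stated $2\Delta-1$ bound for \bmm{}.
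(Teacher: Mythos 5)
Your plan is correct and follows essentially the same route as the paper: chain Lemmas \ref{lem:varinitone} and \ref{lem:varinittwo} to reach $\Psi_\Delta^W(y,x,0)$, iterate Lemma \ref{lem:vartech} with the parameter update $(b,c)\mapsto(\min\{c+a,\Delta-a\},b)$ at fixed $a=y$, and stop at the last problem still covered by Lemma \ref{lem:0round}. The ``arithmetic bookkeeping'' you defer is exactly what the paper's proof carries out, and your diagnosis of where the case distinction comes from (whether the parameters after $2\lceil(\Delta-x)/y\rceil-1$ steps still satisfy $b,c\le\Delta-y-1$, which holds precisely when $\lceil(\Delta-x)/y\rceil\le\lceil\Delta/y\rceil-1$) matches the paper's argument.
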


\begin{proof}
	By Lemmas \ref{lem:vartech}, \ref{lem:varinitone}, and \ref{lem:varinittwo}, we obtain the following sequence of problems in the round elimination framework if we start with bipartite $x$-maximal $y$-matching:
	\[
			\Phi_\Delta^W(x,y) \rightarrow \Phi'(x,y) \rightarrow \Psi_\Delta^W(y,x,0) \rightarrow \Psi_\Delta^B(y,y,x) \rightarrow \Psi_\Delta^W(y,x+y,y) \rightarrow \Psi_\Delta^B(y,2y,x+y) \rightarrow \dots
	\]
	Each problem in the sequence is obtained from the previous problem by applying the function $\REB()$ or the function $\REW()$, and additionally relaxing the obtained problem.
	By Theorem \ref{thm:speeduppaper}, it follows that each problem has a black, resp.\ white complexity that is at least one round less than the white, resp.\ black, complexity of the previous problem, depending on which of the two functions was applied in the respective step.

	With an analogous argumentation to the one in the proof of Lemma \ref{lem:0round}, we can see that $\Phi_\Delta^W(x,y)$ has white round complexity $0$ if and only if $x = \Delta$ or $y = \Delta$, and also $\Phi'(x,y)$ has black round complexity $0$ if and only if $x = \Delta$ or $y = \Delta$.
	As the lemma assumes $y < \Delta$ and only promises a $0$-round lower bound for the case where $x = \Delta$, the lemma statement holds for these cases.
	Hence, assume that $x < \Delta$ (and $y < \Delta$).

	Consider first the case that $\lceil \Delta/y\rceil > \lceil ( \Delta - x)/y\rceil$, i.e., $\lceil ( \Delta - x)/y\rceil \leq \lceil \Delta/y\rceil - 1$.
	Due to the growth behavior of the parameters in the above sequence promised by Lemma \ref{lem:vartech}, the problem $\Psi_\Delta^B(y,b,c)$ obtained after $2 \lceil (\Delta - x)/y\rceil - 1$ steps, starting from $\Phi_\Delta^W(x,y)$, satisfies 
	\begin{align*}
		b &= \left(\lceil \left(\Delta - x\right)/y\rceil - 1\right) \cdot y \leq \left(\left(\lceil \Delta/y\rceil - 1\right) - 1\right) \cdot y \leq \Delta - 1 - y \enspace, \qquad \qquad \text{and}\\
		c &= x + \left(\lceil \left(\Delta - x\right)/y\rceil - 2\right) \cdot y \leq x + \left( \Delta - x - 1 - y \right) = \Delta - 1 - y \enspace.
	\end{align*}
	By Lemma \ref{lem:0round}, we obtain that, for these parameters $b$ and $c$, problem $\Psi_\Delta^B(y,b,c)$ cannot be solved in $0$ rounds by a black algorithm, which implies that the complexity of bipartite $x$-maximal $y$-matching must be at least $2 \lceil (\Delta - x)/y\rceil$.

	Now, consider the case that $\lceil \Delta/y\rceil = \lceil ( \Delta - x)/y\rceil$.
	Similarly to the previous case, we obtain that the problem $\Psi_\Delta^W(y,b,c)$ in the above sequence obtained after $2 \lceil (\Delta - x)/y\rceil - 2$ steps satisfies
	\begin{align*}
		b &= x + \left(\lceil \left(\Delta - x\right)/y\rceil - 2\right) \cdot y \leq x + \left( \Delta - x - 1 - y \right) = \Delta - 1 - y \enspace, \qquad \qquad \text{and}\\
		c &= \left(\lceil \left(\Delta - x\right)/y\rceil - 2\right) \cdot y \leq \left(\lceil \Delta/y\rceil - 2 \right) \cdot y \leq \Delta - 1 - y \enspace.
	\end{align*}
	Again, by Lemma \ref{lem:0round}, we obtain that, for these parameters $b$ and $c$, problem $\Psi_\Delta^W(y,b,c)$ cannot be solved in $0$ rounds by a white algorithm, which implies that the complexity of bipartite $x$-maximal $y$-matching must be at least $2 \lceil (\Delta - x)/y\rceil - 1$.
\end{proof}

Note that for the case $y = \Delta$, there is a trivial $0$-round upper bound, which is why we do not cover this case in Theorem \ref{thm:pnlower}.

\section{Upper bounds}\label{sec:upper}
We will now prove upper bounds for the family of bipartite $x$-maximal $y$-matching problems. The algorithm we provide is very similar to the \emph{proposal} algorithm of \citet{Hanckowiak1998}, that works as follows: for $i=1,\ldots,\Delta$, at round $2i-2$, white nodes propose to be matched to their $i$-th neighbor (if it exists), and black unmatched nodes accept a proposal and reject all the others. This algorithm requires $2\Delta$ rounds, but it can be made $1$ round faster by making black nodes propose and white nodes output.
For the special case of \bmm{}, i.e., bipartite $0$-maximal $1$-matching, the algorithm that we propose achieves the same running time of the proposal algorithm.
The main difference is that, in the general case, our algorithm achieves a better \emph{intermediate} state than the proposal algorithm, i.e., the partial solution that it maintains at each step is better than the one maintained by the original proposal algorithm, and this allows us to get \emph{exact} bounds for the whole family of problems. In the following, we will consider only the case where $1 \le y < \Delta$, since for $y = \Delta$ there exists a trivial $0$-rounds algorithm that simply puts each edge into the matching.

The high level idea of the algorithm is the following. White nodes start sending matching proposals to $y$ arbitrary black neighbors. Black nodes that receive proposals accept one and reject the others. Black nodes that did \emph{not} receive any proposal, send proposals to $y$ white neighbors that have not previously sent a proposal to the black node (but can be chosen arbitrarily apart from that). White nodes that receive proposals and are not matched, accept one and reject the others. By repeating this procedure for a sufficiently large number of rounds, we can ensure that if a black or white node is still unmatched, then a large number of neighbors are already matched. This is the main difference with the standard proposal algorithm, where only one side proposes: if we stop the execution of the algorithm before its natural termination and we look at the partial solution that it maintains, we see that the partial solution of the standard proposal algorithm maintains good guarantees for one side only.

We will now formally describe the procedure. Each node keeps track of the state of each incident edge, by maintaining four sets:
\begin{itemize}
	\item The set $F$ contains all free edges, i.e., the edges over which no request has ever been sent or received.
	\item The set $M$ contains all edges that are already part of the matching.
	\item The set $S$ contains all edges where the node \emph{sent} a proposal.
	\item The set $R$ contains all edges where the node \emph{received} a proposal.
\end{itemize}
Note that the design of the algorithm will ensure that each edge will be used for at most one proposal in total, which implies that the set $S$ and the set $R$ are disjoint.
In the beginning, each node initializes the set $F$ with all incident edges, and all the other sets as empty sets. Then, nodes apply the following $1$-round procedure repeatedly. Each round the role of active and passive is reversed between black and white nodes. Initially, \emph{white} nodes are active.
\begin{enumerate}
	\item Active nodes with $M = \emptyset$ do the following:
	\begin{enumerate}
		\item If $R$ contains at least one edge $e$, send an acceptance over the edge $e$, and put $e$ into the set $M$.
		\item Otherwise, remove $y$ edges from the set $F$ (or \emph{all} edges if $|F| < y$), add them to the set $S$, and send proposals over these edges.
	\end{enumerate}
	\item Passive nodes with $M = \emptyset$ do the following:
	\begin{enumerate}
		\item If acceptances are received over a set of edges $E$, add elements of $E$ to $M$ 
		\item If requests are received over a set of edges $E$, add elements of $E$ to $R$.
	\end{enumerate}
\end{enumerate}
If at some point $M \neq \emptyset$ for some node $v$, then $v$ terminates.

In the following, we prove that, by applying this procedure $2k$ times, we obtain a solution (known by white nodes) for the bipartite $x$-maximal $y$-matching problem, where $x = \max \{ 0,\Delta - k y \}$. 

Regarding the maximum number of incident edges in the matching, note that a node sends at most $y$ proposals in parallel, and if at least one is accepted, then the node stops participating in the following phases. Thus, a node is matched with at most $y$ neighbors.

Next, we show that, for each node $v$, if $M$ is empty after $2k$ steps, then $\min\{ \deg(v), ky \}$ neighbors of $v$ are matched (i.e., have non-empty $M$). Notice that a node that is still unmatched after $2k$ steps has been active for $k$ steps, and in those steps it added a total of $\min\{ \deg(v), ky \}$ different edges to the set $S$, and hence sent $\min \{ \deg(v), ky \}$ proposals.  If all the proposals are rejected, it means that the receivers of those proposals are matched with different nodes, and thus if a node $v$ is unmatched, then at least $\min\{ \deg(v), ky \}$ neighbors are matched. Notice that black nodes may not know if they are matched or not: the proposals that they sent in round $2k$ are received by white nodes that output the solution without informing black nodes about their decisions.

Hence, if we want to solve the bipartite $x$-maximal $y$-matching problem, we can set $k=\lceil \frac{\Delta-x}{y}\rceil$ and obtain a running time of $2 \lceil \frac{\Delta-x}{y}\rceil$. In order to satisfy the formal specifications of (the encoded version of) the bipartite $x$-maximal $y$-matching problem, white matched nodes output $\M$ on matched edges and $\O$ on all the others, while white unmatched nodes output $\P$ on all the edges where they sent a proposal and $\X$ on all the others. This clearly bounds the number of $\X$ given as output by white nodes correctly. Concerning black nodes, since an $\X$ cannot be given as output by a white node on an edge where a black node sent a proposal, the number of $\X$ incident to black nodes is also correctly bounded.

Finally, note that if $ \lceil \Delta/y\rceil = \lceil ( \Delta - x)/y\rceil$ then we can save one round. In fact, in this case we can solve the harder case where $x=0$ and thus obtain strict maximality. For this purpose, we can use the previously described algorithm (setting $x=0$) and modify it as follows: \emph{black} nodes start and \emph{white} nodes output. Notice in fact that if we run the original algorithm for just $2k-1$ rounds, instead of obtaining the maximality guarantee for both sides, we obtain the maximality guarantee for one side only. But for $x=0$, this is enough, since if one side is matched or proposed to \emph{all} neighbors, then there is no reason for the other side to send additional proposals.

Thus, the bipartite $x$-maximal $y$-matching problem can be solved by a white algorithm in
\begin{align*}
2 \lceil (\Delta - x)/y\rceil \enspace, \qquad &\text{ if }  \lceil \Delta/y\rceil > \lceil ( \Delta - x)/y\rceil \enspace, \text{ and } \\
2 \lceil (\Delta - x)/y\rceil - 1 \enspace, \qquad &\text{ if } \lceil \Delta/y\rceil = \lceil ( \Delta - x)/y\rceil \enspace.
\end{align*}

\section{Lifting the lower bounds to the LOCAL model}\label{sec:lifting}
In the previous sections we proved lower and upper bounds for the port numbering model. While upper bounds for the port numbering model are trivially also upper bounds for the LOCAL model, the same is not true for lower bounds. Hence, in this section we show how to lift the obtained lower bounds to the LOCAL model. The proof ideas that we use are heavily inspired by the proofs in \cite{Balliu2019}.

In Sections \ref{sec:lower} and \ref{sec:upper} we proved that, on graphs with maximum degree $\Delta$, the bipartite $x$-maximal $y$-matching problem in the port numbering model requires exactly
\[
T_\Delta(x,y) =
\begin{cases}
2 \lceil (\Delta - x)/y\rceil, ~ &\text{ if }  \lceil \Delta/y\rceil > \lceil ( \Delta - x)/y\rceil\\
2 \lceil (\Delta - x)/y\rceil - 1, ~ &\text{ if } \lceil \Delta/y\rceil = \lceil ( \Delta - x)/y\rceil
\end{cases}
\]
rounds. In order to show this result, we applied the round elimination technique for exactly $T_\Delta(x,y) - 1$ times, and at each time we performed simplifications that reduced the number of labels to $5$. We thus obtained a sequence of problems $\Pi^{\Delta,x,y}_0, \ldots, \Pi^{\Delta,x,y}_{T_\Delta(x,y)-1}$, where the first problem $\Pi^{\Delta,x,y}_0$ is the initial problem $\Phi_\Delta^W(x,y)$, the second is $\Phi'(x,y)$, the third is $\Psi_\Delta^W(y,x,0)$, and the last is a problem that is still not solvable in $0$ rounds (in the following we will omit the superscript and just write $\Pi_0, \ldots, \Pi_{T-1}$). We will not consider the exact problem sequence of Section \ref{sec:lower}, where for problems with even index a white algorithm is considered and for problems with odd index a black algorithm is considered. Instead, for simplicity, we will only consider white algorithms, and we will consider a new sequence of problems that can be obtained by starting from the original sequence and swap black and white constraints on problems with odd indexes. Notice that this is equivalent, by just reversing the roles of black and white nodes on problems with odd indexes. Also, let $S(\Pi)$ be the problem obtained by swapping the black and the white constraints of $\Pi$.

In order to get a lower bound for the LOCAL model, we first show how to obtain a \emph{randomized} lower bound for the port numbering model, and then we argue about what this bound implies for the LOCAL model.

In order to obtain a randomized lower bound for the port numbering model, the only thing that we need to do is to replace the application of the round elimination theorem with a randomized version of it: additionally to the problem sequence (that is the same as in the deterministic case), we also have \emph{local failure probabilities}\footnote{While a \emph{global} failure probability of at most $1/n$ implies that with probability at least $1-1/n$ \emph{no} node of the graph fails, a \emph{local} failure probability of $1/n$ just requires that \emph{each} node fails with probability at most $1/n$.} assigned to the problems in the sequence. In particular, we will consider the sequence of problems $\Pi_0, \ldots, \Pi_{T-1}$ and prove that if we assume that $\Pi_0$ can be solved in $T-1$ rounds with some small local failure probability, then we obtain that $\Pi_{T-1}$ can be solved in $0$ rounds with some larger but still small enough local failure probability. We will then prove that this last problem cannot be solved with small local failure probability, hence obtaining a contradiction with the initial assumption.

We do not need to prove a randomized version of the round elimination theorem: in \cite{binary} it has been shown that for each subsequent problem in the sequence, we can give a bound on the local failure probability as a function of $\Delta$, the number of labels used to describe the previous problem, and the bound on the local failure probability for the previous problem, i.e., we do not have to take into account the black and white constraints, and we can hence use this result in a blackbox manner. In particular, \cite{binary} showed the following result (rephrased for our purposes):
\begin{lemma}[Lemma 41 of \cite{binary}]
	Let $A$ be a white (resp.\ black) randomized $t$-round algorithm for $\Pi$ with local failure probability at most $p$ in the port numbering model. Then there exists a black (resp. white) randomized $(t-1)$-round algorithm $A'$ for  $~\REB(\Pi)$ (resp. $\REW(\Pi)$), with local failure probability at most $2^{1/(\Delta+1)}(\Delta |\Sigma_\Pi|)^{\Delta / (\Delta + 1)} p^{1/(\Delta+1)} + p$.
\end{lemma}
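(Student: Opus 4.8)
Since this is the randomized analogue of the deterministic speedup in Theorem~\ref{thm:speeduppaper}, the plan is to mimic the set-based construction underlying that theorem while tracking how the failure probability propagates. Assume $A$ is a white randomized $t$-round algorithm for $\Pi$ with local failure probability at most $p$; I will build a black $(t-1)$-round algorithm $A'$ for $\REB(\Pi)$. Fix a threshold $\tau \in (0,1)$ to be optimized at the end. A black node $v$ inspects its radius-$(t-1)$ view; for each incident edge $e = \{u,v\}$ (with $u$ white), this view determines the conditional distribution of the label that $A$ would place on $e$, where the only unseen randomness is that lying in the part of $u$'s radius-$t$ view that sticks out beyond $v$'s radius-$(t-1)$ view. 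Let $q_\ell$ denote the conditional probability that $A$ outputs $\ell$ on $e$, and define $A'$ to output on $e$ the set $F_e = \{\ell \in \Sigma_\Pi : q_\ell \ge \tau\}$ (extended to a maximal configuration of $\REB(\Pi)$ if needed, which can only help). Note that $v$ can compute $F_e$ from its view alone, since on a graph of girth at least $4\Delta+2$ the relevant local structure is fixed and only the external randomness is unknown.

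Next I would bound the two ways $A'$ can fail. For the black constraint, $A'$ fails at $v$ only if some combination $f_1 \in F_{e_1}, \dots, f_\Delta \in F_{e_\Delta}$ is not a configuration of $B_\Pi$. Fix such a bad combination. By the high-girth assumption, the external randomness governing the $\Delta$ incident edges occupies disjoint regions, so conditioned on $v$'s view the outputs on these edges are independent; hence, whenever every $f_i$ satisfies $q_{f_i} \ge \tau$, the original algorithm $A$ actually produces exactly $(f_1,\dots,f_\Delta)$ around $v$ with probability at least $\tau^\Delta$, which would be a failure of $A$ at $v$. Since $A$ fails at $v$ with probability at most $p$, the probability that this bad combination is realizable is at most $p/\tau^\Delta$, and a union bound over the at most $|\Sigma_\Pi|^\Delta$ bad combinations bounds the black-side failure by roughly $(|\Sigma_\Pi|/\tau)^\Delta p$. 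For the white constraint, $A'$ fails at a white node $u$ only if no choice $f_i \in F_{e_i}$ forms a configuration of $W_\Pi$; but the genuine output $(\ell_1,\dots,\ell_\Delta)$ of $A$ lies in $W_\Pi$ except with probability $p$, and on each edge the realized label fails to reach the threshold (so $\ell_i \notin F_{e_i}$) only with probability at most $|\Sigma_\Pi|\,\tau$, since the labels below threshold carry total conditional mass at most $|\Sigma_\Pi|\,\tau$. A union bound over the $\Delta$ edges then bounds the white-side failure by roughly $p + \Delta |\Sigma_\Pi|\,\tau$.

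Finally I would choose $\tau$ to balance the penalty terms, one decreasing in $\tau$ (the black side) and one increasing in $\tau$ (the white side), which forces $\tau$ proportional to a power of $p$ of the form $p^{1/(\Delta+1)}$ and, after collecting the constants, yields a common bound of the claimed shape $2^{1/(\Delta+1)}(\Delta|\Sigma_\Pi|)^{\Delta/(\Delta+1)} p^{1/(\Delta+1)}$, with the additive $p$ coming from the carried-over failure of $A$ on the white side. The dual statement for $\REW(\Pi)$ follows by exchanging the roles of black and white. The main obstacle, and the step demanding the most care, is the conditional-independence argument: one must verify rigorously that, given $v$'s view, the outputs on the $\Delta$ incident edges depend on disjoint and therefore independent blocks of randomness, which is precisely what the girth bound $4\Delta+2$ buys; once this is in place, matching the precise form of the claimed bound reduces to the elementary threshold optimization sketched above.
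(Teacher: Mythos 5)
Note first that the paper does not prove this statement at all: it is imported verbatim as Lemma~41 of \cite{binary} and used as a black box, so there is no in-paper proof to compare against. Your reconstruction is, in essence, the standard threshold argument from that reference: define $F_e$ as the set of labels whose conditional probability given the black node's $(t-1)$-view is at least $\tau$, bound the universal (black) side via the independence of the unseen randomness blocks, bound the existential (white) side by the mass of sub-threshold labels plus the carried-over failure probability $p$, and optimize $\tau$. The construction, the use of the tree-like neighbourhood for conditional independence, and the source of each term in the final bound are all correctly identified.

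The one place where your argument as written does not deliver the stated bound is the black side. You union-bound over the at most $|\Sigma_\Pi|^\Delta$ bad combinations, obtaining roughly $(|\Sigma_\Pi|/\tau)^\Delta p$; after balancing against $\Delta|\Sigma_\Pi|\tau$ this yields a factor of order $(\Delta|\Sigma_\Pi|^2)^{\Delta/(\Delta+1)}$ rather than the claimed $(\Delta|\Sigma_\Pi|)^{\Delta/(\Delta+1)}$. The union bound is unnecessary: the failure event on the black side is that \emph{some} bad combination $f_1\in F_{e_1},\dots,f_\Delta\in F_{e_\Delta}$ exists, and for each view in that event you may fix a single witnessing bad combination, which (by the independence you establish) is realized by $A$ with conditional probability at least $\tau^\Delta$; summing over views gives $\Pr[\text{black failure}]\cdot\tau^\Delta\le p$ directly, i.e.\ a bound of $p/\tau^\Delta$ with no $|\Sigma_\Pi|^\Delta$ factor. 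With that correction, choosing $\tau=(2p)^{1/(\Delta+1)}(\Delta|\Sigma_\Pi|)^{-1/(\Delta+1)}$ recovers the precise constant $2^{1/(\Delta+1)}(\Delta|\Sigma_\Pi|)^{\Delta/(\Delta+1)}p^{1/(\Delta+1)}+p$. For the purposes of this paper the slack would be harmless anyway, since $|\Sigma_{\Pi}|\le 5$ throughout and the bound is immediately absorbed into $K\Delta p^{1/(\Delta+1)}$ in Corollary~\ref{cor:singlestep}.
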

We can use this lemma to incorporate the given local failure probability analysis into the sequence of problems $\Pi_i$ that we obtained. Since $|\Sigma_{\Pi_i}| \le 5$ for all problems in the family, we can rewrite the upper bound on the local failure probability as
\[
2^{1/(\Delta+1)}(\Delta |\Sigma_\Pi|)^{\Delta / (\Delta + 1)} p^{1/(\Delta+1)} + p \le
K \Delta p^{1/(\Delta+1)} 
\]
for some constant $K$. Also, since we can reverse the roles of black and white nodes, we obtain the following corollary:
\begin{corollary}\label{cor:singlestep}
		Let $A$ be a white randomized $t$-round algorithm for $\Pi$ with local failure probability at most $p$ in the port numbering model. Then there exists a white randomized $(t-1)$-round algorithm $A'$ for  $~S(\REB(\Pi))$ with local failure probability at most $K \Delta p^{1/(\Delta+1)}$, for some constant $K$.
\end{corollary}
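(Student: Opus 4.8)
The plan is to chain together the quoted randomized round elimination lemma (Lemma 41 of \cite{binary}) with two elementary observations: the already-noted bound on the label set size, and the fact that reversing the roles of black and white nodes costs nothing. First I would apply the quoted lemma verbatim to the white randomized $t$-round algorithm $A$ for $\Pi$ with local failure probability at most $p$. This immediately yields a \emph{black} randomized $(t-1)$-round algorithm $A'$ for $\REB(\Pi)$ whose local failure probability is at most $2^{1/(\Delta+1)}(\Delta |\Sigma_\Pi|)^{\Delta / (\Delta + 1)} p^{1/(\Delta+1)} + p$. Since $\Pi$ belongs to our family we have $|\Sigma_\Pi| \le 5$, and I would absorb the expression into the claimed form: using $\Delta \ge 1$ one gets $2^{1/(\Delta+1)} \le 2$ and $(5\Delta)^{\Delta/(\Delta+1)} \le 5\Delta$, so the first summand is at most $10\,\Delta\,p^{1/(\Delta+1)}$, while $p \le 1$ gives $p^{\Delta/(\Delta+1)} \le 1 \le \Delta$, hence $p \le \Delta\,p^{1/(\Delta+1)}$. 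Summing, the local failure probability of $A'$ is at most $K\Delta\,p^{1/(\Delta+1)}$ for a suitable constant $K$ (e.g.\ $K = 11$).

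It remains to turn the black algorithm $A'$ for $\REB(\Pi)$ into a white algorithm for $S(\REB(\Pi))$. By definition, $S(\REB(\Pi))$ is obtained from $\REB(\Pi)$ by swapping its white and black constraints, which is the same as renaming every black node of the input as white and every white node as black. Under this renaming, the algorithm $A'$ in which the (old) black nodes decide the output becomes an algorithm in which the (new) white nodes decide the output, i.e.\ a white algorithm; moreover the constraint it enforces at the new white nodes is exactly the old black constraint of $\REB(\Pi)$, which is precisely the white constraint of $S(\REB(\Pi))$, and symmetrically for the black side. Since the renaming alters neither the number of communication rounds nor the computation carried out at any individual node, both the round complexity $t-1$ and the per-node failure probabilities are preserved, so we obtain the desired white $(t-1)$-round algorithm for $S(\REB(\Pi))$ with local failure probability at most $K\Delta\,p^{1/(\Delta+1)}$.

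The main obstacle here is conceptual rather than computational: one must verify that the swap $S$ is genuinely a cost-free relabeling in the bipartite round elimination model, i.e.\ that the notions of ``white algorithm'' and ``black algorithm'' are defined purely by which side of the bipartition selects the output, so that exchanging the names of the two sides converts one into the other without affecting round complexity or local failure probability. Once this is granted, the corollary is a direct invocation of the quoted lemma followed by the routine constant-absorption estimate above; the only other point worth stating explicitly is that the label bound $|\Sigma_\Pi| \le 5$ is exactly what makes the constant $K$ independent of the particular problem in the family.
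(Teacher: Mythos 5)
Your proposal is correct and follows exactly the same route as the paper: invoke the quoted Lemma 41 of \cite{binary}, absorb the failure-probability expression into $K\Delta\,p^{1/(\Delta+1)}$ using $|\Sigma_\Pi|\le 5$, and observe that swapping the roles of black and white nodes turns the resulting black algorithm for $\REB(\Pi)$ into a white algorithm for $S(\REB(\Pi))$ at no cost. Your write-up is in fact more explicit than the paper's, which states the constant-absorption inequality and the black/white reversal without spelling out the elementary estimates.
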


We will now prove the following lemma, that bounds the local failure probability after multiple round elimination steps.
\begin{lemma}\label{lem:multiplesteps}
			Let $A$ be a white $t$-round randomized algorithm for $\Pi_0$ with local failure probability at most $p$ in the port numbering model. Then there exists a white randomized $(t-j)$-round algorithm $A'$ for  $\Pi_j$ with local failure probability at most $(K\Delta)^2 p^{1/ (\Delta+1)^j}$ for some constant $K$, for all $j \le t$.
\end{lemma}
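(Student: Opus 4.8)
The plan is to prove the statement by induction on $j$, with Corollary \ref{cor:singlestep} handling each single round elimination step together with the swap $S(\cdot)$. Throughout I would make the harmless normalization that $K\Delta \geq 1$ (enlarging the constant $K$ if necessary), so that powers of $K\Delta$ behave monotonically in the direction I need.

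First I would dispose of the base case $j=0$: the hypothesized algorithm $A$ is itself a white $t$-round algorithm for $\Pi_0$ with local failure probability at most $p$, and since $K\Delta \geq 1$ we have $p \leq (K\Delta)^2 p = (K\Delta)^2 p^{1/(\Delta+1)^0}$, so the claimed bound holds. For the inductive step I would assume a white randomized $(t-j)$-round algorithm for $\Pi_j$ with local failure probability at most $p_j := (K\Delta)^2 p^{1/(\Delta+1)^j}$, recall that $\Pi_{j+1} = S(\REB(\Pi_j))$, and apply Corollary \ref{cor:singlestep} to obtain a white randomized $(t-j-1)$-round algorithm for $\Pi_{j+1}$ with local failure probability at most $K\Delta\, p_j^{1/(\Delta+1)}$. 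Plugging in $p_j$ and combining the exponents of $p$ via $\frac{1}{\Delta+1}\cdot\frac{1}{(\Delta+1)^j} = \frac{1}{(\Delta+1)^{j+1}}$, this quantity equals $(K\Delta)^{1+2/(\Delta+1)}\, p^{1/(\Delta+1)^{j+1}}$, so it remains only to bound the prefactor by $(K\Delta)^2$.

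The one point that actually needs checking---and the conceptual heart of why the constant in the statement is the fixed quantity $(K\Delta)^2$ rather than something that blows up with $j$---is that $(K\Delta)^2$ is an \emph{absorbing} prefactor for the recursion $q \mapsto K\Delta\, q^{1/(\Delta+1)}$. Concretely, since $K\Delta \geq 1$, the inequality $(K\Delta)^{1+2/(\Delta+1)} \leq (K\Delta)^2$ reduces to $1 + \frac{2}{\Delta+1} \leq 2$, i.e.\ $\Delta \geq 1$, which holds throughout our setting. Hence each step reproduces the very same prefactor $(K\Delta)^2$ while only the exponent tower on $p$ grows, and the induction closes.

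I do not anticipate any genuine obstacle here: the recursion is a one-line substitution and the whole content is the absorbing-prefactor inequality above. The only care needed is bookkeeping---making the normalization $K\Delta \geq 1$ explicit so that both the base-case comparison $p \leq (K\Delta)^2 p$ and the monotonicity of $t \mapsto (K\Delta)^t$ point in the right direction, and confirming that $\Pi_{j+1}$ is indeed $S(\REB(\Pi_j))$ so that Corollary \ref{cor:singlestep} applies verbatim at each step for $j < t$.
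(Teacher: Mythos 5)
Your proposal is correct and follows essentially the same route as the paper: induction on $j$, applying Corollary \ref{cor:singlestep} at each step and absorbing the prefactor via $(K\Delta)^{1+2/(\Delta+1)} \le (K\Delta)^2$. The only difference is that you make the harmless normalization $K\Delta \ge 1$ explicit, which the paper leaves implicit.
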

\begin{proof}
We can apply Corollary \ref{cor:singlestep} multiple times to get an upper bound on the local failure probability after $j$ steps of round elimination. We can prove by induction that the failure probability $p_j$ after $j$ steps is at most 
\[
	(K\Delta)^2 p^{1/ (\Delta+1)^j} \enspace.
\]
The base case holds trivially. For the induction step we have that
\[
 p_{j+1} \le K \Delta p_j^{1/(\Delta+1)} \enspace.
\]
By applying the induction hypothesis, we get that $p_{j+1}$ is at most
\[
 K \Delta ( (K\Delta)^2 p^{1/ ( \\\Delta+1)^j} )^{1/(\Delta+1)} \le (K \Delta)^{1+2/(\Delta+1)} p^{1/ (\Delta+1)^{j+1}} \le  (K \Delta)^{2} p^{1/ (\Delta+1)^{j+1}} \enspace.
\]
\end{proof}
We now lower bound the local failure probability of a $0$ rounds algorithm solving $\Pi_{T\\-1}$.
\begin{lemma}\label{lem:basecase}
	There is no white randomized algorithm that solves $\Pi_{T-1}$ in $0$ rounds with local failure probability $p \le 1 / \Delta^{2\Delta}$ in the port numbering model.
\end{lemma}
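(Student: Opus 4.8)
The plan is to show that any $0$-round white algorithm for $\Pi_{T-1}$ must fail on some node with probability bounded below by a quantity that exceeds $1/\Delta^{2\Delta}$. This mirrors the deterministic argument in Lemma~\ref{lem:0round}, but now we must account for randomness and port numberings. The key conceptual point is that in the port numbering model, a $0$-round randomized algorithm at a white node is just a probability distribution over configurations (and over the bijections mapping ports to labels of the chosen configuration), and this distribution is identical for every white node, since all white nodes have identical local views after $0$ rounds.

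First I would recall the structure of the final problem $\Pi_{T-1}$. Since the lower bound sequence terminates at a problem of the form $\Psi^W_\Delta(y,b,c)$ or $\Psi^B_\Delta(y,b,c)$ (after $S$-swapping to always consider white algorithms), the white constraint consists of the three configuration families from the definition of $\Psi_\Delta$, and crucially the parameters satisfy the ranges of Lemma~\ref{lem:0round}, so that the ``bad'' multisets $\M^\Delta$, $\P^\Delta$, $\Z^\Delta$ are \emph{not} black configurations. I would then consider a black node $v$ all of whose incident edges have white port equal to $1$ (such a node exists in some input graph of the relevant girth). Each white neighbor of $v$ independently samples a configuration and a port-to-label assignment; the label appearing on the edge toward $v$ is the label assigned to white port $1$. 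By the argument of Lemma~\ref{lem:0round}, for each chosen configuration the label on white port $1$ is forced (up to the algorithm's internal randomness) to be one of $\M$, $\P$, or $\Z$, and whatever the distribution, with nonnegligible probability all $\Delta$ neighbors independently produce labels that form one of the forbidden multisets at $v$.

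The quantitative heart of the argument is then a pigeonhole/independence estimate. Let $q_\M, q_\P, q_\Z$ denote the probabilities that a single white node outputs, respectively, $\M$, $\P$, or $\Z$ on its port-$1$ edge (these must sum to $1$ if the only configurations place $\M$, $\P$, or $\Z$ on port $1$; more carefully one argues the algorithm must, with high probability, choose a configuration and the forced label on port $1$ lies in $\{\M,\P,\Z\}$). Since the $\Delta$ neighbors act independently and identically, the probability that all of them output the \emph{same} one of these three labels is at least $q_\M^\Delta + q_\P^\Delta + q_\Z^\Delta$, which by convexity (or the power-mean inequality) is at least $3 \cdot (1/3)^\Delta = 3^{1-\Delta}$. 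Conditioned on this event, node $v$ sees a forbidden multiset and thus fails, so the local failure probability at $v$ is at least $3^{1-\Delta}$. Since $3^{1-\Delta} > 1/\Delta^{2\Delta}$ for all sufficiently large $\Delta$, this contradicts the assumption $p \le 1/\Delta^{2\Delta}$.

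The main obstacle I anticipate is the bookkeeping needed to handle the fact that a randomized $0$-round algorithm need not deterministically place $\M$, $\P$, or $\Z$ on port $1$: a white node might choose a configuration where port $1$ receives, say, $\O$ or $\X$, and the label on port $1$ depends on the sampled bijection. I would handle this by noting that for \emph{any} fixed output distribution, summing over all three forbidden ``monochromatic'' events and using independence across the $\Delta$ neighbors still yields a lower bound of the form $c^\Delta$ for a constant $c > 1/\Delta^2$; the cleanest route is to identify, for each white node, the single most likely forced label among $\{\M,\P,\Z\}$ (which has probability at least some constant fraction of the total mass on these labels), and then lower-bound the all-neighbors-agree probability. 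The remaining care is purely to verify that the resulting base of the exponential strictly exceeds $1/\Delta^2$, so that raising to the $\Delta$-th power still beats $1/\Delta^{2\Delta}$.
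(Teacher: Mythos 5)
Your overall plan (reduce to a single black node whose neighbors independently produce a forbidden multiset, then lower-bound the probability of that event) is the right one, but the concrete construction has a genuine gap that the paper's proof is specifically designed to avoid. You fix the wiring in advance --- every edge incident to $v$ has white port $1$ --- and then argue that each neighbor places one of $\M,\P,\Z$ on port $1$ with total probability $1$. This is false for a randomized $0$-round algorithm: every white configuration of $\Psi^W_\Delta(a,b,c)$ contains at least one $\O$ or $\X$ (indeed $a\geq 1$), so the algorithm may, with probability $1$, sample a bijection that places $\O$ or $\X$ on port $1$. In that case your node $v$ sees $\O^\Delta$ or $\X^\Delta$, both of which \emph{are} valid black configurations, and the failure probability at $v$ is $0$. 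Consequently $q_\M+q_\P+q_\Z$ need not equal $1$ (it can be $0$), the power-mean step collapses, and the proposed patch of ``taking the most likely forced label among $\{\M,\P,\Z\}$ on port $1$'' does not help, since there may be no mass on those labels at port $1$ at all. The deterministic argument of Lemma~\ref{lem:0round} survives only because a fixed bijection lets one renumber ports so that the unique $\M$ (or $\P$, or $\Z$) lands on port $1$; under randomness the ``special'' port is itself random and cannot be normalized away.

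The missing idea is to choose the ports of the $\Delta$ white neighbors \emph{adaptively}, as a function of the algorithm's output distribution, and to allow a non-monochromatic forbidden multiset. The paper's proof first applies pigeonhole over the $k=O(\Delta)$ white configurations to find a configuration $\bar c$ output with probability at least $1/k$; then, conditioned on $\bar c$, for each label $\ell$ occurring in $\bar c$ there is some port $j$ on which $\ell$ appears with probability at least $1/\Delta$. Since $\Pi_{T-1}$ is not $0$-round solvable deterministically, some multiset $\bar b=\ell_1,\dots,\ell_\Delta$ with all $\ell_i$ drawn from the labels of $\bar c$ is not a black configuration; wiring the $i$-th white neighbor to $v$ through the port that carries $\ell_i$ with probability at least $1/(k\Delta)$ and using independence gives failure probability at least $(k\Delta)^{-\Delta}>\Delta^{-2\Delta}$ for large $\Delta$. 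Your argument would need to be restructured along these lines --- the all-ports-equal-$1$ graph and the $3^{1-\Delta}$ bound cannot be salvaged as stated.
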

\begin{proof}	
	To show this lemma we will prove a stronger statement, namely, that for \emph{any} problem that does not admit a white \emph{deterministic} algorithm in the port numbering model, there is no white randomized algorithm solving it with local failure probability smaller than $1/\Delta^{2\Delta}$ in the port numbering model. Let $\Pi = (W,B)$ be a problem having white constraint $W$ and black constraint $B$ using labels over the set $L$. Let $W = \{c_1,\ldots,c_k\}$. Notice that for the family of problems under consideration $k=O(\Delta)$, and $|L|=5$.
	
	Any $0$ rounds algorithm is just a probability assignment to each $c_i$, that is, any $0$ rounds algorithm must output $c_i$ with some probability $0 \le p_i \le 1$, such that $\sum p_i = 1$. By the pigeonhole principle there must exist a configuration $\bar{c} = c_i$ such that $p_i \ge 1/k$. Hence, all white nodes output the configuration $\bar{c}$ with probability at least $1/k$. Also, conditioned on the fact that a white node outputs the configuration $\bar{c}$, for each label $\ell \in L$ that appears in $\bar{c}$ there must be a port number $j$ where $\ell$ is written with probability at least $1/\Delta$. 
	
	Since $\Pi$ is not $0$ rounds solvable in the port numbering model, then there exists a configuration $\bar{b} = \{\ell_1,\ldots,\ell_\Delta\}$ that is \emph{not} in $B$, such that $\ell_j \in \bar{c}$ for all $j$. Let us now consider a black node and let us connect $\Delta$ white nodes $w_1,\ldots,w_\Delta$ to it in some specific way. Node $w_i$ will output $\ell_i$ on some port $j$ with probability at least $\frac{1}{k\Delta}$. We connect port $j$ to port $i$ of the black node. Thus, the black node will be incident to the bad configuration $\bar{b}$ with probability at least
	\[
		\frac{1}{(k\Delta)^\Delta}
	\]
	 Since in our case the white constraint contains $O(\Delta)$ configurations (actually $O(\Delta)$ for the first problem of the sequence and then $3$ for all the others), for large enough $\Delta$ the claim follows.
\end{proof}
We now lower bound the local failure probability of a randomized algorithm for $\Pi_0$ running in $t = T - 1$ rounds, in order to show that if an algorithm for $\Pi_0$ runs ``too fast'' then it must fail with large probability.

By applying Lemma \ref{lem:multiplesteps} and setting $j = t = T-1$ we get that if there is a $(T-1)$-algorithm for $\Pi_0$ fails with probability at most $p$, then there is a $0$-round algorithm for $\Pi_{T-1}$ that fails with probability at most $(K\Delta)^2 p^{1/ (\Delta+1)^{t}}$ for some constant $K$. Now, by applying Lemma \ref{lem:basecase} we obtain that
\[
(K\Delta)^2 p^{1/ (\Delta+1)^{t}} \ge 1 / \Delta^{2\Delta}
\]
For large enough $\Delta$, we get that
\[
p \ge \frac{1}{((K\Delta)^2 \Delta^{2\Delta})^{(\Delta+1)^{t}}} \ge \frac{1}{2^{\Delta^{2t+3}}} 
\]
Hence, we obtain the following result:
\begin{lemma}\label{lem:failure}
	A white randomized algorithm for bipartite $x$-maximal $y$-matching for graphs of maximum degree $\Delta$ that runs in strictly less than $T=T_\Delta(x,y)$ rounds must fail with probability at least $\frac{1}{2^{\Delta^{2T+1}}}$ in the port numbering model.
\end{lemma}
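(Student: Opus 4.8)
The plan is to prove the lemma by directly chaining the two main ingredients already established: the multi-step failure-probability amplification of Lemma~\ref{lem:multiplesteps} and the base-case impossibility of Lemma~\ref{lem:basecase}. No contradiction is needed; I would simply carry the failure probability of a hypothetical fast algorithm through the round elimination sequence and read off the resulting lower bound.

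First I would reduce to the case $t = T-1$. Suppose $A$ is a white randomized algorithm for bipartite $x$-maximal $y$-matching, i.e.\ for $\Pi_0 = \Phi_\Delta^W(x,y)$, running in strictly fewer than $T$ rounds with local failure probability $p$. Any algorithm running in $t' < T-1$ rounds is in particular a $(T-1)$-round algorithm with the same failure probability (it may simply idle for the remaining rounds), so it suffices to lower bound $p$ for $t = T-1$. Applying Lemma~\ref{lem:multiplesteps} with $j = t = T-1$ then converts $A$ into a white randomized $0$-round algorithm for $\Pi_{T-1}$ with local failure probability at most $(K\Delta)^2 p^{1/(\Delta+1)^t}$.

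Next I would invoke the base case. Lemma~\ref{lem:basecase} (whose hypothesis is met because $\Pi_{T-1}$ is the final, not-$0$-round-solvable member of the sequence, as guaranteed by Lemma~\ref{lem:0round} in the proof of Theorem~\ref{thm:pnlower}) states that no $0$-round white randomized algorithm for $\Pi_{T-1}$ can have local failure probability at most $1/\Delta^{2\Delta}$. Combining this with the previous step yields
\[
(K\Delta)^2 p^{1/(\Delta+1)^t} \;\ge\; \frac{1}{\Delta^{2\Delta}} ,
\]
and isolating $p$ gives $p \ge \bigl((K\Delta)^2 \Delta^{2\Delta}\bigr)^{-(\Delta+1)^t}$.

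Finally I would simplify the denominator to the stated form. The only remaining task is the estimate
\[
\bigl((K\Delta)^2 \Delta^{2\Delta}\bigr)^{(\Delta+1)^t} \;\le\; 2^{\Delta^{2t+3}}
\]
for large enough $\Delta$, which, since $t = T-1$ and hence $2t+3 = 2T+1$, delivers $p \ge 1/2^{\Delta^{2T+1}}$. I expect this last estimate to be the only genuinely delicate point, because $t$ itself grows with $\Delta$ (we only have $T \le 2\Delta$, so $t < 2\Delta$), and the ``large enough $\Delta$'' must therefore be uniform in $t$. I would handle this by taking logarithms: the inequality is equivalent to $(\Delta+1)^t\bigl(2\log_2(K\Delta)+2\Delta\log_2\Delta\bigr) \le \Delta^{2t+3}$, and using $(\Delta+1)^t \le \Delta^t(1+1/\Delta)^t \le e^2\,\Delta^t$ (valid because $t < 2\Delta$) together with $2\Delta\log_2\Delta + O(\log\Delta) \le \Delta^2$ bounds the left-hand side by $O(\Delta^{t+2}) \le \Delta^{2t+3}$ for all $t \ge 0$ once $\Delta$ is sufficiently large. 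Everything else is a mechanical composition of Lemmas~\ref{lem:multiplesteps} and~\ref{lem:basecase}.
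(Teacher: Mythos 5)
Your proposal is correct and follows essentially the same route as the paper: apply Lemma~\ref{lem:multiplesteps} with $j = t = T-1$, invoke Lemma~\ref{lem:basecase} on the resulting $0$-round algorithm for $\Pi_{T-1}$, and rearrange to get $p \ge \bigl((K\Delta)^2\Delta^{2\Delta}\bigr)^{-(\Delta+1)^t} \ge 2^{-\Delta^{2t+3}} = 2^{-\Delta^{2T+1}}$. You are in fact somewhat more careful than the paper, which leaves the final estimate as a "for large enough $\Delta$" claim and does not spell out the reduction from $t' < T-1$ to $t = T-1$; your verification of the exponent bound uniformly in $t < 2\Delta$ is sound.
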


We are now ready to prove lower bounds for the LOCAL model.
Our goal will be to show lower bounds for randomized algorithms in the LOCAL model, for all $\Delta \le \bar{\Delta}$, where we will try to make $\bar{\Delta}$ as large as possible, as a function of $n$.
Since in the port numbering model we can easily generate unique IDs with high probability of success, a lower bound for this model directly implies a lower bound for the LOCAL model. Thus, the only thing that we need to take care of is to show the largest possible lower bound, as a function of $\Delta$ and $n$, that does not violate the previous assumptions, that is, in order to apply the round elimination technique our lower bound graph family must contain a large enough tree-like neighborhood, and we need to choose the right value of $\Delta$ such that we can apply Lemma \ref{lem:failure} (in particular, if $\Delta$, as a function of $n$, is too large, Lemma \ref{lem:failure} would not imply large failure probability).

By definition, any randomized algorithm in the LOCAL model has a global failure probability of at most $1/n$, hence also each node must have a local failure probability of at most $1/n$. In order to get the best possible lower bound in terms of $n$, we can choose the largest possible value of $\Delta$ (as a function of $n$) such that:
\begin{itemize}
	\item There is a graph containing a tree-like $\Delta$-regular neighborhood of radius $T_\Delta(x,y) + 1$ (a necessary condition to apply the round elimination theorem).
	\item By applying Lemma \ref{lem:failure} we get a local failure probability that is larger than $1/n$ (implying that $T_\Delta(x,y) -1$ rounds are not enough to solve the problem with local failure probability at most $1/n$).
\end{itemize}
We will prove the following lemma, that states that bipartite $x$-maximal $y$-matchings that are not too relaxed are hard in the LOCAL model.
\begin{lemma}\label{lem:randomized}
	For any $k \ge 1$, and for large enough $\Delta$ and $n$, a white randomized algorithm for bipartite $x$-maximal $y$-matching that fails with probability at most $1/n$ in the LOCAL model requires at least $T_\Delta(x,y)$ rounds, unless $T_\Delta(x,y) \ge \frac{1}{3k} \log\log n / \log \log \log n$, or $T_\Delta(x,y) \le \Delta^{1/k}$.
\end{lemma}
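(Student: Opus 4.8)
The plan is to argue by contraposition in the port numbering model and then transfer the conclusion to the LOCAL model via random identifiers. Suppose, for contradiction, that a white randomized LOCAL algorithm $A$ solves bipartite $x$-maximal $y$-matching in some $t < T := T_\Delta(x,y)$ rounds with failure probability at most $1/n$, and suppose that neither exception holds, i.e., $\Delta^{1/k} < T < \frac{1}{3k}\,\log\log n/\log\log\log n$. First I would turn $A$ into a white randomized algorithm $A'$ in the port numbering model: each node independently draws a uniform identifier from $\{1,\dots,n^{c}\}$ (for a suitable constant $c\ge 3$) and then simulates $A$; since picking an identifier costs no communication, $A'$ still runs in $t<T$ rounds. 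Because $A$ is a $t$-round algorithm, the output of a node $v$ depends only on its radius-$t$ ball, which contains at most $\Delta^{t+1}$ nodes; conditioned on these identifiers being distinct, $v$'s output is distributed as under a genuine LOCAL execution, so the local failure probability of $A'$ is at most $1/n + \binom{\Delta^{t+1}}{2}/n^{c}$, which I will bound below by strictly less than $2/n$.

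The crux is the purely arithmetic claim that, under the two surviving inequalities $\Delta^{1/k} < T$ and $T < \frac{1}{3k}\log\log n/\log\log\log n$, one has $\Delta^{2T+1} < \log_2 n$. I would establish it by the chain
\begin{align*}
	(2T+1)\log_2\Delta &< (2T+1)\,k\log_2 T \le 3kT\,\log_2 T \\
	&< \frac{\log\log n}{\log\log\log n}\,\log_2 T < \log_2\log_2 n,
\end{align*}
where the first step uses $\Delta < T^{k}$ (hence $\log_2\Delta < k\log_2 T$) together with $2T+1\le 3T$, the second step uses $3kT < \log\log n/\log\log\log n$, and the last step uses that $T < \log\log n$ forces $\log_2 T < \log\log\log n$. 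Exponentiating gives $\Delta^{2T+1} < \log_2 n$, and the slack in these estimates (for large $n$) in fact yields $\Delta^{2T+1} \le \log_2 n - 1$, whence $1/2^{\Delta^{2T+1}} \ge 2/n$. Note that the constant $3k$ is exactly what makes the two telescoped factors cancel, which is why it appears in the statement; this chaining of iterated logarithms, in particular tracking the bases and confirming the cancellation, is the step I expect to be the most delicate.

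It remains to assemble the contradiction and to verify the structural precondition of round elimination. Because $\Delta^{T+1}\le\Delta^{2T+1} < \log_2 n < n$, there is an $n$-node graph containing a $\Delta$-regular tree-like neighborhood of radius $T+1$, as required to apply Lemma~\ref{lem:failure}; on this graph the identifier-collision term is at most $\binom{\Delta^{t+1}}{2}/n^{c}\le (\log_2 n)^{2}/n^{c}\le 1/n^{2}$, so the local failure probability of $A'$ is at most $1/n + 1/n^{2} < 2/n$. On the other hand, since $A'$ is a white port-numbering algorithm running in $t < T$ rounds, Lemma~\ref{lem:failure} forces its local failure probability to be at least $1/2^{\Delta^{2T+1}}\ge 2/n$, using $\Delta^{2T+1}\le\log_2 n-1$. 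These two bounds are contradictory, so no such $A$ exists, which proves the lemma.
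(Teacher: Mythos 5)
Your proposal is correct and follows essentially the same route as the paper: reduce to the port-numbering randomized lower bound of Lemma~\ref{lem:failure} via randomly generated identifiers, and then use the negations of the two exceptions to run the chain $\Delta^{2T+1} \le 2^{3T\log\Delta} \le 2^{3kT\log T} \le 2^{\log\log n} = \log n$, which makes the guaranteed failure probability at least $1/n$. Your version is in fact slightly more careful than the paper's in two places the paper glosses over --- the explicit accounting of the identifier-collision probability in the simulation, and tightening the comparison to $2/n$ versus strictly less than $2/n$ so that the contradiction is not the borderline $\ge 1/n$ against $\le 1/n$.
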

\begin{proof}
	Consider any $\Delta$ satisfying $T=T_\Delta(x,y) <  \frac{1}{3k}\frac{\log \log n}{\log \log \log n}$, that is, $\Delta > \big(\frac{1}{3k}\frac{\log \log n}{\log \log \log n}\big)^k$.
	
	Assume for a contradiction that a faster algorithm exists, that is, it runs in $t < T_\Delta(x,y)$ rounds such that $t < \frac{1}{3k} \frac{\log \log n}{\log \log \log n}$. We can construct a graph where there is a neighborhood that locally looks like a tree up to distance $T+1$ (since for large enough $n$, $\Delta^{T+1} < n$), and by Lemma \ref{lem:failure} nodes there must fail with probability at least
	\[
		\frac{1}{2^{\Delta^{2T+1}}} \ge \frac{1}{2^{2^{3T\log \Delta}}}  \ge \frac{1}{2^{2^{3k\frac{1}{3k}\frac{\log \log n}{ \log \log \log n} \log (\frac{1}{3k}\frac{\log \log n}{\log \log \log n})}}}  \ge \frac{1}{2^{2^{\log \log n}}}  \ge \frac{1}{n}
	\]
	 for large enough $n$.
\end{proof}

This result directly implies the same lower bound for deterministic algorithms in the LOCAL model. Nevertheless, we can prove deterministic lower bounds for even larger values of $n$. For this purpose, we use the same idea as in \cite[Theorem 25]{Balliu2019}, that is, we assume to have a too fast deterministic algorithm, and we use it to define a too fast randomized algorithm, that would contradict the previously shown lower bound. Unfortunately, while doing so, we lose the tight-in-$\Delta$ dependency, but notice that this happens only for $n = \Omega(\frac{\log \log n}{\log\log\log n})$, since for smaller values the previous result applies.
\begin{lemma}\label{lem:det}
		For any $k \ge 1$, and for large enough $\Delta$ and $n$, a white deterministic algorithm for bipartite $x$-maximal $y$-matching in the LOCAL model requires $\Omega(\min(T_\Delta(x,y),\frac{1}{k} \log n / \log \log n)$ rounds, unless $T_\Delta(x,y) \le \Delta^{1/k}$.
\end{lemma}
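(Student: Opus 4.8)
The plan is to argue by contradiction, following the deterministic-to-randomized reduction of \cite[Theorem~25]{Balliu2019}. Suppose that, for some $n$ with $T_\Delta(x,y) > \Delta^{1/k}$, there is a white deterministic LOCAL algorithm $A$ solving bipartite $x$-maximal $y$-matching on $n$-node graphs of maximum degree $\Delta$ in $t$ rounds, with $t$ strictly below a suitable constant fraction of $\min\{T_\Delta(x,y),\tfrac{1}{k}\log n/\log\log n\}$. If $t < T_\Delta(x,y)$ is the binding constraint and $T_\Delta(x,y)$ already lies below the randomized threshold $\tfrac{1}{3k}\log\log n/\log\log\log n$, the statement is immediate: a deterministic algorithm is in particular a randomized one of failure probability $0 \le 1/n$, so Lemma~\ref{lem:randomized} already forbids $t < T_\Delta(x,y)$. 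Hence the substantive case is when $T_\Delta(x,y)$ exceeds this randomized threshold, where the naive ``randomized implies deterministic'' bound only yields $\Omega(\log\log n/\log\log\log n)$ and we must gain one exponential.

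The key construction is to convert $A$ into a randomized port-numbering algorithm $B$ by having every node draw an independent uniform identifier from $\{1,\dots,n^c\}$ and then simulate $A$ as if these were genuine LOCAL identifiers. Since $A$ is $t$-local and correct on every $n$-node instance with distinct identifiers, the output of $B$ around a node $v$ is correct whenever the at most $\Delta^{t+2}$ identifiers in the radius-$(t+1)$ ball of $v$ are pairwise distinct and this ball embeds into some legal $n$-node instance; both hold except with collision probability at most $\binom{\Delta^{t+2}}{2}/n^c \le \Delta^{2t+4}/n^c$. Thus $B$ is a $t$-round randomized algorithm for the same problem whose \emph{local} failure probability can be driven below any prescribed target through the choice of the identifier range, essentially independently of the number of nodes on which $B$ is executed. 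Feeding this $B$ into the randomized lower-bound machinery underlying Lemma~\ref{lem:failure} and Lemma~\ref{lem:randomized}, and invoking it at the graph size at which its local-failure guarantee just matches the collision bound above, forces $t \ge c'\cdot \tfrac1k \log n/\log\log n$ for some constant $c'$, contradicting the assumed running time.

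Throughout, the exclusion $T_\Delta(x,y) \le \Delta^{1/k}$ is used exactly as in Lemma~\ref{lem:randomized}: it guarantees $\Delta \le T_\Delta(x,y)^k$, hence $\log\Delta = O(k\log T_\Delta(x,y))$, which keeps the neighborhood- and collision-overhead factor $\Delta^{\Theta(t)}$ subpolynomial in $n$ over the whole range of $t$ we consider, so that the simulation stays valid up to the target number of rounds and the two estimates on the failure probability of $B$ can be combined. The bound $|\Sigma_{\Pi_i}| \le 5$ established in Section~\ref{sec:lifting} enters in the same blackbox fashion as before. The $\min$ with $T_\Delta(x,y)$ is absorbed by the first, trivial case, so that the two cases together cover all parameter regimes.

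I expect the main obstacle to be the precise quantitative bookkeeping in the second paragraph: one must choose the identifier range, the graph size $N$ at which the randomized lower bound is invoked, and the corresponding target failure probability so that simultaneously $B$ is a legitimate randomized algorithm at scale $N$ and $\tfrac{\log\log N}{\log\log\log N}$ evaluates to $\Theta(\tfrac1k\log n/\log\log n)$, thereby realizing the one-exponential improvement over the randomized bound. It is exactly here that the tight-in-$\Delta$ dependence is lost (the constants degrade to an $\Omega(\cdot)$), and getting the failure-amplification accounting to close---rather than collapsing back to the randomized $\log\log n/\log\log\log n$ bound---is the delicate part; the remaining steps are the routine case analysis and the substitutions already performed in Section~\ref{sec:lifting}.
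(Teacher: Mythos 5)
Your first paragraph sets up the right case analysis: the content of the lemma is entirely in the regime where $T_\Delta(x,y)$ exceeds the randomized threshold $\frac{1}{3k}\log\log n/\log\log\log n$, where one must, as you say, gain one exponential. But the mechanism you propose for that gain cannot work. You turn the deterministic LOCAL algorithm $A$ into a randomized port-numbering algorithm $B$ by drawing random identifiers and bounding its local failure probability by the collision probability $\Delta^{O(t)}/n^{c}$. The randomized machinery you then feed $B$ into caps out exactly at the randomized bound: Lemma \ref{lem:failure} only asserts that a sub-$T_\Delta(x,y)$-round port-numbering algorithm must fail locally with probability at least $2^{-\Delta^{2T+1}}$, a threshold that depends neither on the graph size nor on the identifier range. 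Comparing it with your collision bound gives $\log n \lesssim \Delta^{2T+1}$, hence $T\log\Delta\gtrsim\log\log n$ and, using $\Delta\le T_\Delta(x,y)^{k}$, only $T_\Delta(x,y)=\Omega(\frac{1}{k}\log\log n/\log\log\log n)$ --- you collapse back to Lemma \ref{lem:randomized}, exactly the failure mode you flag in your last paragraph. Enlarging the identifier range or the graph on which $B$ is executed only moves constants inside a double logarithm; there is no choice of ``graph size at which the local-failure guarantee matches the collision bound'' that yields $\log n/\log\log n$.

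The missing idea is that the reduction must go in the opposite direction and must exploit the dependence of the running time on the \emph{claimed} number of nodes, which is only possible for deterministic algorithms. The paper sets $N=\log n$, deterministically computes an identifier assignment with values below $N$ that is unique in every radius-$(2\bar t+2)$ ball (by coloring the power graph $G^{2\bar t+2}$ with $O(\Delta^{2\bar t+2}(\log\log N+\log\Delta^{2\bar t+2}))<N$ colors in $O(\bar t)$ rounds), and then runs $A$ while lying that the graph has $N$ nodes; a local algorithm cannot detect the lie. The running time thereby becomes $\bar t=t(\log n,\Delta)\le\frac{1}{4k}\log\log n/\log\log\log n$ --- this substitution $n\mapsto\log n$ in the complexity function is the exponential gain --- and the combined deterministic algorithm then lands below the threshold of Lemma \ref{lem:randomized} on the original $n$-node graph family, giving the contradiction. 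Your random-identifier reduction discards precisely the determinism that makes this step available, so the proposal has a genuine gap that cannot be closed by more careful bookkeeping.
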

\begin{proof}
	Assume that there is a faster algorithm $A$ that runs in $t = t(n,\Delta) < \min(T_\Delta(x,y),\frac{1}{4k} \log n / \log \log n))$ rounds; we will prove that this leads to a contradiction. Let us fix $\Delta$ as in the proof of Lemma \ref{lem:randomized}, that is, $\Delta  > \big(\frac{1}{3k} \frac{\log \log n}{\log \log \log n}\big)^k$. We now set $N = \log n$. The goal is to execute $A$ by lying about the size of the graph, claiming it is of size $N$. Notice that this gives a running time $\bar{t} = t(N,\Delta) \le \frac{1}{4k} \log N / \log \log N \le \frac{1}{4k} \log \log n / \log \log \log n$. In order to do so, we need to compute a new small enough ID assignment, such that $A$ cannot detect the lie, that is, the ID assignment must be locally valid: in each radius $\bar{t}$ neighborhood the new ID assignment must be unique. For this purpose, we compute a $c=O(\Delta^{2\bar{t}+2}(\log \log N + \log \Delta^{2\bar{t}+2}))$-coloring of $G^{2\bar{t}+2}$, the $(2\bar{t}+2)$-th power of $G$, in $O(\bar{t})$ rounds \cite[Theorem 25]{Balliu2019}. The algorithm $A$ cannot detect the lie, since $c = O(\log^{\frac{3}{4}}n) \cdot \log^{o(1)} n < N$ for large enough $n$.
	
	Thus we can now run $A$ and get a solution for the problem in $o(T_\Delta(x,y) + \log N / \log \log N)$ rounds, and since $N = \log n$, we get a contradiction with Lemma \ref{lem:randomized}, since we used the same graph family.
\end{proof}

Both Lemma \ref{lem:randomized} and Lemma \ref{lem:det} assume $\Delta$ to be larger than some constant. Notice that tight-in-$\Delta$ bounds can be actually obtained in the LOCAL model also for smaller values of $\Delta$, as we needed large values of $\Delta$ only to be able to obtain good in $n$ dependencies. In fact, for fixed values of $\Delta$, we need to apply the randomized round elimination theorem only a constant number of times, and thus we obtain that an algorithm for $\Pi_{T-1}$ that runs in $0$ rounds must fail with local failure probability $f(n) = o(1)$; hence by taking large enough values of $n$ we still get a contradiction, since any $0$ rounds algorithm for $\Pi_{T-1}$ must fail with constant probability, if $\Delta$ is constant. In particular, for any combination of integers $x \geq 0$, $y \geq 1$, $\Delta \geq 2$, our bound of $T_\Delta(x,y)$ is truly tight.

\section{Behind the scenes}\label{sec:backstage}
In the introduction we claimed that in some sense, \emph{bounded} automatic round elimination can be automated in practice.
We will discuss now how this automatization works, using the example of (bipartite) \mm{}.

Concerning lower bounds, we can actually feed \bmm{}, for \emph{small} values of $\Delta$ (such as $3$, $4$ and $5$), to a computer program and get a lower bound of $2\Delta-1$ automatically \cite{Olivetti2019}.
This does not directly give a proof for all values of $\Delta$, but it gives good \emph{ideas} that can help us to prove a lower bound for \emph{all} values of $\Delta$.
The important output we obtain from a computer program in this respect is a sequence of simplifications for each problem $\Pi'$ obtained after a round elimination step.
For each $\Pi'$, ideally these simplifications transform $\Pi'$ into a problem $\Pi^*$ that is just a tiny bit easier to solve, but much simpler to describe.

Sometimes, the obtained simplifications are very specific to the chosen value of $\Delta$, and thus they do not generalize, i.e., they do not help in extracting general relaxation patterns for arbitrary $\Delta$.
For instance, this is what happens by feeding \emph{non-bipartite} \mm{} to the program: we get something very different for the limited small values of $\Delta$ that the program can handle, and it is quite hard to see the pattern.
However, at other times, there is some common pattern in the simplification that the program performs for different values of $\Delta$, and by understanding those we can then try to design a general proof.
This is exactly what happens with \emph{bipartite} \mm{}. 

This, in turn, allowed us to handle the entire family of problems of bipartite $x$-maximal $y$-matchings: we could \emph{not} feed this whole family of problems to a computer program and get an automatic lower bound, but after having obtained a \emph{deep} understanding of \bmm{}, we managed to \emph{manually} design the right simplifications that worked for this much larger family of problems.

Concerning upper bounds, by feeding specific variants of the bipartite $x$-maximal $y$-matching problem to a computer program (for small $\Delta$, $x$ anx $y$), we noticed that we could obtain an automatic upper bound via round elimination by using only \emph{three} labels at each step. These upper bounds were matching the \emph{exact} round complexity that we obtained as lower bounds; before that we thought that the lower bounds are not tight since by modifying the standard \emph{proposal algorithm} to handle bipartite $x$-maximal $y$ matchings we could only obtain upper bounds that were a factor of \emph{two} larger than the lower bounds. Unfortunately, while automatic lower bounds can be somehow understood by a human being, automatic upper bounds obtained via round elimination are still eluding human interpretation. Thus, at this point we knew that some better upper bound exists, but we could not understand it, much less generalize it. Still, the mere fact of \emph{knowing} that there is a better upper bound really helped us to finally find it.

\section{Open Problems}
The round elimination technique allows us to define a problem that is exactly one round easier than the given one. Unfortunately, this comes with a high cost: the description complexity of the problem that we obtain can be exponentially larger than the one from which we start. Most of the lower bounds obtained using this technique can be grouped in two types:
\begin{itemize}
	\item Those where the number of labels grows really fast, like in the case of $3$-coloring \cite{Linial1992}, and weak colorings \cite{Brandt2019,Balliu2019hardness}.
	\item Those where the number of labels is kept constant at each step, like in the case of sinkless orientation \cite{Brandt2016} and \bmm{} \cite{Balliu2019}.
\end{itemize}
Bounded automatic round elimination makes it easy to prove lower bounds of the second type, and understanding for which kind of problems it is possible to obtain lower bounds in this way is a major open question. More concretely, we think that addressing the following open question would help to make significant progress in the process of proving distributed lower bounds. 
\begin{op}
	Characterize the dependency between the maximum number of labels and the best lower bound that can be achieved by using bounded automatic round elimination.
\end{op}

While this question is admittedly (and intentionally) vague, we give a more concrete open question in the following. It seems to be significantly harder to prove lower bounds by using the round elimination technique on non-$2$-colored graphs. We informally mention that a lower bound for non-bipartite \mm{} can be actually obtained by using bounded automatic round elimination, but the lower bound looks significantly different from the one presented in this paper. For \mm{} we can use a bipartition, but for some other problems we cannot, such as the maximal independent set (MIS) problem, as the problem is trivial in such a setting. While a lower bound for MIS is directly implied by the maximal matching lower bound of \citet{Balliu2019}, a lower bound obtained directly via round elimination is not known. Understanding whether even a bounded number of labels might be enough to obtain the same lower bound would be an important step forward.
\begin{op}
	Prove or disprove that the best lower bound for MIS achievable by automatic round elimination is asymptotically larger than the best lower bound achievable by using bounded automatic round elimination.
\end{op}

While for most of the problems it is it true that the problem obtained by applying one step of round elimination is much larger, some other problems, called \emph{fixed points}, are an exception: they constitute a fascinating special case of bounded automatic round elimination in which the number of labels does not grow at all, allowing us to apply this technique as many times as we want. Most of the $\Omega(\log n)$ lower bounds for the LOCAL model have been shown by relaxing the given problem and obtaining a fixed point, with one clear exception \cite{BalliuHLOS19}. We think that understanding when and how it is possible to obtain fixed points would help a lot in better understanding the $\Omega(\log n)$ complexity spectrum.
\begin{op}
	Prove or disprove that for all problems that require $\Omega(\log n)$ rounds in the LOCAL model it is possible to prove such a lower bound by obtaining a fixed point via round elimination and/or suitable relaxations.
\end{op}

\urlstyle{same}
\bibliographystyle{ACM-Reference-Format}
\bibliography{simpler-mm-lb}


\begin{thebibliography}{25}


\ifx \showCODEN    \undefined \def \showCODEN     #1{\unskip}     \fi
\ifx \showDOI      \undefined \def \showDOI       #1{#1}\fi
\ifx \showISBNx    \undefined \def \showISBNx     #1{\unskip}     \fi
\ifx \showISBNxiii \undefined \def \showISBNxiii  #1{\unskip}     \fi
\ifx \showISSN     \undefined \def \showISSN      #1{\unskip}     \fi
\ifx \showLCCN     \undefined \def \showLCCN      #1{\unskip}     \fi
\ifx \shownote     \undefined \def \shownote      #1{#1}          \fi
\ifx \showarticletitle \undefined \def \showarticletitle #1{#1}   \fi
\ifx \showURL      \undefined \def \showURL       {\relax}        \fi
\providecommand\bibfield[2]{#2}
\providecommand\bibinfo[2]{#2}
\providecommand\natexlab[1]{#1}
\providecommand\showeprint[2][]{arXiv:#2}

\bibitem[\protect\citeauthoryear{Balliu, Brandt, Efron, Hirvonen, Maus,
  Olivetti, and Suomela}{Balliu et~al\mbox{.}}{2019a}]%
        {binary}
\bibfield{author}{\bibinfo{person}{Alkida Balliu}, \bibinfo{person}{Sebastian
  Brandt}, \bibinfo{person}{Yuval Efron}, \bibinfo{person}{Juho Hirvonen},
  \bibinfo{person}{Yannic Maus}, \bibinfo{person}{Dennis Olivetti}, {and}
  \bibinfo{person}{Jukka Suomela}.} \bibinfo{year}{2019}\natexlab{a}.
\newblock \showarticletitle{Classification of distributed binary labeling
  problems}.
\newblock \bibinfo{journal}{\emph{CoRR}}  \bibinfo{volume}{abs/1911.13294}
  (\bibinfo{year}{2019}).
\newblock
\showeprint[arxiv]{1911.13294}
\urldef\tempurl%
\url{http://arxiv.org/abs/1911.13294}
\showURL{%
\tempurl}


\bibitem[\protect\citeauthoryear{Balliu, Brandt, Hirvonen, Olivetti, Rabie, and
  Suomela}{Balliu et~al\mbox{.}}{2019b}]%
        {Balliu2019}
\bibfield{author}{\bibinfo{person}{Alkida Balliu}, \bibinfo{person}{Sebastian
  Brandt}, \bibinfo{person}{Juho Hirvonen}, \bibinfo{person}{Dennis Olivetti},
  \bibinfo{person}{Mika{\"{e}}l Rabie}, {and} \bibinfo{person}{Jukka Suomela}.}
  \bibinfo{year}{2019}\natexlab{b}.
\newblock \showarticletitle{Lower Bounds for Maximal Matchings and Maximal
  Independent Sets}. In \bibinfo{booktitle}{\emph{60th {IEEE} Annual Symposium
  on Foundations of Computer Science, {FOCS} 2019, Baltimore, Maryland, USA,
  November 9-12, 2019}}. \bibinfo{pages}{481--497}.
\newblock
\urldef\tempurl%
\url{https://doi.org/10.1109/FOCS.2019.00037}
\showDOI{\tempurl}


\bibitem[\protect\citeauthoryear{Balliu, Hirvonen, Lenzen, Olivetti, and
  Suomela}{Balliu et~al\mbox{.}}{2019c}]%
        {BalliuHLOS19}
\bibfield{author}{\bibinfo{person}{Alkida Balliu}, \bibinfo{person}{Juho
  Hirvonen}, \bibinfo{person}{Christoph Lenzen}, \bibinfo{person}{Dennis
  Olivetti}, {and} \bibinfo{person}{Jukka Suomela}.}
  \bibinfo{year}{2019}\natexlab{c}.
\newblock \showarticletitle{Locality of Not-so-Weak Coloring}. In
  \bibinfo{booktitle}{\emph{Structural Information and Communication Complexity
  - 26th International Colloquium, {SIROCCO} 2019, L'Aquila, Italy, July 1-4,
  2019, Proceedings}} \emph{(\bibinfo{series}{Lecture Notes in Computer
  Science})}, Vol.~\bibinfo{volume}{11639}. \bibinfo{publisher}{Springer},
  \bibinfo{pages}{37--51}.
\newblock


\bibitem[\protect\citeauthoryear{Balliu, Hirvonen, Olivetti, and
  Suomela}{Balliu et~al\mbox{.}}{2019d}]%
        {Balliu2019hardness}
\bibfield{author}{\bibinfo{person}{Alkida Balliu}, \bibinfo{person}{Juho
  Hirvonen}, \bibinfo{person}{Dennis Olivetti}, {and} \bibinfo{person}{Jukka
  Suomela}.} \bibinfo{year}{2019}\natexlab{d}.
\newblock \showarticletitle{Hardness of Minimal Symmetry Breaking in
  Distributed Computing}. In \bibinfo{booktitle}{\emph{Proceedings of the 2019
  {ACM} Symposium on Principles of Distributed Computing, {PODC} 2019, Toronto,
  ON, Canada, July 29 - August 2, 2019}}. \bibinfo{pages}{369--378}.
\newblock
\urldef\tempurl%
\url{https://doi.org/10.1145/3293611.3331605}
\showDOI{\tempurl}


\bibitem[\protect\citeauthoryear{Barenboim, Elkin, Pettie, and
  Schneider}{Barenboim et~al\mbox{.}}{2012}]%
        {Barenboim2012}
\bibfield{author}{\bibinfo{person}{Leonid Barenboim}, \bibinfo{person}{Michael
  Elkin}, \bibinfo{person}{Seth Pettie}, {and} \bibinfo{person}{Johannes
  Schneider}.} \bibinfo{year}{2012}\natexlab{}.
\newblock \showarticletitle{{The Locality of Distributed Symmetry Breaking}}.
  In \bibinfo{booktitle}{\emph{Proc. 53rd Annual Symposium on Foundations of
  Computer Science (FOCS 2012)}}. \bibinfo{publisher}{IEEE},
  \bibinfo{pages}{321--330}.
\newblock
\urldef\tempurl%
\url{https://doi.org/10.1109/FOCS.2012.60}
\showDOI{\tempurl}


\bibitem[\protect\citeauthoryear{Barenboim, Elkin, Pettie, and
  Schneider}{Barenboim et~al\mbox{.}}{2016}]%
        {Barenboim2016}
\bibfield{author}{\bibinfo{person}{Leonid Barenboim}, \bibinfo{person}{Michael
  Elkin}, \bibinfo{person}{Seth Pettie}, {and} \bibinfo{person}{Johannes
  Schneider}.} \bibinfo{year}{2016}\natexlab{}.
\newblock \showarticletitle{{The Locality of Distributed Symmetry Breaking}}.
\newblock \bibinfo{journal}{\emph{J. ACM}} \bibinfo{volume}{63},
  \bibinfo{number}{3} (\bibinfo{year}{2016}), \bibinfo{pages}{1--45}.
\newblock
\urldef\tempurl%
\url{https://doi.org/10.1145/2903137}
\showDOI{\tempurl}


\bibitem[\protect\citeauthoryear{Brandt}{Brandt}{2019}]%
        {Brandt2019}
\bibfield{author}{\bibinfo{person}{Sebastian Brandt}.}
  \bibinfo{year}{2019}\natexlab{}.
\newblock \showarticletitle{An Automatic Speedup Theorem for Distributed
  Problems}. In \bibinfo{booktitle}{\emph{Proceedings of the 2019 {ACM}
  Symposium on Principles of Distributed Computing, {PODC} 2019, Toronto, ON,
  Canada, July 29 - August 2, 2019}}. \bibinfo{pages}{379--388}.
\newblock
\urldef\tempurl%
\url{https://doi.org/10.1145/3293611.3331611}
\showDOI{\tempurl}


\bibitem[\protect\citeauthoryear{Brandt, Fischer, Hirvonen, Keller,
  Lempi{\"{a}}inen, Rybicki, Suomela, and Uitto}{Brandt et~al\mbox{.}}{2016}]%
        {Brandt2016}
\bibfield{author}{\bibinfo{person}{Sebastian Brandt}, \bibinfo{person}{Orr
  Fischer}, \bibinfo{person}{Juho Hirvonen}, \bibinfo{person}{Barbara Keller},
  \bibinfo{person}{Tuomo Lempi{\"{a}}inen}, \bibinfo{person}{Joel Rybicki},
  \bibinfo{person}{Jukka Suomela}, {and} \bibinfo{person}{Jara Uitto}.}
  \bibinfo{year}{2016}\natexlab{}.
\newblock \showarticletitle{{A lower bound for the distributed Lov{\'{a}}sz
  local lemma}}. In \bibinfo{booktitle}{\emph{Proc. 48th ACM Symposium on
  Theory of Computing (STOC 2016)}}. \bibinfo{publisher}{ACM Press},
  \bibinfo{pages}{479--488}.
\newblock
\urldef\tempurl%
\url{https://doi.org/10.1145/2897518.2897570}
\showDOI{\tempurl}


\bibitem[\protect\citeauthoryear{Chang, He, Li, Pettie, and Uitto}{Chang
  et~al\mbox{.}}{2018}]%
        {chang18complexity}
\bibfield{author}{\bibinfo{person}{Yi-Jun Chang}, \bibinfo{person}{Qizheng He},
  \bibinfo{person}{Wenzheng Li}, \bibinfo{person}{Seth Pettie}, {and}
  \bibinfo{person}{Jara Uitto}.} \bibinfo{year}{2018}\natexlab{}.
\newblock \showarticletitle{{The Complexity of Distributed Edge Coloring with
  Small Palettes}}. In \bibinfo{booktitle}{\emph{Proc. 29th ACM-SIAM Symposium
  on Discrete Algorithms (SODA 2018)}}. \bibinfo{publisher}{Society for
  Industrial and Applied Mathematics}, \bibinfo{pages}{2633--2652}.
\newblock
\urldef\tempurl%
\url{https://doi.org/10.1137/1.9781611975031.168}
\showDOI{\tempurl}


\bibitem[\protect\citeauthoryear{Chang, Kopelowitz, and Pettie}{Chang
  et~al\mbox{.}}{2016}]%
        {chang16exponential}
\bibfield{author}{\bibinfo{person}{Yi-Jun Chang}, \bibinfo{person}{Tsvi
  Kopelowitz}, {and} \bibinfo{person}{Seth Pettie}.}
  \bibinfo{year}{2016}\natexlab{}.
\newblock \showarticletitle{{An Exponential Separation between Randomized and
  Deterministic Complexity in the LOCAL Model}}. In
  \bibinfo{booktitle}{\emph{Proc. 57th IEEE Symposium on Foundations of
  Computer Science (FOCS 2016)}}. \bibinfo{publisher}{IEEE},
  \bibinfo{pages}{615--624}.
\newblock
\urldef\tempurl%
\url{https://doi.org/10.1109/FOCS.2016.72}
\showDOI{\tempurl}


\bibitem[\protect\citeauthoryear{Fischer}{Fischer}{2017}]%
        {fischer17improved}
\bibfield{author}{\bibinfo{person}{Manuela Fischer}.}
  \bibinfo{year}{2017}\natexlab{}.
\newblock \showarticletitle{{Improved Deterministic Distributed Matching via
  Rounding}}. In \bibinfo{booktitle}{\emph{Proc. 31st International Symposium
  on Distributed Computing (DISC 2017)}}. \bibinfo{pages}{17:1--17:15}.
\newblock
\urldef\tempurl%
\url{https://doi.org/10.4230/LIPIcs.DISC.2017.17}
\showDOI{\tempurl}


\bibitem[\protect\citeauthoryear{G{\"{o}}{\"{o}}s, Hirvonen, and
  Suomela}{G{\"{o}}{\"{o}}s et~al\mbox{.}}{2017}]%
        {Goos2017}
\bibfield{author}{\bibinfo{person}{Mika G{\"{o}}{\"{o}}s},
  \bibinfo{person}{Juho Hirvonen}, {and} \bibinfo{person}{Jukka Suomela}.}
  \bibinfo{year}{2017}\natexlab{}.
\newblock \showarticletitle{{Linear-in-$\Delta$ lower bounds in the LOCAL
  model}}.
\newblock \bibinfo{journal}{\emph{Distributed Computing}} \bibinfo{volume}{30},
  \bibinfo{number}{5} (\bibinfo{year}{2017}), \bibinfo{pages}{325--338}.
\newblock
\urldef\tempurl%
\url{https://doi.org/10.1007/s00446-015-0245-8}
\showDOI{\tempurl}


\bibitem[\protect\citeauthoryear{Hanckowiak, Karonski, and
  Panconesi}{Hanckowiak et~al\mbox{.}}{1998}]%
        {Hanckowiak1998}
\bibfield{author}{\bibinfo{person}{Michal Hanckowiak}, \bibinfo{person}{Michal
  Karonski}, {and} \bibinfo{person}{Alessandro Panconesi}.}
  \bibinfo{year}{1998}\natexlab{}.
\newblock \showarticletitle{{On the Distributed Complexity of Computing Maximal
  Matchings}}. In \bibinfo{booktitle}{\emph{Proc. 9th Annual ACM-SIAM Symposium
  on Discrete Algorithms (SODA 1998)}}. \bibinfo{publisher}{ACM/SIAM},
  \bibinfo{pages}{219--225}.
\newblock


\bibitem[\protect\citeauthoryear{Hanckowiak, Karonski, and
  Panconesi}{Hanckowiak et~al\mbox{.}}{2001}]%
        {Hanckowiak2001}
\bibfield{author}{\bibinfo{person}{Michal Hanckowiak}, \bibinfo{person}{Michal
  Karonski}, {and} \bibinfo{person}{Alessandro Panconesi}.}
  \bibinfo{year}{2001}\natexlab{}.
\newblock \showarticletitle{{On the Distributed Complexity of Computing Maximal
  Matchings}}.
\newblock \bibinfo{journal}{\emph{SIAM Journal on Discrete Mathematics}}
  \bibinfo{volume}{15}, \bibinfo{number}{1} (\bibinfo{year}{2001}),
  \bibinfo{pages}{41--57}.
\newblock
\urldef\tempurl%
\url{https://doi.org/10.1137/S0895480100373121}
\showDOI{\tempurl}


\bibitem[\protect\citeauthoryear{Hirvonen and Suomela}{Hirvonen and
  Suomela}{2012}]%
        {Hirvonen2012}
\bibfield{author}{\bibinfo{person}{Juho Hirvonen} {and} \bibinfo{person}{Jukka
  Suomela}.} \bibinfo{year}{2012}\natexlab{}.
\newblock \showarticletitle{{Distributed maximal matching: greedy is optimal}}.
  In \bibinfo{booktitle}{\emph{Proc. 31st Annual ACM SIGACT-SIGOPS Symposium on
  Principles of Distributed Computing (PODC 2012)}}. \bibinfo{publisher}{ACM
  Press}, \bibinfo{pages}{165--174}.
\newblock
\urldef\tempurl%
\url{https://doi.org/10.1145/2332432.2332464}
\showDOI{\tempurl}


\bibitem[\protect\citeauthoryear{Israeli and Itai}{Israeli and Itai}{1986}]%
        {Israeli1986}
\bibfield{author}{\bibinfo{person}{Amos Israeli} {and} \bibinfo{person}{A.
  Itai}.} \bibinfo{year}{1986}\natexlab{}.
\newblock \showarticletitle{{A fast and simple randomized parallel algorithm
  for maximal matching}}.
\newblock \bibinfo{journal}{\emph{Inform. Process. Lett.}}
  \bibinfo{volume}{22}, \bibinfo{number}{2} (\bibinfo{year}{1986}),
  \bibinfo{pages}{77--80}.
\newblock
\urldef\tempurl%
\url{https://doi.org/10.1016/0020-0190(86)90144-4}
\showDOI{\tempurl}


\bibitem[\protect\citeauthoryear{Kuhn, Moscibroda, and Wattenhofer}{Kuhn
  et~al\mbox{.}}{2004}]%
        {Kuhn2004}
\bibfield{author}{\bibinfo{person}{Fabian Kuhn}, \bibinfo{person}{Thomas
  Moscibroda}, {and} \bibinfo{person}{Roger Wattenhofer}.}
  \bibinfo{year}{2004}\natexlab{}.
\newblock \showarticletitle{{What cannot be computed locally!}}. In
  \bibinfo{booktitle}{\emph{Proc. 23rd Annual ACM Symposium on Principles of
  Distributed Computing (PODC 2004)}}. \bibinfo{publisher}{ACM Press},
  \bibinfo{address}{New York, New York, USA}, \bibinfo{pages}{300}.
\newblock
\urldef\tempurl%
\url{https://doi.org/10.1145/1011767.1011811}
\showDOI{\tempurl}


\bibitem[\protect\citeauthoryear{Kuhn, Moscibroda, and Wattenhofer}{Kuhn
  et~al\mbox{.}}{2006}]%
        {Kuhn2006}
\bibfield{author}{\bibinfo{person}{Fabian Kuhn}, \bibinfo{person}{Thomas
  Moscibroda}, {and} \bibinfo{person}{Roger Wattenhofer}.}
  \bibinfo{year}{2006}\natexlab{}.
\newblock \showarticletitle{{The price of being near-sighted}}. In
  \bibinfo{booktitle}{\emph{Proc. 17th Annual ACM-SIAM Symposium on Discrete
  Algorithms (SODA 2006)}}. \bibinfo{publisher}{ACM Press},
  \bibinfo{address}{New York, New York, USA}, \bibinfo{pages}{980--989}.
\newblock
\urldef\tempurl%
\url{https://doi.org/10.1145/1109557.1109666}
\showDOI{\tempurl}


\bibitem[\protect\citeauthoryear{Kuhn, Moscibroda, and Wattenhofer}{Kuhn
  et~al\mbox{.}}{2016}]%
        {kuhn16local}
\bibfield{author}{\bibinfo{person}{Fabian Kuhn}, \bibinfo{person}{Thomas
  Moscibroda}, {and} \bibinfo{person}{Roger Wattenhofer}.}
  \bibinfo{year}{2016}\natexlab{}.
\newblock \showarticletitle{{Local Computation: Lower and Upper Bounds}}.
\newblock \bibinfo{journal}{\emph{J. ACM}} \bibinfo{volume}{63},
  \bibinfo{number}{2} (\bibinfo{year}{2016}), \bibinfo{pages}{1--44}.
\newblock
\urldef\tempurl%
\url{https://doi.org/10.1145/2742012}
\showDOI{\tempurl}


\bibitem[\protect\citeauthoryear{Linial}{Linial}{1992}]%
        {Linial1992}
\bibfield{author}{\bibinfo{person}{Nathan Linial}.}
  \bibinfo{year}{1992}\natexlab{}.
\newblock \showarticletitle{{Locality in Distributed Graph Algorithms}}.
\newblock \bibinfo{journal}{\emph{SIAM J. Comput.}} \bibinfo{volume}{21},
  \bibinfo{number}{1} (\bibinfo{year}{1992}), \bibinfo{pages}{193--201}.
\newblock
\urldef\tempurl%
\url{https://doi.org/10.1137/0221015}
\showDOI{\tempurl}


\bibitem[\protect\citeauthoryear{Naor}{Naor}{1991}]%
        {Naor1991}
\bibfield{author}{\bibinfo{person}{Moni Naor}.}
  \bibinfo{year}{1991}\natexlab{}.
\newblock \showarticletitle{{A lower bound on probabilistic algorithms for
  distributive ring coloring}}.
\newblock \bibinfo{journal}{\emph{SIAM Journal on Discrete Mathematics}}
  \bibinfo{volume}{4}, \bibinfo{number}{3} (\bibinfo{year}{1991}),
  \bibinfo{pages}{409--412}.
\newblock
\urldef\tempurl%
\url{https://doi.org/10.1137/0404036}
\showDOI{\tempurl}


\bibitem[\protect\citeauthoryear{Olivetti}{Olivetti}{2019}]%
        {Olivetti2019}
\bibfield{author}{\bibinfo{person}{Dennis Olivetti}.}
  \bibinfo{year}{2019}\natexlab{}.
\newblock \bibinfo{title}{{Round Eliminator: a tool for automatic speedup
  simulation}}.
\newblock
\newblock
\urldef\tempurl%
\url{https://github.com/olidennis/round-eliminator}
\showURL{%
\tempurl}


\bibitem[\protect\citeauthoryear{Panconesi and Rizzi}{Panconesi and
  Rizzi}{2001}]%
        {panconesi01simple}
\bibfield{author}{\bibinfo{person}{Alessandro Panconesi} {and}
  \bibinfo{person}{Romeo Rizzi}.} \bibinfo{year}{2001}\natexlab{}.
\newblock \showarticletitle{{Some simple distributed algorithms for sparse
  networks}}.
\newblock \bibinfo{journal}{\emph{Distributed Computing}} \bibinfo{volume}{14},
  \bibinfo{number}{2} (\bibinfo{year}{2001}), \bibinfo{pages}{97--100}.
\newblock
\urldef\tempurl%
\url{https://doi.org/10.1007/PL00008932}
\showDOI{\tempurl}


\bibitem[\protect\citeauthoryear{Peleg}{Peleg}{2000}]%
        {Peleg2000}
\bibfield{author}{\bibinfo{person}{David Peleg}.}
  \bibinfo{year}{2000}\natexlab{}.
\newblock \bibinfo{booktitle}{\emph{{Distributed Computing: A
  Locality-Sensitive Approach}}}.
\newblock \bibinfo{publisher}{Society for Industrial and Applied Mathematics}.
\newblock
\urldef\tempurl%
\url{https://doi.org/10.1137/1.9780898719772}
\showDOI{\tempurl}


\bibitem[\protect\citeauthoryear{Suomela}{Suomela}{2019}]%
        {suomelabook}
\bibfield{author}{\bibinfo{person}{Jukka Suomela}.} \bibinfo{year}{2014 -
  2019}\natexlab{}.
\newblock \bibinfo{booktitle}{\emph{{Distributed Algorithms}}}.
\newblock \bibinfo{publisher}{{Online Textbook}}.
\newblock
\urldef\tempurl%
\url{https://users.ics.aalto.fi/suomela/da/}
\showURL{%
\tempurl}


\end{thebibliography}
\end{document}